\documentclass[11pt]{article}

\usepackage{geometry}
\geometry{
  left= 1in,  %
  right=1in,  %
  top=1in,  %
  bottom=1in  %
}

\usepackage[T1]{fontenc}
\usepackage{amsfonts}
\usepackage{amsmath}
\usepackage{amssymb}
\usepackage{amsthm}
\usepackage{bbm}
\usepackage{bm}
\usepackage{csquotes}
\usepackage{mathrsfs}
\usepackage{verbatim}
\usepackage{setspace}
\usepackage{color}
\usepackage{pdfsync}
\usepackage{enumitem}
\usepackage{graphicx}
\usepackage{subcaption}
\usepackage{tikz}
\usetikzlibrary{patterns}
\usepackage{placeins}
\usepackage[normalem]{ulem}

\theoremstyle{plain}
\newtheorem{theorem}{Theorem}[section]
\newtheorem{proposition}[theorem]{Proposition}
\newtheorem{lemma}[theorem]{Lemma}
\newtheorem{corollary}[theorem]{Corollary}

\theoremstyle{definition}

\newtheorem{remark}[theorem]{Remark}

\theoremstyle{remark}

\usepackage{graphicx,accents}

{%
\end{oldthebibliography}%
}
\newcommand{\eps}{\varepsilon}

\newcommand{\E}{\mathbb{E}}

\newcommand{\R}{\mathbb{R}}

\newcommand{\cA}{\mathcal{A}}

\newcommand{\cC}{\mathcal{C}}

\newcommand{\cF}{\mathcal{F}}

\newcommand{\cL}{\mathcal{L}}

\newcommand{\cQ}{\mathcal{Q}}

\newcommand{\bA}{\mathbf{A}}
\newcommand{\bB}{\mathbf{B}}
\newcommand{\bC}{\mathbf{C}}

\newcommand{\bP}{\mathbf{P}}
\newcommand{\bM}{\mathbf{M}}
\newcommand{\bN}{\mathbf{N}}

\newcommand{\1}{\mathbf{1}}

\newcommand{\teps}{\tilde{\eps}}

\newcommand{\qandq}{\quad\mbox{and}\quad}
\newcommand{\qqandqq}{\qquad\mbox{and}\qquad}

\newcommand{\mykill}[1]{}
\numberwithin{equation}{section}

\usepackage[pdfborder={0 0 0}]{hyperref}
\hypersetup{
  urlcolor = black,
  pdfauthor = {Marcel Nutz, Kevin Webster, Long Zhao},
  pdfkeywords = {Internalization; Market Making; Central Risk Book; Optimal Execution; Price Impact},
  pdftitle = {Unwinding Stochastic Order Flow: When to Warehouse Trades},
  pdfsubject = {Unwinding Stochastic Order Flow: When to Warehouse Trades},
  pdfpagemode = UseNone
}

\begin{document}

\title{Unwinding Stochastic Order Flow: When to Warehouse Trades}
\date{\today}
\author{
  Marcel Nutz%
  \thanks{
  Depts.\ of Statistics and Mathematics, Columbia University, mnutz@columbia.edu. Research supported by NSF Grants DMS-1812661, DMS-2106056, DMS-2407074.}
  \and
  Kevin Webster%
  \thanks{Dept.\ of Mathematics, Columbia University.}
  \and
  Long Zhao%
  \thanks{Dept.\ of Statistics, Columbia University, long.zhao@columbia.edu.}
  }
  
\maketitle \vspace{-1.2em}

\begin{abstract}
We study how to unwind stochastic order flow with minimal transaction costs. Stochastic order flow arises, e.g., in the central risk book (CRB), a centralized trading desk that aggregates order flows within a financial institution. The desk can warehouse in-flow orders, ideally netting them against subsequent opposite orders (internalization), or route them to the market (externalization) and incur costs related to price impact and bid-ask spread. We model and solve this problem for a general class of in-flow processes, enabling us to study in detail how in-flow characteristics affect optimal strategy and core trading metrics. Our model allows for an analytic solution in semi-closed form and is readily implementable numerically. Compared with a standard execution problem where the order size is known upfront, the unwind strategy exhibits an additive adjustment for projected future in-flows. Its sign depends on the autocorrelation of orders; only truth-telling (martingale) flow is unwound myopically. In addition to analytic results, we present extensive simulations for different use cases and regimes, and introduce new metrics of practical interest.
\end{abstract}

\vspace{.3em}

{\small
\noindent \emph{Keywords}  Internalization; Optimal Execution; Price Impact; Central Risk Book; Market Making

\noindent \emph{AMS 2010 Subject Classification}
91G10 %

\noindent \emph{JEL Classification}
G24; C6
}
\vspace{.6em}

\maketitle

\section{Introduction}

This paper aims to model and solve the unwind problem for stochastic order flow; that is, an optimal execution-type problem where instead of a fixed position known at the initial time, the trading desk receives a dynamic flow of orders to be unwound.

Our work is motivated by the emergence of central risk books (CRBs) in most investment banks and several large trading firms over the last few years. 
The CRB is a trading desk aiming to minimize transaction costs by aggregating internal order flows. In an ideal situation, all other units within the organization (and possibly the asset class) interact exclusively with the CRB, instead of trading directly in the market.
In \cite[p.\,2]{CRBefinancialcareers}, CRBs are defined as ``the places that (in theory) centralize the execution of trades for the whole bank.'' As the total volume resulting from such centralization is very large,\footnote{While public data seem to be unavailable, an equity CRB at a large institution would process a nontrivial percentage of the total trading volume in US equities, currently about \$500 billion daily \cite{CBOEvolume}.} 
transaction costs are of paramount importance.
Orders across different business lines and strategies can have idiosyncratic directions and timing, leading to a stochastic order flow (\emph{in-flow} orders from the desk's perspective). For instance, an options desk's delta hedging orders are likely independent of prime brokerage clients' block trades. The CRB nets opposite in-flows against one another, a process named \emph{internalization}. Remaining orders can either be routed to the market (\emph{out-flow} orders), a process named \emph{externalization}, or \emph{warehoused}, meaning that the CRB temporarily keeps the position on its book. Out-flow orders incur transaction costs including transient and permanent price impact. The most favorable scenario for a warehoused order is to be netted against an opposite order later on, thus avoiding transaction costs. Even disregarding that  possibility, 
warehousing can be beneficial to reduce costs; for instance, immediately externalizing a sizable block order would often lead to an unnecessarily large price impact.

Our model is also relevant to market makers. Similarly to banks, market makers face orders from clients ranging from retail traders to institutional funds, resulting in stochastic in-flow.\footnote{Some market makers also use the name CRB. For instance, CEO D.\ Cifu of global market maker Virtu emphasized internalization opportunities at a Goldman Sachs Financial Services Conference \cite{Cifu}:  ``we're the only firm in the world that has a central risk book of retail order flow, prop flow, and now institutional agency customers. [\,\dots] So there's a real opportunity to be a real internalizer between the buy side and the retail side.''} The paper~\cite{CarteaJaimungalJia.20} discusses a model where an ambiguity-averse dealer trades in a currency triplet to liquidate a large position while also receiving order flow. The distinction between in-flow and out-flow is most evident in foreign exchange (FX), where clients send orders directly to dealers, and dealers unwind their remaining inventory on a central venue (e.g., EBS), also called the lit market. 
The Bank of England's Fair and Effective Markets Review \cite[p.\,11]{bofe2014} stressed the importance of internalization in FX: ``Market participants have indicated that dealers with large enough market share now internalise up to 90\% of their client orders.''\footnote{The report further highlighted that interdealer trading as a share of overall FX market turnover declined by one third since the late 1990s, and attributed this to the growth of internalization \cite[p.\,59]{bofe2014}. For reference, the Bank for International Settlements \cite{bis2016} estimated that interdealer trading in FX accounted for 1 trillion USD \emph{daily.}} The market maker must decide which orders to warehouse and which to unwind, an aspect emphasized recently by~\cite{Cartea2022}. Also recently, \cite{BergaultDrissiGueant.22} suggested market makers leverage the same mechanisms as CRBs to unwind sizable inventory considering price impact. 
The present paper does not discuss how the market maker should set quotes, usually considered the core problem in the literature on market making (see \cite{CarteaBook} and the references therein). Our model is relevant to market makers specifically in the regime where the inventory gets too large to revert passively through limit orders. Then, market makers use dedicated algorithms to reduce the inventory aggressively. As the inventory simultaneously varies due to in-flow orders, this constitutes an unwind problem for a stochastic flow.

Our study focuses on the unwind problem for stochastic order flow. Compared with the extensive work on optimal execution, two essential differences arise. First, the final volume to be executed is unknown at the initial time. Instead, the in-flow of orders is a stochastic process and the unwind strategy must take into account projected future orders. Second, there are opportunities for orders to net, which introduces internalization benefits not directly modeled in the standard setting.
Compared with the literature on market making, our in-flow is uncontrolled. Instead of studying quotes, we focus on minimizing the transaction costs of unwinding a potentially large inventory subject to stochastic shocks. 

The study closest to ours, in terms of methodology and focus on transient price impact, is the concurrent work \cite{MuhleKarbeOomen.23} on pre-hedging. In their model, a dealer learns about a potential trade through a client's request for quote. The probability of winning the trade depends on the dealer's quote, and there is uncertainty about the timing of the potential transaction. The authors determine the optimal pre-hedge strategy: the dealer starts to build inventory ahead of the potential trade, taking into account transaction costs and the expected size of the trade. Clearly, the potential trade plays a role similar to our stochastic in-flow. On the other hand, as there is only a single trade, internalization does not occur in their model. 
At the opposite extreme, \cite{ButzOomen.19} consider internalization in a queueing model. The dealer can skew their prices to encourage inventory-reducing customer flow, but there is no option to externalize. While this model is quite different from ours, \cite{ButzOomen.19} gives insightful comments and data sources about internalization in the FX market. The more recent work \cite{BarzykinBergaultGueant.22, BarzykinBergaultGueant.23} combines internalization and externalization. Here a dealer sets quotes to attract flow and simultaneously hedges in a separate liquidity pool. The authors highlight the optimal behavior: pure internalization when inventory is small, aggressive externalization when inventory is large. The liquidity pool in this model bears instantaneous (and permanent impact) cost, but not the transient impact cost that is the main driving force in our study. A related equilibrium model is formulated in \cite{BankEkrenMuhleKarbe.21}. This model features clients (liquidity takers) and dealers who absorb their demands. Dealers trade competitively in the interdealer market and also in an open market subject to instantaneous transaction costs (but no impact cost). The study solves for the equilibrium price and highlights the endogenous price impact incurred by the clients. An early contribution with stochastic inventory shocks is \cite{CarteaJaimungal.15} which considers optimal execution with market and limit orders. Limit orders are executed at random times, leading to stochastic shocks in the inventory. Another paper close to ours in terms of application is \cite{Cartea2022}, where a foreign exchange market maker streams bespoke quotes to an informed and a noise trader, and also trades in the lit market. The market maker learns the signal of the informed trader from his trades and uses it to decide on quotes and externalization of her order flow, and on speculative trades. In particular, the model derives an unwind strategy for a specific stochastic order flow that is endogenous to the equilibrium. 

By contrast, our approach is to model the stochastic order flow exogenously in a reduced but general form. That will allow us to analyze the trading metrics of the corresponding optimal strategy from an input--output perspective and tackle some of the key trading questions. We do not model incentives the desk can provide their clients to induce orders or the information the desk can glean from the clients' orders. In particular, the desk we model does not engage in speculative trades considering client information.

This paper is organized as follows. In the remainder of the Introduction, Section~\ref{se:synopsis} summarizes a selection of the high-level results, while Section~\ref{se:background} reviews background and modeling considerations around transaction costs and order flow. Section~\ref{se:theory} details the mathematical model, then analytically derives the solution and some of its qualitative properties, with most proofs being reported in Appendices~\ref{se:proofs} and~\ref{se:maxPrinciple}. Section~\ref{se:numRes} defines a number of trading metrics and presents simulation studies on how they react to autocorrelation and volatility of the in-flow. Further numerical results are reported in Appendix~\ref{se:appendixNumerical}:  relation to classical optimal execution; sensitivities to the spread cost parameter, the martingale driving the in-flow process, and the initial impact state; and finally an empirical study on autocorrelation of orders on the public trading tape. Appendix~\ref{se:proofs} gathers the proofs based on the dynamic programming approach while Appendix~\ref{se:maxPrinciple} proceeds through the stochastic maximum principle.

\subsection{Synopsis}\label{se:synopsis}

We propose and solve a flexible model for unwinding an exogenously given order flow, rich enough to serve as a practical order-scheduling algorithm.
In particular, it allows for order flows with momentum or reversal, time-varying volatility and block trades. Moreover, it takes into account impact and spread costs with possibly time-varying liquidity parameters.

We model price impact by a generalized Obizhaeva--Wang model while instantaneous (spread) costs during limit order book (LOB) trading  are modeled in reduced form by a quadratic penalty on trading speed. The desk thus trades continuously during the day, but also places block orders in the opening and closing auctions which only carry impact cost. The desk's inventory must be flat  after the close. See also Section~\ref{se:background} for further modeling background and terminology.

It turns out that our model can be transformed into a linear-quadratic stochastic control problem, which is key to its tractability. The optimal intraday trading speed $q_{t}$ takes the form 
\[
q_{t} =  f_{t} X_{t} + g_{t} Y_{t} + h_{t} Z_{t}
\]
where the time-varying coefficients $f_{t},g_{t},h_{t}$ are explicitly determined through an ordinary differential equation (ODE), and even provided in closed form for the case of constant liquidity parameters. The coefficients describe the desk's reaction to the current inventory $X_{t}$, the impact state~$Y_{t}$, and the cumulative in-flow~$Z_{t}$. We will show analytically that $h_{t}=0$ when there is no in-flow after the initial position, meaning that $q_{t}=f_{t} X_{t} + g_{t} Y_{t}$ would be the solution in a deterministic optimal-execution framework. The additional term $h_{t} Z_{t}$ captures forecasted future in-flows. The sign of $h_{t}$ depends on whether the in-flow exhibits momentum or reversal. For momentum, the adjustment leads to faster unwinding as more same-sided orders are expected; this result supports the ``overtrading'' of orders often seen in practice. The adjustment is zero if the in-flow is a martingale, thus justifying traders' intuition that ``truth-telling'' flows can be executed myopically. The tractability of the model allows us to rigorously study various other properties by analytic means (see Section~\ref{se:theory}). This is useful not only to understand the general dependencies of the output, but also for a desk to recognize, e.g., if its algorithm incurs excessive model risk by chasing an arbitrage implicit in its liquidity estimates.

Transaction costs such as impact costs are difficult to study empirically because the unaffected price (that would have prevailed without the transaction) cannot be observed. Instead, key metrics are inherently model-dependent. Having a consistent, computable model for general in-flows opens the door to analyzing cost metrics side-by-side with model-independent quantities like the internalization rate. The most essential trading question is how unwind strategy and cost metrics depend on the characteristics of the in-flow; namely, autocorrelation and volatility. In our numerical simulations, we observe that all core trading metrics strongly depend on these characteristics, in line with the industry mantra, \emph{know your client.} Some key insights are:

\begin{itemize}
\item Expected trading costs (per order notional) are minimized at a particular in-flow volatility. Below that threshold, additional volatility increases the internalization rate, whereas above the threshold, internalization plateaus and further volatility necessitates more aggressive trading, driving up costs. As a take-away, a desk may benefit from incentivizing additional order flow to increase internalization up to the threshold. 

\item Autocorrelation strongly affects all core trading metrics. Momentum requires more aggressive trading, increasing both impact costs and spread costs. The ratio between those costs is approximately stable. Reversion leads to more internalization and more warehousing until the close; the closing trade represents a larger fraction of the total trading volume.

\item Our misspecification analysis shows that it is preferable to err on the \emph{optimistic} side when estimating autocorrelation: overestimating momentum sharply increases costs by way of overly aggressive trading and missed netting opportunities, whereas trading too slowly only incurs a moderate additional cost by trading too much on the close. Misspecifying in-flow volatility bears no transaction cost since the unwind strategy's feedback form is independent of volatility; only the calculation of expected costs is affected.
\end{itemize} 
Section~\ref{se:numRes} and Appendix~\ref{se:appendixNumerical} present numerous further simulation studies, and also introduce new metrics of interest to practitioners.

Numerical experiments highlight that our model encompasses a wide spectrum of regimes, with scenarios resembling classical optimal execution and market making occurring as extreme cases. The regime depends on the realization of the random in-flow trajectory. To illustrate this, Figure~\ref{fig:ACSamplePathIntro} shows two particular realizations from the same (martingale) in-flow process and the model's corresponding optimal unwind strategy.
\begin{figure}[bth]
    \centering
    \includegraphics[width=1.0\textwidth]{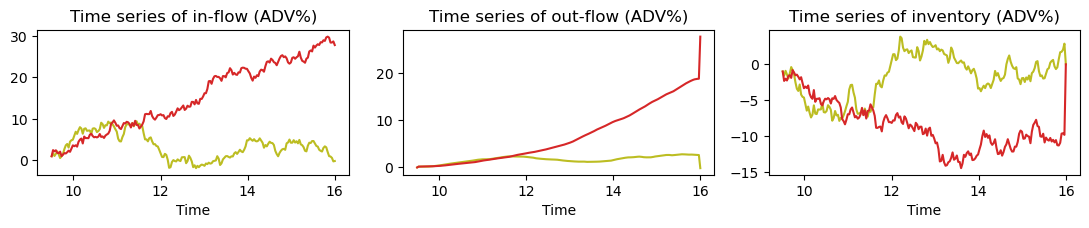}
    \caption{Two realizations of the same model illustrating extreme regimes.}
    \label{fig:ACSamplePathIntro}
\end{figure}%
One realization (red) is closer to classical optimal execution: Most orders have the same sign and the strategy unwinds aggressively to reduce the inventory. Internalization occurs and leads to a smoothing between in and out-flow, but is not sufficient to control the inventory. Transaction costs are dominated by price impact. Such scenarios are frequent when the initial position is large compared to in-flow volatility, and when in-flow exhibits momentum. The other realization (yellow) resembles a favorable market-making scenario: orders approximately cancel, the unwind strategy trades slowly most of the time, and the inventory never gets large. Spread costs are not negligible compared to impact costs. Such scenarios are frequent when in-flow volatility is large compared to the initial position, and when in-flow exhibits reversion.
The relevance of models that can capture all regimes was previously emphasized by \cite[p.\,37]{ButzOomen.19}:
\begin{displayquote}
``An important message is that the polarisation between internalisers and externalisers is overly simplistic and that instead one needs to view liquidity providers across a continuum with passive internalisers at one end [\,\dots] and eventually externalisers at the other end.''
\end{displayquote}

\subsection{Background}\label{se:background}

Next, we discuss our model's main ingredients. The trading metrics for its analysis are discussed in Section~\ref{se:numSetup}.

\paragraph{In-flow} or \emph{order flow}\footnote{The term ``order flow'' is used differently in~\cite{CarteaJaimungal.16}. There, order flow represents the orders sent to the market by other market participants and the focus is on how those orders impact prices available during execution.} refers to the incoming orders from the desk's clients. %
Practitioners propose stochastic processes for in-flow based on internal historical data. Depending on the organization, the clients, and the order options offered to them, the desk will observe different characteristics. The in-flow is generally stochastic, potentially with jumps if block orders are allowed, and there is a pronounced time-of-day effect: most orders arrive shortly after the open or before the close, as reflected in market volumes. For examples of intraday trading patterns, see \cite{CarteaBook, BouchaudBook}. In-flow may exhibit momentum, for instance, if clients slice their orders to mask the total volume, or if they target momentum strategies. 
See \cite{Toth2015} for an in-depth empirical analysis of market flow autocorrelation. 
Because splitting of meta orders is common in LOB trading, order flow on the public trading tape exhibits momentum, a finding that we confirm empirically for all S\&P 500 stocks (see Section~\ref{se:empiricalAutocorr}). We emphasize that the situation might be quite different for a CRB with internal clients, say, which ideally (though maybe not in practice) deliver ``truth-telling'' (martingale) flow by being upfront about their volume.
Less frequently, in-flow may exhibit mean reversion, for instance, if clients follow a mean-reversion strategy or if they tend to cancel orders mid-execution.\footnote{The latter can occur if an execution desk allows for reductions of order size during execution but not for increases, as clients may then deliberately overstate the initial order and leverage the free option to reduce the order's size.} Our model uses a stochastic process 
$$
  dZ_{t}=-\theta_{t}Z_{t}\,dt + \sigma_{t}\,dW_{t}
$$
with time-varying parameters to allow for a wide range of client scenarios. It exhibits momentum (reversion) if $\theta_{t}<0$ ($>0$). In practice, execution desks have ample historical data to determine suitable parameters and want to ensure their in-flow model reflects their client base. Indeed, different desks achieve different model parameters, with more sophisticated desks achieving milder order flow. For instance, a CRB with strong relationships may attract clients with less momentum. Another example is a low-latency market-maker: they could cancel limit orders and widen spreads when predicting more toxic (momentum) flow, effectively achieving a milder in-flow compared to the average market participant. We refer to the follow-up paper \cite{BarzykinBoyceNeuman.24} for a model where $\theta_{t}$ is unknown and filtered from the observed in-flow.

We use a Brownian motion driver $\sigma_{t}\,dW_{t}$ to keep the model readily understandable to a broader audience but show in Section \ref{se:jumps} that the results extend to a general martingale driver~$dM_{t}$ with minor modifications, thus covering in-flows with jumps, self-excitement, etc. In the numerical experiments of Section~\ref{se:numRes}, we use a driver of finite-variation so that trading metrics can be expressed ``per order notional.''

\paragraph{Unwind trades} or \emph{out-flow} are the market-facing trades of the desk. We denote by~$Q_{t}$ the cumulative unwind trades up to time~$t$; this is the control variable in our model.  %
An order scheduling algorithm typically uses block trades at the open and closing auctions and trades continuously on the limit order book during the day.
The desk's available actions may differ from the clients' options for submitting orders. For example, clients might submit block orders intraday, leading to jumps in a CRB's position that cannot be unwound immediately (at a reasonable cost). Another example of a client-triggered block trade is when a market maker's resting order in a dark pool gets filled by a large mutual fund.

The desk's \emph{outstanding position} or \emph{inventory} $X_{t}=Q_{t}- Z_{t}$ is the difference between the unwind trades and the in-flow. For simplicity, our model focuses on a single trading day and a desk that does not hold overnight positions. Hence, we impose the liquidation constraint $X_{T}=0$ at the time horizon~$T$ corresponding to the end of the closing auction.

\paragraph{Transaction costs} come in two forms: price impact and bid-ask spread. For CRBs servicing major financial institutions, trading costs are primarily driven by price impact, but spread costs are not negligible. For example, Nasdaq's \emph{The 2022 Intern's Guide to Trading} \cite{Mackintosh2022} gives numbers for large-cap stocks of 30 basis points for price impact and 5 basis points for spread costs. 

A price impact model generally consists of two elements. First, a push factor, also called Kyle's lambda, describes a market fill's immediate effect on the midprice. Second, a decay kernel describes how quickly this dislocation reverts in the absence of further fills. The interplay between push and reversion dictates the optimal trading speed. Empirically, traders observe a time-of-day effect for Kyle's lambda: it is largest near the open and smallest near the close. See, e.g., \cite{Cont2014} and \cite{JMK2022b}. See \cite{Salek2023} for estimates of price impact in auctions. 

The observed execution price $P_{t}$ is thought of as the sum $P_{t}=S_{t}+Y_{t}$ of an ``unaffected'' price~$S_{t}$ (that would have prevailed without our trading) and a market impact process $Y_{t}$. Our model uses a generalized Obizhaeva--Wang (OW) model~\cite{obizhaeva2013optimal}  with time-dependent coefficients,
\begin{equation*}
   dY_t = -\beta_{t} Y_t \,dt + \lambda_{t}\,dQ_{t}.
\end{equation*}
Thus a trade of size $\delta Q_{t}$ immediately dislocates the price by $\lambda_{t}\delta Q_{t}$, and the dislocation decays exponentially at rate $\beta_{t}$. This applies to block trades as well as to continuous trading. To the best of our knowledge, the generalized Obizhaeva--Wang model was first introduced in \cite{FruthEtAl.2013}. We use time-dependent but deterministic coefficients. (In practice, these curves are typically fixed at the beginning of the day, based on data from previous trading sessions, especially the projected intra-day trading volume in the market. Thus, the coefficients are deterministic intra-day.) A stochastic resilience $\beta_t$ was considered in~\cite{GraeweHorst.17}. Both $\beta_t$ and $\lambda_t$ are stochastic in the formulation of~\cite{AckermannKruseUrusov.22} which in addition allows for a stochastic terminal condition. In principle, this allows to model stochastic in-flows, though the paper does not elaborate on that application. Other price impact models consider non-exponential kernels, as empirically studied in~\cite{Bouchaud2004} and solved by~\cite{AbiJaber2022}, or non-linear price impact, as introduced by \cite{Bouchaud2004, Alfonsi2010} and studied by \cite{Gatheral2011b, JMK2022b, Hey2023}.

We do not model permanent impact (i.e., impact without decay). Because initial and terminal inventory are fixed, permanent impact only adds a constant to the cost and does not affect the optimal strategy, because any round trip trade has zero permanent impact cost (see \cite[Proposition~3.3]{FruthEtAl.2013} for a detailed proof).

For costs related to the bid-ask spread, we use a straightforward model as in~\cite{almgren.chriss.01}: intraday trading incurs a quadratic instantaneous cost $\frac12\eps_{t} q_{t}^{2}$ on the trading speed $q_{t}=\dot{Q}_{t}$. While this forces intraday trading to be continuous, no such cost is charged for the auctions. We abuse the term ``spread costs'' in that this reduced-form cost stands in not just for the spread costs of market orders, but also for adverse selection incurred by limit orders and any other instantaneous costs that do not change the price in a persistent way.

If the desk nets orders of opposite sign instead of routing them to the market, it saves the associated spread costs entirely. We emphasize that the situation is different for impact costs, as this is often overlooked: Suppose a buy order pushes up the price. It incurs an impact cost, but a sell order immediately following it benefits from the increased price, so that the sum of the costs is approximately zero. In general, the cost saving due to netting is related to the impact decay that would have occurred between the trades.

To focus our study on the core internalization and externalization trade-off, and to provide results that are agnostic to the proprietary signals the desk has access to, we model the unaffected price $S$ as a martingale. This is consistent with our assumption that the in-flow is driven by a martingale---because order flow leads to price changes via price impact, an exogenous drift term in the order flow amounts to a signal for the price. For considerations around proprietary trading and toxic flows, see \cite{Cartea2022,LehalleNeuman.19,NeumanVoss.22}.

\section{Analytic Results: Optimal Strategy and Cost}\label{se:theory}

This section contains our theoretical results.
Section~\ref{se:problemFormulation} details the mathematical formulation of the problem. Section~\ref{se:optStrat} provides the optimal unwind strategy and its expected cost, for the case of in-flow driven by Brownian motion and possibly time-varying liquidity parameters. Section~\ref{se:details} presents the high-level steps of its derivation and further details; the technical proofs are reported in Appendix~\ref{se:proofs}. Section~\ref{se:jumps} generalizes the results to in-flows driven by a more general martingale (possibly with jumps, non-Markovian, etc.). On the other hand,  Section~\ref{se:cstLiq} specializes to constant liquidity parameters and provides a closed-form solution for that case. The latter result is derived by a different approach than the rest (maximum principle instead of dynamic programming); the proof is reported in Appendix~\ref{se:maxPrinciple}.

\subsection{Problem Formulation}\label{se:problemFormulation}

We fix the time horizon $T > 0$ and bounded, measurable functions $\beta, \lambda, \eps: [0,T] \to (0, \infty)$, $\theta:[0, T] \to \R$ and $\sigma:[0, T] \to [0, \infty)$, where $\eps,\lambda$ are also bounded away from zero. Moreover, $\lambda$ is differentiable and its derivative $\dot\lambda$ is bounded. 
For convenience, we denote $\gamma_{t} := \log \lambda_{t}$. We also introduce an additional value $\lambda_{0-} \in (0, \infty)$ representing the opening auction's liquidity parameter.
We assume throughout that 
\begin{equation}\label{eq:noarbCondLambda}
  \lambda_{0-} \leq \lambda_{0} \qqandqq 2\beta_{t} + \dot\gamma_{t}>0,
\end{equation}
which is trivially satisfied if the liquidity parameters are constant in time. We will see in Section~\ref{se:details} that these are no-arbitrage conditions, excluding profitable roundtrip trades and ensuring that our problem is convex. See \cite{Goyal2022} for estimates of price impact based on TAQ data  and a comparison across continuous trading, closing auction, and opening auction.

\begin{remark}[Price manipulation]\label{rk:priceManipulation}
Fast increases in liquidity can give rise to arbitrage, or \emph{price manipulation,} as first noted by~\cite{HubermanStanzl.04}: if one order pumps the price and a subsequent order of opposite sign has less price impact, a profitable roundtrip arises, unless the price has sufficiently reverted between the trades. We will see in Section~\ref{se:details} that~\eqref{eq:noarbCondLambda} exclude such profitable (or free) roundtrips.
For one of our comparative statics below (Proposition~\ref{pr:properties}), we will require a strengthening of the second part of~\eqref{eq:noarbCondLambda}, namely that $\beta_{t} + \dot\gamma_{t}>0$. This is related to additionally excluding ``transaction-triggered price manipulation,'' first studied by \cite{AlfonsiSchiedSlynko.12, GatheralSchiedSlynko.12}. See Remark~\ref{rk:transactionTriggered} for further details, and Remark~\ref{rk:practicalPriceManipulationCond} for practical aspects. Note that the price impact parameter for the closing auction is $\lambda_{T}=\lim_{t\to T}\lambda_{t}$. A trader may be interested in specifying a smaller value for Kyle's lambda in the auction compared to LOB trading before the close. However, this would immediately give rise to price manipulation strategies. That may, of course, be an artifact of our simplified model for the auction---an area left for further investigation.  (One could specify a value larger than $\lambda_{T}$, but this is not relevant in practice; see \cite{Goyal2022}.)
\end{remark}

Let $(\Omega, \cF, (\cF_{t})_{t \in [0, T]}, \mathbb{P})$ be a filtered probability space satisfying the usual conditions, endowed with a Brownian motion $W = (W_{t})_{t \in [0, T]}$ and a c\`adl\`ag martingale $S = (S_{t})_{t\in[0,T]}$ which models the unaffected price. %
The process~$S$ is required to satisfy a minor integrability condition (see Footnote~\ref{fot:qAdmissible}).

We define the in-flow process $Z$ by $Z_{0} = z \in \R$ and 
\begin{align}
    dZ_{t} = -\theta_{t} Z_{t} \, dt + \sigma_{t} \, dW_{t}, \quad t \in [0, T].
\end{align} 
This process represents the aggregated order flow the desk faces. The sign convention follows the client's perspective; e.g., a positive increment corresponds to a buy order, causing a negative increment in the desk's outstanding position. The initial value~$z$ represents the orders present at the beginning of our problem.

Next, consider a c\`adl\`ag semimartingale $(Q_{t})_{t \in [0, T]}$ with $Q_{0-} = 0$, representing the desk's cumulative unwind trades (out-flow). Here and below, certain processes are defined at time $t=0-$ to allow for a jump at the initial time, in this case, a block trade $\Delta Q_{0}=Q_{0}-Q_{0-}=Q_{0}$ in the opening auction. Given~$Q$, its price impact process $Y$ is defined by $Y_{0-} = y$ and
\begin{align} \label{eq:impact}
    d Y_{t} = -\beta_{t} Y_{t} \, dt + \lambda_{t-} \, dQ_{t}, \quad t \in [0, T].
\end{align}
Here and below, our convention is that integrals such as $\int_{0}^{t}\lambda_{s-} \, dQ_{s}$ are taken over the \emph{closed} interval $[0,t]$, as for instance in~\cite{Protter2005}. In particular, a jump $\Delta Q_{0}$ in the integrator leads to a jump of the integral at $t=0$. 
In our model, $Q$ will typically have block trades at the initial and terminal times, so that~\eqref{eq:impact} gives rise to jumps in the impact process,
\begin{align} \label{eq:impact jumps}
    \Delta Y_{0}=Y_{0} - y = \lambda_{0-} Q_{0} \qquad \text{and} \qquad \Delta Y_{T}=Y_{T} - Y_{T-} = \lambda_{T} (Q_{T} - Q_{T-}).
\end{align}
Thus, $y$ represents the impact state when our problem begins (e.g., impact from the previous day's trading) and $\lambda_{0-} Q_{0}$ is the change in impact state due to the block trade in the opening auction.

In our model, the trading speed bears a quadratic cost on~$(0,T)$. Hence unwind strategies $(Q_{t})$ trade in an absolutely continuous fashion on~$(0,T)$, in addition to the block trades at $t=0$ and $t=T$:
\begin{align} \label{eq:Q main}
    Q_{t} := J_{0} + Q^{c}_{t} + J_{T} \1_{t = T}, \qquad Q^{c}_{t}:= \int_{0}^{t} q_{s} \, ds.
\end{align}

Specifically, we define an \emph{admissible strategy} as a triplet $(J_{0}, q, J_{T})$. Here $J_{0}\in\R$ and $J_{T}\in L^{2}(\cF_{T})$. Moreover, $(q_{t})_{t\in[0,T]}$ is a progressively measurable process such that $\E[\int_{0}^{T} q^{2}_{t} \, dt] < \infty$ and such that the local martingale $\int_0^t Q^{c}_s\,dS_s$ is a true martingale. Finally, \begin{equation}\label{eq:admConstraint}
  J_{0} + \int_{0}^{T} q_{s} \, ds + J_{T} = Z_{T}.
\end{equation}

Thus $J_{0}$ and $J_{T}$ are the sizes of the block trades $\Delta Q_{0}$ and $\Delta Q_{T}$, and $q$ is the speed of absolutely continuous trading during regular hours. Finally, \eqref{eq:admConstraint} is the liquidation constraint $Q_{T}=Z_{T}$. Occasionally it will be convenient to switch between seeing the strategy as a triplet $(J_{0}, q, J_{T})$ or a process~$Q$. We then write $Q\equiv (J_{0}, q, J_{T})$ to indicate that the relationship~\eqref{eq:Q main} holds.

Let~$\cA$ denote the set of admissible strategies. Given~$Q\equiv (J_{0}, q, J_{T}) \in \cA$, the execution price is
\begin{align}\label{eq:execPrice}
    P_{t} := 
    \begin{cases}
        S_{t} + \frac{1}{2}(Y_{t-} + Y_{t}), & t \in \{0, T\}, \\
        S_{t} + Y_{t} + \frac{1}{2} \eps_{t} q_{t}, & t \in (0, T).
    \end{cases}
\end{align}
The returns from trading at the execution price $P_{t}$ can be broken down into three parts.
\begin{enumerate}
\item The frictionless wealth
\begin{equation*}
S_0 J_0 + \int_0^T S_t q_t \, dt + S_T J_T
\end{equation*}
reflects returns from the unaffected price (not caused by the unwind strategy).
\item The \emph{impact cost}
\begin{equation}
\frac{1}{2}\left(Y_{0-} + Y_0\right) J_0 + \int_0^T Y_t q_t \,dt + \frac{1}{2}\left(Y_{T-} + Y_T\right) J_T\label{eq:impactCostC0}
\end{equation}
reflects returns caused by the the price impact of the unwind strategy.
\item The \emph{spread cost}  (or more precisely, \emph{instantaneous cost})
\begin{equation}
\frac{1}{2}\int_0^T \eps_t q^2_t \,dt.\label{eq:spreadCostC0}
\end{equation}
\end{enumerate}
As described in Section~\ref{se:background}, $t \in \{0, T\}$ corresponds to the opening and closing auctions. There is no spread cost (meaning that block trades are possible), but there is price impact. The first part of~\eqref{eq:execPrice} (and the jump terms in~\eqref{eq:impactCostC0}) reflect that the Obizhaeva--Wang model prices a block trade at the average price over the volume of the block. For $t \in (0, T)$, we also have the instantaneous cost, and as trading is necessarily continuous, we can write $Y_{t}$ instead of $\frac{1}{2}(Y_{t-} + Y_{t})$ in the second part of~\eqref{eq:execPrice} (and in the integral of~\eqref{eq:impactCostC0}).

The expected execution cost of a given unwind strategy is
\begin{align}\label{eq:execCost}
    \cC(J_{0}, q, J_{T}) := \E \left[ \int_{0}^{T} P_{t} \, dQ_{t} \right] = \E \left[ P_{0} J_{0} + \int_{0}^{T} P_{t} q_{t} \, dt + P_{T} J_{T} \right].
\end{align}
The desk's aim is to minimize this cost,
\begin{align}\label{mainProblem}
    \inf_{(J_{0}, q, J_{T}) \in \cA} \cC(J_{0}, q, J_{T}).
\end{align}
We remark that the possibility of the closing block trade in our model is important for the existence of a strategy satisfying the liquidation constraint. Without a block trade, that constraint can in general only be satisfied if the in-flow volatility $\sigma_{t}$ tends to zero as $t\to T$; see \cite{bank.al.17} for a precise statement.

\subsection{Optimal Strategy}\label{se:optStrat}

The next result provides a semi-closed form solution for the optimal strategy and optimal cost of~\eqref{mainProblem}; i.e., explicit up to solving an ordinary differential equation (ODE). The ODE does not depend on the realization of the in-flow path, so that it can pre-computed in an implementation. For the case of constant liquidity parameters $\beta,\lambda,\eps$, Section~\ref{se:cstLiq} will even provide a fully explicit solution; on the other hand, Section~\ref{se:jumps} will cover more general in-flow processes. We emphasize that the implementation of the below formulas is straightforward (even if some look lengthy).

\begin{theorem}\label{th:main}
The problem~\eqref{mainProblem} has a unique solution. The optimal strategy $(J_{0}, q, J_{T}) \in \cA$ is
    \begin{align}
    q_{t} &=  f_{t} X_{t} + g_{t} Y^{c}_{t} + h_{t} Z_{t}, \quad t \in [0, T], \label{eq:qMain}\\[.4em]
    J_{0} &=  r^{-1}(g_{0-} + \eta_{0-}) y + r^{-1}(-f_{0-} + h_{0-}) z, \nonumber\\
    J_{T} &= Z_{T} - J_{0} - \int_{0}^{T} q_{t}\,dt, \nonumber
    \end{align}
    where $\eta_{0-} = -\eps^{-1}_{0} \left( 1 - {\lambda_{0-}}/{\lambda_{0}} \right) \leq 0$ and $r= -f_{0-} - \lambda_{0-} \left(g_{0-} + \eta_{0-} \right) > 0$ are constants and
    \begin{align}\label{eq:fghFormulas}
    f_{t}&= - \eps^{-1}_{t}(A_{t} + \lambda_{t} B_{t}), \nonumber\\ 
    g_{t}&= - \eps^{-1}_{t}(B_{t} + \lambda_{t} C_{t}), \\
    h_{t}&= - \eps^{-1}_{t}(D_{t} + \lambda_{t} E_{t}) \nonumber
    \end{align}
    for $t \in \{0-\}\cup[0, T]$.
    Here, $A_{t}, B_{t}, \dots$ are deterministic functions given by the Riccati ODE system~\eqref{eq:riccati} below, and $\eps_{0-}:=\eps_{0}$, $A_{0-}:=A_{0}$, $B_{0-}:=B_{0}$, $\dots$. Finally, the state processes in~\eqref{eq:qMain} are
    \begin{align} \label{eq:forward Yc}
    \begin{cases}
        dX_{t} = q_{t} \, dt - dZ_{t}, &\quad X_{0} = J_{0} - z \\
        dY^{c}_{t} = (-\beta_{t} Y^{c}_{t} + \lambda_{t} q_{t}) \, dt, &\quad Y^{c}_{0} = y + \lambda_{0-} J_{0} \\
        dZ_{t} = -\theta_{t} Z_{t} \, dt + \sigma_{t} \, dW_{t}, &\quad Z_{0} = z.
    \end{cases}
    \end{align}
    The optimal expected cost~\eqref{mainProblem} is
    \begin{align*}
        \cC(J_{0}, q, J_{T})
        =  v(0, J_{0} - z, y + \lambda_{0-} J_{0}, z) 
        + \frac{1}{2} \Big\{\left( {\lambda^{-1}_{0-}} - {\lambda^{-1}_{0}} \right) (y + \lambda_{0-} J_{0})^{2} 
        - \lambda^{-1}_{0-} y^{2}
        + \E[S_{T}Z_{T}] \Big\}
    \end{align*}
   where $v$ is the quadratic polynomial~\eqref{eq:ansatz} with coefficients $A_{0}, B_{0}, \dots$.
\end{theorem}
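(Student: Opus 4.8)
The plan is to attack \eqref{mainProblem} by dynamic programming, recasting it as a finite-horizon linear--quadratic (LQ) problem in the state $(X,Y^{c},Z)$ of \eqref{eq:forward Yc}. First I would reduce the objective. Inserting \eqref{eq:execPrice} into the cost, I would dispose of the unaffected-price part $\E[S_0J_0+\int_0^T S_tq_t\,dt+S_TJ_T]$: integration by parts gives $\int_0^T S_t\,dQ_t=S_TQ_T-\int_0^\cdot Q_{s-}\,dS_s\big|_{T}$, and since $S$ is a martingale, $\int_0^\cdot Q^c_s\,dS_s$ is a true martingale by admissibility, and $Q_T=Z_T$ by \eqref{eq:admConstraint}, this collapses to a control-independent term contributing $\E[S_TZ_T]$ to the cost. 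What remains to minimize is the sum of the impact cost \eqref{eq:impactCostC0} and the spread cost \eqref{eq:spreadCostC0}.

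Next I would peel off the block trades and linearize the impact cost. Because $Z$ is continuous, \eqref{eq:admConstraint} forces $J_T=-X_{T-}$; together with \eqref{eq:impact jumps} the closing term of \eqref{eq:impactCostC0} becomes $\tfrac12\lambda_T X_{T-}^2-Y^c_{T-}X_{T-}$, and the opening term becomes $\tfrac12\lambda_{0-}^{-1}((Y^c_0)^2-y^2)$ with $Y^c_0=y+\lambda_{0-}J_0$. For the continuous part I would use the Obizhaeva--Wang identity $Y^c_tq_t\,dt=d\bigl(\tfrac12\lambda_t^{-1}(Y^c_t)^2\bigr)+\tfrac12\lambda_t^{-1}(2\beta_t+\dot\gamma_t)(Y^c_t)^2\,dt$, which turns the running impact cost into a \emph{nonnegative} quadratic running cost (nonnegativity being exactly the convexity condition $2\beta_t+\dot\gamma_t>0$ of \eqref{eq:noarbCondLambda}) plus the boundary term $\tfrac12\lambda_T^{-1}(Y^c_T)^2-\tfrac12\lambda_0^{-1}(Y^c_0)^2$. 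The boundary term then combines with the block costs: the terminal pieces assemble into the perfect square $\tfrac12\lambda_T(X_T-\lambda_T^{-1}Y^c_T)^2$, while the initial pieces give $\tfrac12(\lambda_{0-}^{-1}-\lambda_0^{-1})(Y^c_0)^2-\tfrac12\lambda_{0-}^{-1}y^2$, which are precisely the correction terms appearing in the optimal-cost formula. For fixed $J_0$ this defines a continuous LQ problem whose value is $v(0,J_0-z,y+\lambda_{0-}J_0,z)$.

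Writing the HJB for $v$ and minimizing the Hamiltonian $\tfrac12\eps_tq^2+(v_x+\lambda_tv_y)q$ over $q$ yields the feedback $q^{*}=-\eps_t^{-1}(v_x+\lambda_tv_y)$. Inserting the quadratic ansatz \eqref{eq:ansatz}, with the shared Brownian driver of $X$ and $Z$ feeding the $\sigma^2$-terms only into the time-dependent constant, makes $v_x,v_y$ affine in $(x,y,z)$, so $q^{*}$ takes the form \eqref{eq:qMain} with $f,g,h$ as in \eqref{eq:fghFormulas}; equating the remaining coefficients produces the Riccati system \eqref{eq:riccati}, with terminal data read off from the perfect-square terminal cost. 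The step I expect to be the main obstacle is proving that \eqref{eq:riccati} admits a solution on all of $[0,T]$, i.e.\ that no coefficient blows up before $T$. I would obtain this from the convexity secured by \eqref{eq:noarbCondLambda}: the transformed running cost $\tfrac12\eps_tq^2+\tfrac12\lambda_t^{-1}(2\beta_t+\dot\gamma_t)(Y^c)^2$ is nonnegative with $\eps_t\ge c>0$, and the terminal cost is nonnegative, so the problem is strictly convex and coercive in $q$; a finite, convex value function is incompatible with finite-time blow-up of the associated Riccati flow.

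With the global ODE solution in hand, a verification argument finishes the continuous problem: applying It\^o to $v(t,X_t,Y^c_t,Z_t)$ along an arbitrary admissible $q$ and using the HJB shows $v(0,\cdot)$ is a lower bound attained exactly at $q^{*}$, where the integrability and true-martingale properties needed to drop the stochastic integrals follow from $\E[\int_0^Tq_t^2\,dt]<\infty$ and boundedness of the coefficients, and strict convexity yields uniqueness. It then remains to optimize the scalar map $J_0\mapsto v(0,J_0-z,y+\lambda_{0-}J_0,z)+\tfrac12(\lambda_{0-}^{-1}-\lambda_0^{-1})(y+\lambda_{0-}J_0)^2$. Its first-order condition, evaluated through $v_x+\lambda_{0-}v_y$ at time $0$ and the definition $\eta_{0-}=-\eps_0^{-1}(1-\lambda_{0-}/\lambda_0)$, gives the stated $J_0$ with $r=-f_{0-}-\lambda_{0-}(g_{0-}+\eta_{0-})$; the condition $\lambda_{0-}\le\lambda_0$ of \eqref{eq:noarbCondLambda} makes this map strictly convex (equivalently $r>0$), so the critical point is the minimizer. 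Finally, collecting $v(0,\cdot)$, the block-trade corrections, and the frictionless contribution $\E[S_TZ_T]$ produces the optimal-cost formula, while $J_T=-X_{T-}$ recovers the stated expression for $J_T$.
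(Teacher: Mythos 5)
Your route is the paper's own route: the martingale reduction is Lemma~\ref{le:costS}, your Obizhaeva--Wang identity is exactly Lemma~\ref{le:costY}, the perfect-square terminal cost plus the initial correction terms reproduce Proposition~\ref{pr:costReformulated}, and the HJB/quadratic-ansatz/verification steps are Proposition~\ref{pr:auxControl}. For global solvability of the Riccati system, your ``finite convex value function forbids blow-up'' argument is essentially the mechanism behind the LQ result the paper simply cites (Wonham); combined with the observation that, once $(A,B,C)$ exists, the equations for $(D,E,F,K)$ are linear with bounded coefficients, this step is fine, even if your one-sentence justification would need to be fleshed out (local existence near $T$, identification of the Riccati flow with the value function on the maximal interval, and the sandwich $0\le v\le$ cost of a reference control to bound the coefficients).

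The genuine gap is in your last step. You claim that the condition $\lambda_{0-}\le\lambda_{0}$ from~\eqref{eq:noarbCondLambda} makes $j_{0}\mapsto v(0,j_{0}-z,y+\lambda_{0-}j_{0},z)+\tfrac12(\lambda_{0-}^{-1}-\lambda_{0}^{-1})(y+\lambda_{0-}j_{0})^{2}$ strictly convex, ``equivalently $r>0$.'' That implication is false. In the case $\lambda_{0-}=\lambda_{0}$ (constant liquidity---the case the paper itself flags as typical), one has $\eta_{0-}=0$ and the correction term vanishes identically, so $\lambda_{0-}\le\lambda_{0}$ contributes nothing; then $r=-f_{0-}-\lambda_{0-}g_{0-}=\eps_{0}^{-1}\bigl(A_{0}+2\lambda_{0-}B_{0}+\lambda_{0-}^{2}C_{0}\bigr)$, i.e.\ the value of the quadratic form with entries $A_{0},B_{0},C_{0}$ on the direction $(1,\lambda_{0-})$. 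Nonnegative definiteness of that form---which is all that general LQ theory provides---only gives $r\ge0$, and the worry is real: at $t=T$ the form is singular and vanishes exactly on the direction $(1,\lambda_{T})$, since $A_{T}C_{T}-B_{T}^{2}=0$. If $r=0$, the first-order condition determines no $J_{0}$, the formula $J_{0}=r^{-1}(\cdots)$ is meaningless, and uniqueness fails. The paper closes precisely this hole in Proposition~\ref{prop:optJ0}: assuming $-f_{0-}-\lambda_{0-}g_{0-}=0$, it specializes to $y=z=0$, $\theta\equiv\sigma\equiv0$ and deduces that a round trip with initial block $j_{0}\neq0$ has zero cost, contradicting Corollary~\ref{co:noFreeRoundtrips} (no free roundtrips), which itself follows from the reformulated cost via $2\beta_{t}+\dot\gamma_{t}>0$, continuity of $Y^{c}$, and $\eps>0$. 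You already have every ingredient for this argument---your own reformulation yields the no-free-roundtrip property immediately---but as written, your justification of $r>0$, and hence of the stated $J_{0}$ and of uniqueness of the optimizer, does not go through.
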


The key formula~\eqref{eq:qMain} states that the optimal trading speed~$q_{t}$ is a linear combination of the auxiliary state processes $(X_{t},Y^{c}_{t},Z_{t})$ with time-varying coefficients $(f_{t},g_{t},h_{t})$ determined through the Riccati ODE. The coefficients are discussed in more detail in the next proposition. The processes $(X_{t},Y^{c}_{t})$  in~\eqref{eq:forward Yc} are the inventory and impact state, but only up to the close. More precisely, they are equal to $(Q_{t}-Z_{t},Y_{t})$ on $[0,T)$ but are continuous at $t=T$, so that $(X_{T},Y^{c}_{T})=(Q_{T-} - Z_{T-},Y_{T-})$ corresponds to the values immediately before the closing auction. Moreover, the jumps from the opening auction have been absorbed into the initial positions $(X_{0},Y^{c}_{0})$. As before, $Z$ is the in-flow.

The optimal trading speed~$q_{t}$ in \eqref{eq:qMain} is given by the feedback function $f_{t} x + g_{t} y + h_{t} z$; thus $f_{t}$ reflects the reaction to outstanding position, $g_{t}$ to impact state, and $h_{t}$ to in-flow. The next proposition summarizes their key properties. (See Remark~\ref{rk:transactionTriggered} for a discussion of the less important regime $\beta_{t} + \dot\gamma_{t}\leq 0$ related to transaction-triggered price manipulation.)

\begin{proposition}\label{pr:properties}

\begin{enumerate}
\item Let $\beta_{t} + \dot\gamma_{t}>0$. We have $f_{t}<0$ for all $t<T$: a larger outstanding position causes faster selling.

\item We have $g_{t}<0$ for all $t<T$: higher impact state causes slower selling.

\item We have $h\equiv0$ if $\theta\equiv0$ (martingale in-flow). Let $\beta_{t} + \dot\gamma_{t}>0$. Then $h$ is monotone decreasing wrt.\ $\theta$; in particular, $h\geq0$ if $\theta\leq0$ (in-flow with momentum) and $h\leq0$ if $\theta\geq0$ (in-flow with reversion).

\item The functions $f$ and $g$ depend on the liquidity parameters $(\eps,\beta,\lambda)$, but not on the characteristics $(\theta,\sigma)$ of the in-flow. The function $h$ additionally depends on $\theta$. The volatility $\sigma$ does not affect the strategy (in feedback form) but increases the execution cost $\cC(J_{0}, q, J_{T})$.
\end{enumerate}  
\end{proposition}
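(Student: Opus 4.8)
All four assertions concern the feedback gains in~\eqref{eq:fghFormulas}, so my plan is to read off the structure of the Riccati system~\eqref{eq:riccati} and the value function~\eqref{eq:ansatz} supplied by Theorem~\ref{th:main}, and then argue by ODE comparison together with convexity of~$v$. Abbreviating $a_{t}:=A_{t}+\lambda_{t}B_{t}$, $b_{t}:=B_{t}+\lambda_{t}C_{t}$ and $d_{t}:=D_{t}+\lambda_{t}E_{t}$, formula~\eqref{eq:fghFormulas} reads $f_{t}=-\eps_{t}^{-1}a_{t}$, $g_{t}=-\eps_{t}^{-1}b_{t}$, $h_{t}=-\eps_{t}^{-1}d_{t}$; since $\eps>0$, claims~(i)--(iii) are exactly the sign statements $a_{t}>0$, $b_{t}>0$, and the sign of~$d_{t}$. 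The first step is to verify the block-triangular structure of~\eqref{eq:riccati}: the triple $(A,B,C)$, i.e.\ the Hessian block of~$v$ in $(x,y)$, satisfies an autonomous matrix Riccati equation whose coefficients involve only $(\eps,\beta,\lambda)$; the pair $(D,E)$ then solves a \emph{linear} system driven by $(A,B,C)$ with inhomogeneity proportional to~$\theta$; and the coefficient~$F$ together with the $z$-linear and constant parts of~$v$ are the only place the volatility~$\sigma$ enters. This immediately yields the dependence assertions of~(iv): $f,g$ depend only on $(\eps,\beta,\lambda)$, $h$ additionally on~$\theta$, and none on~$\sigma$.

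For the cost half of~(iv) I would invoke convexity. Under~\eqref{eq:noarbCondLambda} the problem~\eqref{mainProblem} is a convex minimization (as already used for well-posedness in Theorem~\ref{th:main}), so $v(t,\cdot)$ is convex and its Hessian $\left(\begin{smallmatrix} A & B & D\\ B & C & E\\ D & E & F\end{smallmatrix}\right)$ is positive semidefinite. The shared Brownian driver enters $(X,Y^{c},Z)$ in the direction $(-1,0,1)$, because $dX$ and $dZ$ carry diffusion coefficients $-\sigma_{t}$ and $\sigma_{t}$; hence the only $\sigma$-dependent term of the optimal cost is $\tfrac12\int_{0}^{T}\sigma_{t}^{2}\,(A_{t}-2D_{t}+F_{t})\,dt$, and $A_{t}-2D_{t}+F_{t}$ is precisely the Hessian's quadratic form in the direction $(-1,0,1)$, so it is nonnegative. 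Thus $\sigma$ raises the cost while leaving the feedback gains untouched.

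For~(i) and~(ii) I would argue by backward invariance of a region. The closing-auction cost makes $v(T,\cdot)$ a perfect square in $(x,y)$, so a short computation gives $a_{T}=b_{T}=0$; this is exactly why the claims are stated only for $t<T$. Differentiating $a$ and $b$ along~\eqref{eq:riccati} produces $\dot a=\tfrac{a}{\eps}(a+\lambda b)+\lambda B(\beta+\dot\gamma)$ and a companion equation for~$b$ in which the combination $2\beta+\dot\gamma$ appears (it enters through the transformed running cost, whose $y^{2}$-weight equals $(2\beta+\dot\gamma)/(2\lambda)$). I would then show that $\{a>0,\,b>0\}$ is invariant under the backward flow by checking that on each boundary face the vector field points inward: the term $\lambda B(\beta+\dot\gamma)$ is where the strengthened hypothesis $\beta+\dot\gamma>0$ of~(i) is required, whereas positivity of~$b$ needs only $2\beta+\dot\gamma>0$ from~\eqref{eq:noarbCondLambda}. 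The delicate point is the terminal corner $(a_{T},b_{T})=(0,0)$: I would expand $(a,b)$ to leading order as $t\uparrow T$ to confirm both become strictly positive immediately, after which invariance propagates positivity to all of $[0,T)$.

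For~(iii) I would work with the linear $(D,E)$ system, whose terminal data are $D_{T}=E_{T}=0$ and whose inhomogeneities are $-\theta A$ and $-\theta B$. When $\theta\equiv0$ the source vanishes and the unique solution is $D\equiv E\equiv0$, giving $h\equiv0$ with no sign hypothesis---this is the ``truth-telling is unwound myopically'' statement. Combining the two equations gives $\dot d=\bigl(\tfrac{a+\lambda b}{\eps}+\theta\bigr)d+\lambda(\beta+\dot\gamma)E-\theta a$, in which the coupling weight $\lambda(\beta+\dot\gamma)$ is positive precisely under the hypothesis of~(iii). Monotonicity of~$h$ in~$\theta$ I would obtain by comparison: for $\theta_{1}\le\theta_{2}$ the difference of the two $(D,E)$-trajectories solves a linear system with zero terminal data and a sign-definite source, so Gr\"onwall/comparison gives $d^{(\theta_{1})}\le d^{(\theta_{2})}$, i.e.\ $h$ decreasing in~$\theta$; the cases $\theta\le0\Rightarrow h\ge0$ and $\theta\ge0\Rightarrow h\le0$ then follow by comparison against the $\theta\equiv0$ solution. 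The main obstacle throughout is uniform: neither $(a,b)$ nor $(d,E)$ closes in a single scalar, so every sign and monotonicity conclusion must be extracted from the coupled systems, and it is exactly in signing the boundary vector fields and the $E$-coupling that the conditions~\eqref{eq:noarbCondLambda} and its strengthening $\beta+\dot\gamma>0$ are consumed.
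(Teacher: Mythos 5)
Your overall architecture---reducing (i)--(iii) to sign statements for $a=A+\lambda B$, $b=B+\lambda C$, $d=D+\lambda E$ and extracting these from the Riccati system~\eqref{eq:riccati} by backward invariance and comparison---is the same as the paper's (its Proposition~\ref{pr:riccatiAdditionalProperties}), your formulas for $\dot a$ and $\dot d$ are correct, and your treatment of (iv) and of $\theta\equiv0$ in (iii) is fine. (The convexity argument giving $A_t-2D_t+F_t\ge0$ via the noise direction $(-1,0,1)$ is a valid and nice way to see the $\sigma$-monotonicity of the cost; note, however, that $\sigma$ enters only $K$, not $F$ nor the $z$-linear coefficients $D,E$---if it entered those, $h$ would depend on $\sigma$, contradicting your own (iv).) The genuine gap is in (i)--(ii): the set $\{a>0,b>0\}$ is not a region on whose boundary you can sign the vector field, because the $(a,b)$-equations do not close. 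On the face $a=0$ you have $\dot a=\lambda B(\beta+\dot\gamma)$, so inward-pointing requires $B\le0$; on the face $b=0$ (where $B=-\lambda C$) one computes $\dot b=\eps^{-1}b(a+\lambda b)+\beta B+(2\beta+\dot\gamma)(\lambda C-1)$, hence $\dot b|_{b=0}=\lambda C(\beta+\dot\gamma)-(2\beta+\dot\gamma)$, and inward-pointing requires $0\le C\le\lambda^{-1}$. None of these signs follow from $a,b>0$; they are separate facts. The paper establishes them first (nonnegative definiteness of the $(A,B,C)$-block via Wonham's matrix Riccati theorem, then $\lambda^{-1}-C\ge0$ from a scalar linear ODE with nonnegative source), and then proves $b>0$ by a scalar linear ODE in which $a$ enters only as a known bounded coefficient. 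That last point also matters for correctness of the statement: in your joint invariance, the $a$-face forces the hypothesis $\beta+\dot\gamma>0$ into the proof of $b>0$, whereas part (ii) of the proposition is asserted under the standing assumption $2\beta+\dot\gamma>0$ alone. Only after these steps does the paper run a two-dimensional quadrant invariance, for the pair $(-B,\,a)$, to obtain (i).

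The same kind of gap appears in your comparison argument for (iii). Writing $\delta d=d^{(\theta_2)}-d^{(\theta_1)}$ and $\delta E=E^{(\theta_2)}-E^{(\theta_1)}$ for $\theta_2\ge\theta_1$, the difference system is linear with sources $-(\theta_2-\theta_1)\bigl(a-d^{(\theta_1)}\bigr)$ and $-(\theta_2-\theta_1)\bigl(B-E^{(\theta_1)}\bigr)$, so ``sign-definite source'' requires $a>d$ and $B<E$, not merely $a>0$ and $b>0$. These two inequalities are precisely part (ii) of Proposition~\ref{pr:riccatiAdditionalProperties}, which the paper proves by a further quadrant-invariance argument for the pair $(a-d,\,E-B)$ (the system~\eqref{eq:ODE-VW}); they are the missing ingredient in your outline, and with them in hand your Kamke--M\"uller-type comparison does go through and essentially coincides with the paper's. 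In short: the strategy is right, but the two-dimensional regions you chose are too small. You must either carry $B$, $C$ (and later $a-d$, $E-B$) as additional coordinates of the invariant region, or establish their signs beforehand by scalar linear-ODE arguments, as the paper does.
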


The feedback function quantifies some of our intuition on internalizing or externalizing orders. As seen in~(i) and~(ii), $f_t$ promotes externalization when the inventory becomes sizable and $g_{t}$ promotes internalization when the impact state is high (in absolute value). 
In view of~(iii) and~(iv), the contribution of order flow predictions becomes clear: in comparison with a standard optimal execution problem where the total order is given at $t=0$, the unwind speed is adjusted to the random in-flow $Z$ by adding the term $h_{t}Z_{t}$. If the in-flow has no drift ($\theta\equiv0$), then no adjustment is necessary. Traders say in that case that the order flow is \emph{truth-telling}; it does not predict future orders.  Of course, the realization of $q_{t}$ still depends on the realization of $Z_{t}$, but the optimal strategy is myopic in the sense that it does not anticipate \emph{future} changes of the in-flow (because these changes have mean zero).
By contrast, the strategy does adjust for projected future changes when there is momentum, as expected if clients are splitting meta orders, or when there is reversion (see also Section~\ref{se:background}). The coefficient $h_t$ promotes internalization if order flow is reverting, whereas in the more common case of momentum flow, $h_t$ leads to additional externalization as one predicts more same-sided orders.

\subsection{Reduction to Linear-Quadratic Control}\label{se:details}

To solve our problem, we first derive the optimal strategy and cost \emph{after} an exogenously given block trade of arbitrary size $J_{0}$ in the opening auction. In particular, for $J_{0}=0$, we solve the problem in a market without opening auction (or, the problem started at a time after the opening auction). In a second step, we will then optimize the result of this problem over $J_{0}\in\R$ (see Proposition~\ref{prop:optJ0}) to solve the full problem~\eqref{mainProblem}.

To derive the optimal continuous strategy, we recast our problem: the liquidation constraint can be reformulated to say that the block trade $J_{T}$ in the closing auction must match whatever is the outstanding position immediately before that auction. We can thus focus on the continuous trading on $(0,T)$ and let $J_{T}$ be determined via the constraint~\eqref{eq:admConstraint}. This effectively replaces the state constraint with a terminal cost: after the continuous trading, the desk is charged the cost of the block trade determined via the constraint. A priori, this terminal cost is intractable, but the next result shows that it can be transformed into a rather convenient form. More precisely, our problem is expressed as a tractable control problem of linear-quadratic (LQ) type.

\begin{proposition}\label{pr:costReformulated}
    Let $(J_{0}, q, J_{T})\in\cA$. The execution cost~\eqref{eq:execCost} can be stated as 
    \begin{align}\label{eq:costReformulated}
        \cC(J_{0}, q, J_{T}) = \frac{1}{2} \E \bigg\{  \int_{0}^{T} & \left[ \frac{2\beta_{t} + \dot\gamma_{t}}{\lambda_{t}} (Y^{c}_{t})^{2} + \eps_{t} q^{2}_{t} \right] dt + 
        \frac{1}{\lambda_{T}}(\lambda_{T}X_{T} - Y^{c}_{T})^{2} \nonumber \\
        & + \left(\frac{1}{\lambda_{0-}} - \frac{1}{\lambda_{0}} \right) (Y^{c}_{0})^{2} - \frac{y^{2}}{\lambda_{0-}} + S_{T}Z_{T} \bigg\},
    \end{align}
    where $(X_{t}, Y^{c}_{t}, Z_{t})_{t\in[0,T]}$ are defined by~\eqref{eq:forward Yc}.
\end{proposition}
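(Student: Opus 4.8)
The plan is to start from the three-way split of the integrand $P_{t}\,dQ_{t}$ recorded just before the statement: frictionless wealth, impact cost~\eqref{eq:impactCostC0}, and spread cost~\eqref{eq:spreadCostC0}. Taking expectations, the spread cost is already in the target form $\frac12\E[\int_{0}^{T}\eps_{t}q_{t}^{2}\,dt]$, so only the frictionless and impact pieces require work. The frictionless piece will collapse to the strategy-independent constant $\E[S_{T}Z_{T}]$, and the impact piece will be turned into the quadratic functional of $(X_{t},Y^{c}_{t})$ appearing on the right-hand side; assembling the three then gives~\eqref{eq:costReformulated}.

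For the frictionless wealth $S_{0}J_{0}+\int_{0}^{T}S_{t}q_{t}\,dt+S_{T}J_{T}$, I would integrate by parts: since $Q^{c}$ is continuous and of finite variation, $S_{T}Q^{c}_{T}=\int_{0}^{T}S_{t}q_{t}\,dt+\int_{0}^{T}Q^{c}_{t}\,dS_{t}$, and the admissibility requirement that $\int_{0}^{\cdot}Q^{c}_{s}\,dS_{s}$ be a true martingale gives $\E[\int_{0}^{T}S_{t}q_{t}\,dt]=\E[S_{T}Q^{c}_{T}]$. Using that $S$ is a martingale to replace $\E[S_{0}J_{0}]$ by $\E[S_{T}J_{0}]$ (recall $J_{0}\in\R$), and then invoking the liquidation constraint~\eqref{eq:admConstraint} in the form $J_{0}+Q^{c}_{T}+J_{T}=Z_{T}$, the three frictionless terms combine into $\E[S_{T}(J_{0}+Q^{c}_{T}+J_{T})]=\E[S_{T}Z_{T}]$, which is independent of the control.

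The impact cost is the heart of the argument. Rewriting the dynamics~\eqref{eq:forward Yc} of $Y^{c}$ as $\lambda_{t}q_{t}=\dot{Y}^{c}_{t}+\beta_{t}Y^{c}_{t}$, the interior contribution becomes $\int_{0}^{T}Y^{c}_{t}q_{t}\,dt=\int_{0}^{T}\lambda_{t}^{-1}\bigl(\tfrac12\tfrac{d}{dt}(Y^{c}_{t})^{2}+\beta_{t}(Y^{c}_{t})^{2}\bigr)\,dt$. Integrating the total-derivative term by parts in time and using $\tfrac{d}{dt}\lambda_{t}^{-1}=-\dot\gamma_{t}\lambda_{t}^{-1}$ produces the running cost $\frac12\int_{0}^{T}\tfrac{2\beta_{t}+\dot\gamma_{t}}{\lambda_{t}}(Y^{c}_{t})^{2}\,dt$ together with the boundary values $\tfrac{1}{2\lambda_{T}}(Y^{c}_{T})^{2}-\tfrac{1}{2\lambda_{0}}(Y^{c}_{0})^{2}$. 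It then remains to add the two block-trade terms of~\eqref{eq:impactCostC0} and complete squares. The main obstacle, and where care is needed, is the bookkeeping of the jumps at $t=0$ and $t=T$: one uses $Y^{c}_{0}=Y_{0}=y+\lambda_{0-}J_{0}$ and $Y^{c}_{T}=Y_{T-}$, $Y_{T}=Y^{c}_{T}+\lambda_{T}J_{T}$, together with the identity $J_{T}=-X_{T}$ (since $X_{T}=Q_{T-}-Z_{T-}$ and $Q_{T}=Z_{T}$, with $Z$ continuous). The closing terms $\tfrac{1}{2\lambda_{T}}(Y^{c}_{T})^{2}+Y^{c}_{T}J_{T}+\tfrac12\lambda_{T}J_{T}^{2}$ then complete to $\tfrac{1}{2\lambda_{T}}(\lambda_{T}X_{T}-Y^{c}_{T})^{2}$, while the opening terms $yJ_{0}+\tfrac12\lambda_{0-}J_{0}^{2}-\tfrac{1}{2\lambda_{0}}(Y^{c}_{0})^{2}$, after substituting $(Y^{c}_{0})^{2}=(y+\lambda_{0-}J_{0})^{2}$, collapse to $\tfrac12\bigl(\lambda_{0-}^{-1}-\lambda_{0}^{-1}\bigr)(Y^{c}_{0})^{2}-\tfrac{y^{2}}{2\lambda_{0-}}$. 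Collecting the frictionless, impact, and spread contributions yields~\eqref{eq:costReformulated}. Beyond the jump bookkeeping, the only technical points are justifying the time-integration by parts (using that $\lambda$ is differentiable with bounded $\dot\lambda$, hence $\dot\gamma$ bounded) and the true-martingale property invoked for the frictionless term.
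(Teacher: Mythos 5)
Your proposal is correct and takes essentially the same route as the paper's proof: the identical three-way decomposition, the same integration-by-parts-plus-martingale argument for the frictionless part (the paper's Lemma~\ref{le:costS}), and the same time-integration by parts on $\lambda_t^{-1}(Y^c_t)^2$ with jump bookkeeping for the impact part (the paper's Lemma~\ref{le:costY}), your completion of squares with $J_T=-X_T$ being algebraically identical to the paper's substitution $Y_T = Y^c_T - \lambda_T X_T$. One remark: your derivation, exactly like the paper's own proof, produces the strategy-independent term $\E[S_T Z_T]$ with coefficient $1$ rather than the $\tfrac12$ shown inside the braces of~\eqref{eq:costReformulated}; this factor-of-two mismatch on a constant term is a quirk of the paper's displayed statement rather than a flaw in your argument, and it is immaterial for the subsequent optimization over strategies.
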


The proof is stated in Appendix~\ref{se:proofs}.
We note that the cost~\eqref{eq:costReformulated} no longer contains $J_{T}$. As $J_{0}$ is considered fixed at the moment, the last three terms in the cost~\eqref{eq:costReformulated} are independent of~$q$ and can thus be neglected for the purpose of finding the optimal continuous strategy $q$. (When $\lambda_{0}=\lambda_{0-}$, those  terms are independent of~$J_{0}$ as well, but that is not crucial.) On a more technical note, we observe that the no-arbitrage conditions~\eqref{eq:noarbCondLambda} are necessary to make~\eqref{eq:costReformulated} (strictly) convex. In particular, we have the following.

\begin{corollary}[No free roundtrips]\label{co:noFreeRoundtrips}
   Suppose that there is no in-flow ($Z\equiv0$) and that the initial impact state is $y=0$. If $(J_{0},q,J_{T})\in\cA$ satisfies $\cC(J_{0}, q, J_{T})=0$, then $J_{0}=J_{T}=0$ and $q=0$ $dt$-a.e. 
\end{corollary}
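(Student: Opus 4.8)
The plan is to read the conclusion directly off the reformulated cost in Proposition~\ref{pr:costReformulated}, exploiting that the no-arbitrage conditions~\eqref{eq:noarbCondLambda} make every term in~\eqref{eq:costReformulated} nonnegative. First I would specialize~\eqref{eq:costReformulated} to the present hypotheses: since $Z\equiv0$ we have $Z_{T}=0$, so the term $S_{T}Z_{T}$ drops, and since $y=0$ the term $-y^{2}/\lambda_{0-}$ drops as well. What remains is
$$
\cC(J_{0},q,J_{T})=\tfrac12\,\E\Big\{\int_0^T\Big[\tfrac{2\beta_t+\dot\gamma_t}{\lambda_t}(Y^c_t)^2+\eps_tq_t^2\Big]\,dt+\tfrac{1}{\lambda_T}(\lambda_TX_T-Y^c_T)^2+\big(\tfrac{1}{\lambda_{0-}}-\tfrac{1}{\lambda_0}\big)(Y^c_0)^2\Big\}.
$$
The key point is that each summand is nonnegative: the integrand is nonnegative because $2\beta_t+\dot\gamma_t>0$, $\lambda_t>0$ and $\eps_t>0$; the penultimate square is nonnegative because $\lambda_T>0$; and the final term is nonnegative precisely because $\lambda_{0-}\le\lambda_0$. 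Both of these inequalities come from~\eqref{eq:noarbCondLambda}, which is exactly where the no-arbitrage assumption enters.

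Given $\cC(J_{0},q,J_{T})=0$, a sum of nonnegative expectations vanishes, so each must vanish. From $\E[\int_0^T\eps_tq_t^2\,dt]=0$ together with $\eps$ bounded away from zero I would conclude $\int_0^Tq_t^2\,dt=0$ a.s., that is, $q=0$ for $dt\otimes\mathbb{P}$-a.e.\ $(t,\omega)$; this already gives the claim that $q=0$ $dt$-a.e. To pin down the two block trades I would feed $q=0$ and $y=0$ into the $Y^c$-dynamics of~\eqref{eq:forward Yc}, which reduce to $dY^c_t=-\beta_tY^c_t\,dt$ with $Y^c_0=\lambda_{0-}J_0$, so that $Y^c_t=\lambda_{0-}J_0\exp(-\int_0^t\beta_s\,ds)$ is a deterministic, continuous function on $[0,T]$. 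The (now deterministic) vanishing of $\int_0^T\frac{2\beta_t+\dot\gamma_t}{\lambda_t}(Y^c_t)^2\,dt$, whose weight is strictly positive, then forces $\lambda_{0-}J_0=0$, and hence $J_0=0$ since $\lambda_{0-}>0$. Finally the liquidation constraint~\eqref{eq:admConstraint}, which now reads $J_0+\int_0^Tq_s\,ds+J_T=Z_T$, collapses to $0+0+J_T=0$, giving $J_T=0$.

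The only genuinely delicate point is the degenerate case $\lambda_{0-}=\lambda_0$, in which the last term of the cost vanishes identically and therefore cannot by itself force $Y^c_0=0$. The resolution is to extract the vanishing of $Y^c$ from the $dt$-integral term instead, which is nondegenerate thanks to $2\beta_t+\dot\gamma_t>0$; combined with continuity of the ODE solution this recovers $J_0=0$ uniformly across both cases $\lambda_{0-}<\lambda_0$ and $\lambda_{0-}=\lambda_0$. Everything else is a routine sign verification, so I expect no substantive obstacle beyond correctly tracking which of the two no-arbitrage inequalities guarantees nonnegativity of which term.
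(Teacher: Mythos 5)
Your proposal is correct and follows essentially the same route as the paper: both arguments read the conclusion off the term-by-term nonnegativity of the reformulated cost~\eqref{eq:costReformulated} under the no-arbitrage conditions~\eqref{eq:noarbCondLambda}, forcing $q=0$ via the $\eps_t q_t^2$ term, $J_0=0$ via the $(Y^c_t)^2$ integral (which is also how the paper sidesteps the degenerate case $\lambda_{0-}=\lambda_0$, there by continuity of $t\mapsto Y^c_t$ rather than by solving the ODE explicitly), and $J_T=0$ from the liquidation constraint. The only difference is the order of the steps, which is immaterial.
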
 

\begin{proof}
  Recall that $2\beta_{t} + \dot\gamma_{t}>0$ by~\eqref{eq:noarbCondLambda}. If $J_{0}\neq0$, then $Y^{c}_{0} = y + \lambda_{0-} J_{0}\neq0$. As $t\mapsto Y^{c}_{t}$ is continuous, \eqref{eq:costReformulated} shows that $\cC(J_{0}, q, J_{T})>0$. Similarly, if $q\neq0$, then $\cC(J_{0}, q, J_{T})>0$ as $\eps>0$. Finally, if $J_{0}=0$ and $q=0$, the liquidation constraint implies $J_{T}=0$.
\end{proof} 

In summary, we have transformed our problem into a more standard LQ control problem with control~$q$. The next result solves this problem using dynamic programming. As usual in the dynamic programming approach, we consider the problem from arbitrary initial states $(x,y,z)$ and initial time~$t$. Moreover we ignore the terms $(\lambda^{-1}_{0-} - \lambda^{-1}_{0}) (Y^{c}_{0})^{2}  - \lambda^{-1}_{0-} y^{2}  + S_{T}Z_{T}$ in the cost~\eqref{eq:costReformulated} as their expectation is the same for any admissible choice of~$q$.

\begin{proposition} \label{pr:auxControl}
    Fix $t \in [0,T)$ and let $\cQ$ be the set of progressively measurable processes~$q$ with $\E[\int_{t}^{T} q^{2}_{s} \, ds]<\infty$. 
For $q\in\cQ$ and $x,y,z\in\R$, define the state processes $(X_{s}, Y^{c}_{s}, Z_{s})_{s\in[t,T]}$ by
    \begin{align} \label{eq:forward}
    \begin{cases}
        dX_{s} = q_{s} \, ds - dZ_{s}, &\quad X_{t} = x \\
        dY^{c}_{s} = (-\beta_{s} Y^{c}_{s} + \lambda_{s} q_{s}) \, ds, &\quad Y^{c}_{t} = y \\
        dZ_{s} = -\theta_{s} Z_{s} \, ds + \sigma_{s} \, dW_{s}, &\quad Z_{t} = z
    \end{cases}
    \end{align}
    and the value function $v(t, x, y, z)$ by
    \begin{align}\label{valueFunDef}
     \inf_{q \in \cQ} \frac{1}{2} \E\left\{  \int_t^T \left[ \frac{2\beta_{s} + \dot\gamma_{s}}{\lambda_{s}} (Y^{c}_{s})^{2} + \eps_{s} q^{2}_{s} \right] ds 
     +\frac{1}{\lambda_{T}}(\lambda_{T}X_{T} - Y^{c}_{T})^{2} \right\}.
\end{align}
    Then    
    \begin{align} \label{eq:ansatz}
    v(t,x, y, z) &= \frac{1}{2} A_{t} x^{2} + B_{t} x y + \frac{1}{2} C_{t} y^{2} + D_{t} x z + E_{t} y z + \frac{1}{2} F_{t} z^{2} + K_{t},
    \end{align}
    where $A_{t}, B_{t},\dots,K_{t}$ are deterministic functions defined in Proposition~\ref{pr:riccati} below. Moreover, the unique optimal control for~\eqref{valueFunDef} is given by the feedback function
    \begin{align} \label{eq:opt alpha}
    q^{*}(s,x,y,z) =  f_{s} x + g_{s} y + h_{s} z,
    \end{align}
    where 
    \begin{align*}
    f_{s}&:= - \eps^{-1}_{s}(A_{s} + \lambda_{s} B_{s}), \\
    g_{s}&:= - \eps^{-1}_{s}(B_{s} + \lambda_{s} C_{s}), \\ 
    h_{s}&:= - \eps^{-1}_{s}(D_{s} + \lambda_{s} E_{s}).
    \end{align*}
\end{proposition}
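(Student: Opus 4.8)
The plan is to treat \eqref{valueFunDef} as a standard linear--quadratic stochastic control problem, solve it by dynamic programming, and then confirm optimality by a verification argument. First I would write down the Hamilton--Jacobi--Bellman (HJB) equation. Noting that in the dynamics \eqref{eq:forward} the Brownian motion $W$ is shared between $X$ (diffusion $-\sigma_{s}$) and $Z$ (diffusion $+\sigma_{s}$) while $Y^{c}$ has no diffusion, the generator of $(X,Y^{c},Z)$ under a control value $q$, applied to a smooth $\phi$, is
\[
\cL^{q}\phi = \phi_{t} + (q+\theta z)\phi_{x} + (-\beta y+\lambda q)\phi_{y} - \theta z\,\phi_{z} + \tfrac12\sigma^{2}\big(\phi_{xx}+\phi_{zz}-2\phi_{xz}\big),
\]
with coefficients evaluated at $t$. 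The HJB equation is $0 = v_{t} + \inf_{q}\{q v_{x}+\lambda q v_{y}+\tfrac12\eps q^{2}\} + (-\beta y) v_{y} + \theta z(v_{x}-v_{z}) + \tfrac12\sigma^{2}(v_{xx}+v_{zz}-2v_{xz}) + \tfrac12\frac{2\beta+\dot\gamma}{\lambda}y^{2}$, with terminal condition $v(T,x,y,z)=\frac{1}{2\lambda_{T}}(\lambda_{T}x-y)^{2}$ coming from the terminal cost in \eqref{valueFunDef}.

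Second, I would perform the pointwise minimization. Only $q v_{x}+\lambda q v_{y}+\tfrac12\eps q^{2}$ depends on $q$; since $\eps>0$ this is strictly convex and the unique minimizer is $q^{*}=-\eps^{-1}(v_{x}+\lambda v_{y})$. Inserting the quadratic ansatz \eqref{eq:ansatz}, so that $v_{x}=A x+B y+D z$ and $v_{y}=B x+C y+E z$, yields precisely the feedback function \eqref{eq:opt alpha} with the stated $f,g,h$. Third, I would substitute $q^{*}$ and the ansatz back into the HJB equation and match coefficients of the six monomials $x^{2},xy,y^{2},xz,yz,z^{2}$ and the constant term. The second-order part $\tfrac12\sigma^{2}(A+F-2D)$ feeds only the constant ($K$) equation, the $\beta$-terms feed $B,C,E$, and the $\theta$-terms feed $D,E,F$; matching produces a closed Riccati ODE system for $(A,B,C,D,E,F,K)$ with terminal data $A_{T}=\lambda_{T}$, $B_{T}=-1$, $C_{T}=1/\lambda_{T}$, $D_{T}=E_{T}=F_{T}=K_{T}=0$. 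These are exactly the equations of Proposition~\ref{pr:riccati}, whose global solvability on $[0,T]$ I would invoke.

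Finally comes the verification step, which I expect to be the main obstacle. Given the solution $v$ (quadratic in the states, $C^{1}$ in $t$), I would apply It\^o's formula to $s\mapsto v(s,X_{s},Y^{c}_{s},Z_{s})$ along an arbitrary admissible $q\in\cQ$. By construction the drift of $v$ plus the running cost is nonnegative (HJB inequality), with equality exactly for the feedback control; integrating from $t$ to $T$, taking expectations, and using the terminal condition then gives $v(t,x,y,z)\le$ the objective, with equality for $q^{*}$. The delicate point is the integrability bookkeeping: the martingale part of $dv$ equals $\sigma(v_{z}-v_{x})\,dW$, and one must check it is a true martingale so its expectation vanishes. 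This follows from $\E\int_{t}^{T}q_{s}^{2}\,ds<\infty$ together with the finite moments of the Gaussian process $Z$, which give $L^{2}$-control of $X$ (via $dX=(q+\theta z)\,ds-\sigma\,dW$) and of $Y^{c}$ (a linear ODE driven by $q$); since $v_{x},v_{z}$ are affine in the states and $\sigma$ is bounded, $\E\int_{t}^{T}\sigma^{2}(v_{z}-v_{x})^{2}\,ds<\infty$ and the quadratic $v$ is integrable at the endpoints. Strict convexity (from $\eps>0$) yields uniqueness of the optimal control.
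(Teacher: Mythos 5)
Your proposal is correct and follows essentially the same route as the paper's proof: the HJB equation with pointwise minimizer $q^{*}=-\eps^{-1}(v_{x}+\lambda v_{y})$, the quadratic ansatz whose coefficient matching yields exactly the Riccati system of Proposition~\ref{pr:riccati} (with the same terminal data), and a standard verification argument with uniqueness from strict convexity. Incidentally, your second-order operator $\tfrac12\sigma^{2}(\partial_{xx}-2\partial_{xz}+\partial_{zz})$ is the correct one: the paper's displayed generator writes $\partial_{yy}v$ in place of $\partial_{zz}v$, which is a typo, as confirmed by its own $K$-equation $\dot K_{t}=-\sigma^{2}_{t}(A_{t}-2D_{t}+F_{t})/2$, corresponding to $\partial_{zz}v=F$ rather than $\partial_{yy}v=C$.
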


The proof is stated in Appendix~\ref{se:proofs}. The following functions where used in~\eqref{eq:ansatz}.

\begin{proposition}\label{pr:riccati}
The following Riccati ODE system has a unique solution 
on $[0,T]$:
\begin{align} \label{eq:riccati}
    & \begin{cases}
        \dot A_{t} = \eps^{-1}_{t} (A_{t} + \lambda_{t} B_{t})^2, & A_{T} = \lambda_{T} \\ %
        \dot B_{t} = \eps^{-1}_{t} (A_{t} + \lambda_{t} B_{t})(B_{t} + \lambda_{t} C_{t}) + \beta_{t} B_{t}, & B_{T} = -1 \\
        \dot C_{t} = \eps^{-1}_{t} (B_{t} + \lambda_{t} C_{t})^{2} + 2\beta_{t} C_{t} - \lambda^{-1}_{t}(2 \beta_{t} + \dot\gamma_{t}), & C_{T} = \lambda^{-1}_{T} \vspace{0.3cm} \\ 
        \dot D_{t} = \eps^{-1}_{t} (A_{t} + \lambda_{t} B_{t})(D_{t} + \lambda_{t} E_{t}) - \theta_{t} (A_{t} - D_{t}), & D_{T} = 0 \\
        \dot E_{t} = \eps^{-1}_{t} (B_{t} + \lambda_{t} C_{t})(D_{t} + \lambda_{t} E_{t}) - \theta_{t} (B_{t} - E_{t}) + \beta_{t} E_{t}, & E_{T} = 0 \\
        \dot F_{t} = \eps^{-1}_{t} (D_{t} + \lambda_{t} E_{t})^{2} - 2\theta_{t} (D_{t} - F_{t}), & F_{T} = 0 \vspace{0.3cm} \\
        \dot K_{t} = - \sigma^{2}_{t} (A_{t} - 2 D_{t} + F_{t}) / 2, & K_{T} = 0.
    \end{cases}
\end{align}
\end{proposition}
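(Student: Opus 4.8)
The plan is to exploit the \emph{cascade} (block-triangular) structure of the system~\eqref{eq:riccati}. The three equations for $(A,B,C)$ form a closed nonlinear subsystem that does not involve $(D,E,F,K)$; given $(A,B,C)$, the pair $(D,E)$ solves a system that is \emph{linear} in $(D,E)$; given $(D,E)$, the scalar equation for $F$ is linear in $F$; and given $(A,D,F)$, the function $K$ is obtained by a single integration, $K_{t}=\int_{t}^{T}\tfrac{\sigma_{s}^{2}}{2}(A_{s}-2D_{s}+F_{s})\,ds$. I would therefore solve the four blocks in this order, so that at each stage the coefficients entering the next block are already known bounded measurable functions on $[0,T]$. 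Throughout I use that the data are bounded measurable with $\eps,\lambda$ bounded away from zero and $\dot\gamma=\dot\lambda/\lambda$ bounded, so all right-hand sides are polynomial in the unknowns with bounded measurable coefficients in $t$; this puts everything in the Carath\'eodory framework (absolutely continuous solutions satisfying the ODE a.e.), and local Lipschitz continuity in the unknowns yields local existence and uniqueness of a maximal solution integrating backward from $T$.

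The only genuinely nonlinear block is $(A,B,C)$, and the hard part is ruling out finite-time blow-up as one integrates backward. I would identify this subsystem as the symmetric matrix Riccati equation $P_{t}=\left(\begin{smallmatrix}A_{t}&B_{t}\\ B_{t}&C_{t}\end{smallmatrix}\right)$ associated with the deterministic linear-quadratic problem with state $(X,Y^{c})$, dynamics $\dot X=q,\ \dot Y^{c}=-\beta Y^{c}+\lambda q$, running cost $\tfrac12[\tfrac{2\beta+\dot\gamma}{\lambda}(Y^{c})^{2}+\eps q^{2}]$, and terminal cost $\tfrac{1}{2\lambda_{T}}(\lambda_{T}X_{T}-Y^{c}_{T})^{2}$. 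The no-arbitrage condition $2\beta_{t}+\dot\gamma_{t}>0$ from~\eqref{eq:noarbCondLambda} makes the state-cost matrix $\mathrm{diag}(0,(2\beta+\dot\gamma)/\lambda)$ positive semidefinite, the control weight $\eps$ is positive and bounded away from zero, and the terminal matrix $\left(\begin{smallmatrix}\lambda_{T}&-1\\ -1&\lambda_{T}^{-1}\end{smallmatrix}\right)$ is positive semidefinite of rank one. Under these sign conditions, standard LQ Riccati theory yields a unique, globally defined, symmetric positive-semidefinite solution with a uniform two-sided bound $0\le P_{t}\le CI$ on $[0,T]$. Concretely, the lower bound reflects nonnegativity of the value function (all costs are nonnegative), the upper bound follows by comparison with the cost of one explicit admissible control (quadratic in the initial state with bounded coefficients), and these a priori bounds preclude blow-up, forcing the maximal solution to extend to all of $[0,T]$. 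To avoid any circularity with the verification in Proposition~\ref{pr:auxControl} (proved later), I would obtain these bounds through the control-theoretic comparison directly, or equivalently linearize via the associated Hamiltonian system $\tfrac{d}{dt}\binom{M}{N}=H_{t}\binom{M}{N}$ with $M_{T}=I$, $N_{T}=G$, set $P=NM^{-1}$, and show $M_{t}$ stays invertible on $[0,T]$ under the sign conditions; this is the classical non-blow-up argument.

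Once $(A,B,C)$ are known bounded functions, the remaining blocks are routine. The equations $\dot D=\eps^{-1}(A+\lambda B)(D+\lambda E)-\theta(A-D)$ and $\dot E=\eps^{-1}(B+\lambda C)(D+\lambda E)-\theta(B-E)+\beta E$ are linear in $(D,E)$ with bounded measurable coefficients and bounded inhomogeneous terms $\theta A$, $\theta B$; Carath\'eodory theory for linear systems (whose linear growth excludes blow-up) gives a unique solution on all of $[0,T]$. The same applies to $F$, since $\dot F=\eps^{-1}(D+\lambda E)^{2}+2\theta F-2\theta D$ is linear in $F$ with bounded coefficients and bounded inhomogeneity, and finally $K$ is well defined by the explicit integral above. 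Uniqueness of the full solution is inherited block by block from the uniqueness statements at each stage.

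The main obstacle is thus entirely concentrated in the nonlinear $(A,B,C)$ block: every other block is linear and automatically global, whereas a quadratic Riccati system can in principle escape to infinity in finite time. The condition $2\beta_{t}+\dot\gamma_{t}>0$ is precisely what renders the associated LQ cost nonnegative-definite and convex (cf.\ Proposition~\ref{pr:costReformulated} and Corollary~\ref{co:noFreeRoundtrips}), which in turn delivers the two-sided bounds on $P_{t}$ that prevent blow-up; without it one would expect the solution to fail to extend to $[0,T]$, mirroring the price-manipulation/ill-posedness discussed in Remark~\ref{rk:priceManipulation}.
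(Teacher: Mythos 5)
Your proposal is correct and takes essentially the same approach as the paper: the paper likewise isolates the $(A,B,C)$ block, rewrites it as the symmetric matrix Riccati equation of a deterministic LQ problem whose terminal matrix is nonnegative definite and whose running-cost matrix is nonnegative definite precisely because $2\beta_{t}+\dot\gamma_{t}>0$, invokes standard Riccati theory (citing Wonham) for global existence, uniqueness, and nonnegative definiteness, and then solves the $(D,E,F,K)$ equations as a linear system with bounded coefficients. The only differences are cosmetic: you spell out the non-blow-up argument (value-function bounds or the Hamiltonian linearization) that the paper delegates to the cited reference, and you cascade the linear block in three sub-steps rather than treating it at once.
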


Again, the proof is stated in Appendix~\ref{se:proofs}.
Some observations regarding~\eqref{eq:riccati} are in order: 
\begin{enumerate}
    \item $A,B,C$ form an autonomous system depending on the market impact parameters $\eps_{t},\beta_{t},\lambda_{t}$ but not on the characteristics of the in-flow~$Z$.  
    \item $D,E,F$ also depend on the mean reversion $\theta_{t}$. If the in-flow is a martingale ($\theta\equiv0$), then $D, E, F \equiv 0$.

    \item $K$ also depends on the volatility of~$Z$. If the in-flow is deterministic  ($\sigma\equiv0$), then $K \equiv 0$.
\end{enumerate}

The following analytic result on $A_{t}, B_{t},\dots,K_{t}$ implies the properties of the optimal unwind speed reported in Proposition~\ref{pr:properties} by way of the formulas for $f_{t},g_{t},h_{t}$ in~\eqref{eq:fghFormulas}.

\begin{proposition} \label{pr:riccatiAdditionalProperties}
Let $A_{t}, B_{t},\dots,K_{t}$ be the functions defined in~\eqref{eq:riccati}. 
\begin{enumerate}
\item The matrix 
\begin{align*}
    \begin{bmatrix} A_{t} & B_{t} \\ B_{t} & C_{t} \end{bmatrix}
\end{align*}
is nonnegative definite, $0 \leq C_{t} \leq \lambda^{-1}_{t}$, and $B_{t} + \lambda_{t} C_{t} > 0$ for $t \in [0, T)$.

\item Let $\beta_{t} + \dot\gamma_{t}>0$. Then
\begin{align*}
    A_{t} + \lambda_{t} B_{t} > \max(D_{t} + \lambda_{t} E_{t}, 0) \qandq B_{t} < \min( E_{t}, 0), \qquad t \in [0, T).
\end{align*}

\item Let $\beta_{t} + \dot\gamma_{t}>0$. Let $\tilde \theta:[0,T]\to\R$ be measurable and bounded, and define $\tilde D,\tilde E$ like $D,E$ but with~$\tilde\theta$ instead of~$\theta$. If $\theta_{t} \geq \tilde \theta_{t}$ for $t\in [0, T]$, then
\begin{align*}
  D_{t} + \lambda_{t} E_{t} \geq \tilde D_{t} + \lambda_{t} \tilde E_{t} \qandq E_{t} \leq \tilde E_{t}, \qquad t \in [0, T].
\end{align*}
In particular, $\theta_{t} \geq 0, \ t \in [0, T]$ implies $D_{t} + \lambda_{t} E_{t} \geq 0 $ and $E_{t} \leq 0$.
\end{enumerate}
\end{proposition}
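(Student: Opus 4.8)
The plan is to work throughout with the combinations $a_t:=A_t+\lambda_t B_t$, $b_t:=B_t+\lambda_t C_t$ and $d_t:=D_t+\lambda_t E_t$ (so that $f=-\eps^{-1}a$, $g=-\eps^{-1}b$, $h=-\eps^{-1}d$), since these are exactly the quantities appearing in the statement and in the right-hand sides of the Riccati system~\eqref{eq:riccati}. Differentiating them and substituting~\eqref{eq:riccati} (using $\dot\lambda_t=\lambda_t\dot\gamma_t$) produces scalar ODEs of the form $\dot u_t=\phi_t u_t+\psi_t$, $u_T=c$, which I solve by variation of constants, reading off the sign of $u_t$ from the signs of $c$ and the forcing $\psi$; this is robust to the coefficients being merely measurable, which is why I avoid pointwise barrier/derivative arguments. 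For the $2\times2$ block in~(i) I would use the control representation of Proposition~\ref{pr:auxControl}: setting $\sigma\equiv0$ leaves $A,B,C$ unchanged and forces $K\equiv0$, so $v(t,x,y,0)=\tfrac12 A_t x^2+B_t xy+\tfrac12 C_t y^2$ equals an infimum of manifestly nonnegative costs (here $2\beta+\dot\gamma>0$ enters), hence is $\geq0$ for all $x,y$; this yields nonnegative definiteness and $A_t,C_t\geq0$. The bound $C_t\leq\lambda_t^{-1}$ comes from the suboptimal control $q\equiv0$ started at $(0,y,0)$: then $X\equiv0$, $Y^c_s=y\,e^{-\int_t^s\beta}$, and since the running integrand equals $-\tfrac{d}{ds}\big(e^{-2\int_t^s\beta}/\lambda_s\big)$ the total cost telescopes to exactly $y^2/(2\lambda_t)$, whence $C_t\leq\lambda_t^{-1}$. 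Finally, for $b_t>0$ on $[0,T)$ its ODE is $\dot b_t=\phi_t b_t+\psi_t$ with $\psi_t=(\beta_t+\dot\gamma_t)(\lambda_t C_t-1)-\beta_t$ and $b_T=0$; since $\lambda_t C_t-1\in[-1,0]$ by the two bounds just shown, one checks $\psi_t\le\max(-\beta_t,-(2\beta_t+\dot\gamma_t))<0$, so variation of constants makes $b_t$ a strictly positive integral. Note this step needs only the weaker condition $2\beta+\dot\gamma>0$, as appropriate for~(i).

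Part~(ii), now under $\beta+\dot\gamma>0$, splits into two coupled pairs. For $a_t>0$ and $B_t<0$ I would write the simultaneous representations
\[
 a_t=\int_t^T e^{-\int_t^s\mu}\lambda_s(\beta_s+\dot\gamma_s)(-B_s)\,ds,\qquad
 B_t=-e^{-\int_t^T\beta}-\int_t^T e^{-\int_t^s\beta}\eps^{-1}_s a_s b_s\,ds,
\]
with $\mu_t=\eps^{-1}_t(a_t+\lambda_t b_t)$, obtained from $a_T=0$, $B_T=-1$. On any interval $(t_0,T)$ on which both $a>0$ and $B<0$ hold, these integrals are \emph{strictly} signed at $t_0$ (using $b>0$ from~(i) and $\beta+\dot\gamma>0$); since both hold just below $T$, a last-failure-time argument rules out any first breakdown and gives $a_t>0$, $B_t<0$ on $[0,T)$. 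The pair $p_t:=a_t-d_t$ and $n_t:=E_t-B_t$ is treated identically from $\dot p_t=[\eps^{-1}(a+\lambda b)+\theta]p_t-\lambda(\beta+\dot\gamma)n_t$ and $\dot n_t=(\theta+\beta)n_t-\eps^{-1}b\,p_t$ with $p_T=0$, $n_T=1$, yielding $p_t>0$ and $n_t>0$. Combining the two pairs is exactly $a_t>\max(d_t,0)$ and $B_t<\min(E_t,0)$.

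For the comparison in~(iii), since $a,b$ are $\theta$-independent I subtract the two Riccati systems to obtain a \emph{linear} system for $(\Delta d,\Delta E):=(d-\tilde d,E-\tilde E)$ with terminal value $0$ and forcing terms $-(\theta-\tilde\theta)(a-\tilde d)$ and $+(\theta-\tilde\theta)(\tilde E-B)$; by part~(ii) applied to $\tilde\theta$ the factors $a-\tilde d\ge0$ and $\tilde E-B\ge0$, so these forcings are sign-definite. Reversing time turns this into $\dot{\mathbf y}_u=N_u\mathbf y_u+\mathbf r_u$ with $\mathbf y_0=0$ and $\mathbf r_u\ge0$ componentwise, where $N_u$ is a Metzler matrix since its off-diagonal entries $\lambda(\beta+\dot\gamma)$ and $\eps^{-1}b$ are nonnegative (both $\beta+\dot\gamma>0$ and $b>0$ enter here); nonnegativity of the flow of a cooperative linear system then gives $\mathbf y_u\ge0$, i.e.\ $d\ge\tilde d$ and $E\le\tilde E$ on $[0,T]$, and $\tilde\theta\equiv0$ (so $\tilde D=\tilde E\equiv0$) yields the stated special case. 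I expect the crux to be the coupled \emph{strict} inequalities in~(ii): because $a>0$ and $B<0$ (and likewise $p,n$) each depend on the other over the entire remaining interval, a sequential or pointwise-derivative argument is fragile with merely measurable coefficients, and it is the simultaneous variation-of-constants/last-failure-time packaging—mirrored by the Metzler structure in~(iii)—that makes the signs self-consistent.
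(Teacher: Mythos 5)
Your proposal is correct, and for parts~(ii) and~(iii) it is in substance the paper's own argument: the paper works with exactly your quantities ($M_t=A_t+\lambda_t B_t=a_t$, $B^\dag_t=-B_t$, $V^\dag_t=a_t-(D_t+\lambda_t E_t)=p_t$, $W^\dag_t=E_t-B_t=n_t$, with the same ODE systems, cf.~\eqref{eq:ODE-VW}), establishes positivity of each coupled pair by showing the backward vector field is inward-pointing on the first quadrant and then upgrading to strict inequalities via variation of constants --- logically the same continuation argument as your integral-representation/last-failure-time packaging --- and proves~(iii) by a componentwise (Kamke--M\"uller type) comparison of the two quasimonotone systems, which is equivalent to your ``subtract and use positivity of the cooperative linear flow'' device. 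Where you genuinely depart from the paper is part~(i): the paper gets nonnegative definiteness from Wonham's matrix-Riccati theorem (inside the proof of Proposition~\ref{pr:riccati}) and the bound $C_t\leq\lambda_t^{-1}$ from a scalar ODE for $C^\dag_t:=\lambda_t^{-1}-C_t$ with nonnegative forcing, whereas you read both off the control representation of Proposition~\ref{pr:auxControl}: nonnegativity of the value function (with $\sigma\equiv0$, $z=0$) gives the PSD property, and the do-nothing control $q\equiv0$ has cost exactly $y^2/(2\lambda_t)$ by your telescoping identity, whence $C_t\leq\lambda_t^{-1}$. Both computations check out (your telescoping identity is precisely the derivative of $e^{-2\int_t^s\beta}/\lambda_s$), there is no circularity since Proposition~\ref{pr:auxControl} nowhere uses the present statement, and your route buys an interpretation --- $y^2/(2\lambda_t)$ is the cost of pure warehousing --- at the price of leaning on the verification theorem rather than ODE theory alone; your scalar argument for $B_t+\lambda_t C_t>0$ then coincides with the paper's (your $\psi$ equals the paper's forcing $-\lambda_t[(2\beta_t+\dot\gamma_t)C^\dag_t+\beta_t C_t]$). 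One presentational correction in~(iii): as literally written, the time-reversed system for $(\Delta d,\Delta E)=(d-\tilde d,E-\tilde E)$ is \emph{not} cooperative and its forcing is not componentwise nonnegative; you must flip the second coordinate and take $\mathbf{y}=(d-\tilde d,\;\tilde E-E)$, which is evidently what you intend given the Metzler entries $\lambda_t(\beta_t+\dot\gamma_t)$, $\eps_t^{-1}b_t$ you cite and the conclusion $E\leq\tilde E$ you draw, but the flip should be stated.
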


The proof is stated in Appendix~\ref{se:proofs}. The work \cite{GraeweHorst.17} on stochastic resilience (i.e., stochastic $\beta_{t}$) features a backward stochastic differential system with singular terminal condition that plays a similar role as our (much simpler) ODE system~\eqref{eq:riccati}. We have benefited from several ideas in~\cite{GraeweHorst.17} for proving Proposition~\ref{pr:riccatiAdditionalProperties}.

\begin{remark}[Transaction-triggered price manipulation]\label{rk:transactionTriggered}
Under the condition $\beta_{t} + \dot\gamma_{t} > 0$, Proposition~\ref{pr:riccatiAdditionalProperties} yields in particular that $f_{t}<0$ for $t \in [0, T)$, which is the result reported in Proposition~\ref{pr:properties}\,(i). Then, a larger positive outstanding position leads to faster selling and a larger negative position leads to faster buying. If $\beta_{t} + \dot\gamma_{t} < 0$, the proof of Proposition~\ref{pr:riccatiAdditionalProperties} shows that $f_{t}>0$ for $t\in[t_{0},T)$, for some $t_{0}<T$. Thus, the opposite behavior occurs: given a large positive outstanding position, the optimal strategy will initially add to the position, only to unwind later. This corresponds to the transaction-triggered price manipulation discussed in detail in \cite[Theorem~8.4 and Corollary~8.5]{FruthEtAl.2013}. While the weaker condition $2\beta + \dot \gamma \geq 0$ from~\eqref{eq:noarbCondLambda} rules out profitable round trips from zero initial position, it does not rule out profitable round trips from non-zero initial position.

If this seems unintuitive, considering the boundary case where $\beta_{t} + \dot{\gamma}_{t} = 0$ for $t\in[0,T]$ may be helpful. In this regime, the benefit of trading earlier to exploit the reversion~$\beta_{t}$ of the impact process exactly offsets the benefit of trading later to exploit the increase in liquidity (decay of $\lambda_{t}$) over time. Specifically, the proof of Proposition~\ref{pr:riccatiAdditionalProperties} shows that $f_{t} = h_{t} = 0$, whereas $g_{t} < 0$ for $t \in [0,T)$. Thus, the optimal trading speed~\eqref{eq:opt alpha} only depends on the current impact state, not on the current outstanding position. If the impact state at~$t$ is zero, no trading occurs over $[t, T)$; if the impact state at $t$ is positive (negative), the optimal policy sells (buys) over $[t, T)$, to exploit the reversion of the impact process. In either case, the remaining outstanding position is unwound in the closing auction.
\end{remark}

\begin{remark}[Practical use of no-price manipulation conditions]\label{rk:practicalPriceManipulationCond}
Due to our framework's inherent tractability, practitioners can directly apply no-price manipulation conditions on live data. Indeed, the condition only involves the liquidity curves $\beta, \lambda$, which the trading team typically builds from volume predictions or order book data. For instance, see \cite{Cont2014} and \cite{JMK2022b}, which fit parametric models of $\lambda$ as functions of order book depth and market volumes, respectively.

Intuitively, %
the condition states that
``liquidity cannot increase faster than impact reverts without triggering price manipulation opportunities.''
If a given liquidity forecast breaches a no-price manipulation condition, traders can take two actions:
\begin{enumerate}
\item Switch to a simpler externalization strategy, e.g., VWAP or a constant parameter OW model. This avoids having the algorithm engage in odd trading patterns due to potential liquidity mirages.
\item Investigate the liquidity surge that triggered the breach: Is there a data or model problem, or is the sudden liquidity caused by abnormal market behavior, e.g., order spoofing?
\end{enumerate}

In that sense, no-price manipulation conditions are to trading algorithms as no-arbitrage conditions are to option pricing models. Models trade under the assumption they don't exist, but traders monitor the market for unexplained behavior.
\end{remark}

\begin{remark}[Cost on Inventory]\label{rk:riskAversion}
  In the literature on optimal execution, a standard way of including ``risk aversion'' is to add a running cost of the form $\int_{0}^{T} \alpha  X_{t}^2 dt$ in the objective function. This form is tractable as it is compatible with the linear-quadratic framework. Recall however that~$X_{t}$ denotes the number of shares held, rather than their value, so that the interpretation as a risk term is not immediate. In~\cite{almgren.chriss.01}, $\alpha=\tilde\alpha \tilde{\sigma}^{2}$ where $\tilde{\sigma}$ is the asset's instantaneous volatility and $\tilde\alpha\geq0$ is a parameter to be chosen. Then, $\alpha  X_{t}^2$ is the instantaneous squared volatility of the value of the portfolio, hence indeed a measure of risk. In any event, a running cost with $\alpha>0$ incentivizes faster unwinding. %
  
  More generally, let $\alpha: [0,T] \to [0, \infty)$ be a bounded measurable function. We describe how our formulas change when a running cost $\int_{0}^{T} \alpha_{t}  X_{t}^2 dt$ is added to the objective. In fact, the conclusion of Proposition~\ref{pr:auxControl} remains unchanged, except that the definition of the functions $A_{t}, B_{t},\dots$ changes. Namely, in Proposition~\ref{pr:riccati}, the equation for $A$ becomes 
$\dot A_{t} = \eps^{-1}_{t} (A_{t} + \lambda_{t} B_{t})^2 - 2\alpha_{t}$. This is seen by inspecting the proof of Proposition~\ref{pr:auxControl}. As a consequence, the functions $A_{t}, B_{t},\dots$, and then also the coefficients $f_{t},g_{t},h_{t}$ of the strategy, depend on $\alpha_{t}$. It remains true that the in-flow volatility $\sigma_{t}$ only affects $K_{t}$, and hence does not affect the feedback form of the strategy.

\end{remark} 

Next, we return to the derivation of the optimal strategy and cost. 
We can now deduce the optimal expected cost when the block trade in the opening auction is prescribed to be of size $j_{0}\in\R$, by evaluating the auxiliary value function~\eqref{eq:ansatz} at the corresponding initial conditions and adding back the constant terms of~\eqref{eq:costReformulated}. This yields the following.

\begin{corollary} \label{cor:fixedJ0}
    Given $j_{0} \in \R$, the optimal execution cost is %
    \begin{align}
        \inf_{\substack{(J_{0}, q, J_{T}) \in \cA \\ J_{0} = j_{0}}} \, & \cC(J_{0}, q, J_{T})
        =  v(0, j_{0} - z, y + \lambda_{0-} j_{0}, z) \nonumber \\
        &+ \frac{1}{2} \Big\{\left( {\lambda^{-1}_{0-}} - {\lambda^{-1}_{0}} \right) (y + \lambda_{0-} j_{0})^{2} 
        - \lambda^{-1}_{0-} y^{2}
        + \E[S_{T}Z_{T}] \Big\}. \label{eq:optCostJ0}
    \end{align}
    The optimal trading speed in feedback form is given by~\eqref{eq:opt alpha}, with the state processes started at time~$0$ with the initial condition $(x,y,z)=(j_{0}-z,y+\lambda_{0-}j_{0},z)$.
\end{corollary}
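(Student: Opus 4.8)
The plan is to combine Proposition~\ref{pr:costReformulated}, which recasts the execution cost in linear--quadratic form, with Proposition~\ref{pr:auxControl}, which solves the associated control problem. First I would fix $J_{0}=j_{0}$ and apply Proposition~\ref{pr:costReformulated} to write $\cC(j_{0},q,J_{T})$ in terms of the state processes~\eqref{eq:forward Yc}. Prescribing $J_{0}=j_{0}$ pins down their initial data to $(X_{0},Y^{c}_{0},Z_{0})=(j_{0}-z,\,y+\lambda_{0-}j_{0},\,z)$. Consequently the terms $(\lambda_{0-}^{-1}-\lambda_{0}^{-1})(Y^{c}_{0})^{2}-y^{2}/\lambda_{0-}+S_{T}Z_{T}$ in~\eqref{eq:costReformulated} are independent of the control~$q$, and their expectation equals exactly the constant $\tfrac{1}{2}\{(\lambda_{0-}^{-1}-\lambda_{0}^{-1})(y+\lambda_{0-}j_{0})^{2}-\lambda_{0-}^{-1}y^{2}+\E[S_{T}Z_{T}]\}$ displayed in~\eqref{eq:optCostJ0}.

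The remaining, $q$-dependent part of~\eqref{eq:costReformulated} is precisely the objective functional appearing in the definition~\eqref{valueFunDef} of the value function, started at time $t=0$ from $(j_{0}-z,\,y+\lambda_{0-}j_{0},\,z)$ (the $\tfrac{1}{2}$ prefactors match). Taking the infimum over the control and invoking Proposition~\ref{pr:auxControl} then returns $v(0,j_{0}-z,y+\lambda_{0-}j_{0},z)$, attained by the feedback~\eqref{eq:opt alpha} evaluated along the state processes~\eqref{eq:forward} started at $(x,y,z)=(j_{0}-z,y+\lambda_{0-}j_{0},z)$; adding back the constant yields~\eqref{eq:optCostJ0} together with the asserted feedback form.

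The one step that needs care is reconciling the admissibility classes so that the infimum over $\{(j_{0},q,J_{T})\in\cA\}$ coincides with the infimum over $q\in\cQ$. The inequality ``$\geq$'' is immediate: any $(j_{0},q,J_{T})\in\cA$ has $q\in\cQ$ by the requirement $\E[\int_{0}^{T}q^{2}_{t}\,dt]<\infty$, so its cost is bounded below by the constant plus $v(0,\dots)$. For ``$\leq$'' I would take the unique optimal feedback control $q^{*}$ from Proposition~\ref{pr:auxControl}, set $J_{T}^{*}:=Z_{T}-j_{0}-\int_{0}^{T}q^{*}_{t}\,dt$ via~\eqref{eq:admConstraint}, and verify that $(j_{0},q^{*},J_{T}^{*})\in\cA$: namely $q^{*}\in\cQ$ (the coefficients $f,g,h$ are bounded and the states have finite second moments), $J_{T}^{*}\in L^{2}(\cF_{T})$ (since $Z_{T}\in L^{2}$ and $\int_{0}^{T}q^{*}_{t}\,dt\in L^{2}$ by Jensen's inequality), and that $\int_{0}^{\cdot}Q^{c,*}_{s}\,dS_{s}$ is a true martingale, using the integrability hypothesis on~$S$ from Footnote~\ref{fot:qAdmissible}. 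Since the reformulated cost depends on~$S$ only through the $q$-independent term $\E[S_{T}Z_{T}]$, this martingale condition never enters the $q$-minimization and only has to be confirmed once for $q^{*}$; I expect this verification to be the sole nontrivial point, everything else being direct substitution of the initial conditions.
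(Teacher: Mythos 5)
Your proposal is correct and takes essentially the same route as the paper, which derives Corollary~\ref{cor:fixedJ0} without a separate proof precisely by combining Proposition~\ref{pr:costReformulated} with Proposition~\ref{pr:auxControl}: evaluating the auxiliary value function~\eqref{eq:ansatz} at the initial conditions $(j_{0}-z,\,y+\lambda_{0-}j_{0},\,z)$ and adding back the constant terms of~\eqref{eq:costReformulated}. Your reconciliation of the two admissibility classes, in particular verifying that the optimal feedback control yields an admissible triplet, is exactly the point the paper relegates to Footnote~\ref{fot:qAdmissible}.
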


Finally, we determine the optimal size $J_{0}$ for the block trade in the opening auction.

\begin{proposition} \label{prop:optJ0}
    The optimal initial block trade is 
    \begin{align}\label{eq:optJ0}
        J_{0} &=  r^{-1}(g_{0-} + \eta_{0-}) y + r^{-1}(-f_{0-} + h_{0-}) z,
    \end{align}
    where $\eta_{0-} := -\eps^{-1}_{0} (1 - {\lambda_{0-}}/{\lambda_{0}}) \leq 0$, 
    \begin{align*}
     g_{0-} + \eta_{0-} < 0 \qandq  r:= -f_{0-} - \lambda_{0-} \left(g_{0-} + \eta_{0-} \right) > 0.
    \end{align*}
    If $\beta_{t} + \dot \gamma_{t} > 0$ for $t \in [0, T]$, then $-f_{0-} + h_{0-} > 0$.
\end{proposition}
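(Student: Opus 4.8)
The plan is to treat the right-hand side of~\eqref{eq:optCostJ0} as a scalar quadratic function $\Psi(j_{0})$ of the opening block size and minimize it by the first-order condition. Writing $x=j_{0}-z$ and $w=y+\lambda_{0-}j_{0}$ for the two initial states fed into~$v$, and recalling that $v(0,\cdot,\cdot,\cdot)$ is the quadratic~\eqref{eq:ansatz} with coefficients $A_{0},B_{0},\dots$, I would differentiate in $j_{0}$ using $\partial_{j_{0}}x=1$ and $\partial_{j_{0}}w=\lambda_{0-}$. Collecting the coefficients of $x,w,z$ and invoking the $0-$ definitions $f_{0-}=-\eps^{-1}_{0}(A_{0}+\lambda_{0-}B_{0})$, $g_{0-}=-\eps^{-1}_{0}(B_{0}+\lambda_{0-}C_{0})$, $h_{0-}=-\eps^{-1}_{0}(D_{0}+\lambda_{0-}E_{0})$ from~\eqref{eq:fghFormulas} together with $\eta_{0-}=-\eps^{-1}_{0}(1-\lambda_{0-}/\lambda_{0})$, the derivative collapses to
\[
 \Psi'(j_{0}) = -\eps_{0}\big[f_{0-}\,x + (g_{0-}+\eta_{0-})\,w + h_{0-}\,z\big].
\]
Substituting $x=j_{0}-z$ and $w=y+\lambda_{0-}j_{0}$ makes $\Psi'$ affine in $j_{0}$ with slope $-\eps_{0}\big(f_{0-}+\lambda_{0-}(g_{0-}+\eta_{0-})\big)=\eps_{0}r$; solving $\Psi'(j_{0})=0$ then yields exactly~\eqref{eq:optJ0}. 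Since $\Psi''\equiv\eps_{0}r$ is constant, strict convexity (hence a unique minimizer at this critical point) is equivalent to $r>0$.

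It remains to verify the three sign assertions, which is where the subtlety lies: the feedback coefficients at $0-$ are assembled from the time-$0$ Riccati values $A_{0},B_{0},C_{0},D_{0},E_{0}$ but with the auction parameter $\lambda_{0-}$ in place of $\lambda_{0}$, so the inequalities of Proposition~\ref{pr:riccatiAdditionalProperties} (stated with $\lambda_{0}$) must be transported across the gap $\lambda_{0-}\le\lambda_{0}$ from~\eqref{eq:noarbCondLambda}. For $g_{0-}+\eta_{0-}<0$ I would note this is equivalent to $(B_{0}+\lambda_{0-}C_{0})+(1-\lambda_{0-}/\lambda_{0})>0$; viewing the left side as a function of $\mu:=\lambda_{0-}\in(0,\lambda_{0}]$ it equals $B_{0}+1+\mu(C_{0}-\lambda^{-1}_{0})$, which is non-increasing in~$\mu$ because $C_{0}\le\lambda^{-1}_{0}$ by Proposition~\ref{pr:riccatiAdditionalProperties}(i); hence its minimum over $\mu\le\lambda_{0}$ is attained at $\mu=\lambda_{0}$, where it equals $B_{0}+\lambda_{0}C_{0}>0$ by the same proposition.

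For $r>0$ I would expand $r=\eps^{-1}_{0}\big[(A_{0}+2\lambda_{0-}B_{0}+\lambda^{2}_{0-}C_{0})+\lambda_{0-}(1-\lambda_{0-}/\lambda_{0})\big]$, recognizing the first bracketed term as the quadratic form $(1,\lambda_{0-})\,M\,(1,\lambda_{0-})^{\top}\ge0$ for the nonnegative-definite matrix $M=\left[\begin{smallmatrix}A_{0}&B_{0}\\B_{0}&C_{0}\end{smallmatrix}\right]$, and the second as $\ge0$ since $\lambda_{0-}\le\lambda_{0}$. If $r=0$, both terms vanish; vanishing of the second forces $\lambda_{0-}=\lambda_{0}$, while for nonnegative-definite $M$ the vanishing of the quadratic form forces $M\,(1,\lambda_{0-})^{\top}=0$, in particular $B_{0}+\lambda_{0-}C_{0}=B_{0}+\lambda_{0}C_{0}=0$, contradicting $B_{0}+\lambda_{0}C_{0}>0$; hence $r>0$. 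Finally, when $\beta_{t}+\dot\gamma_{t}>0$, I would write $-f_{0-}+h_{0-}=\eps^{-1}_{0}\big[(A_{0}-D_{0})+\lambda_{0-}(B_{0}-E_{0})\big]$ and invoke Proposition~\ref{pr:riccatiAdditionalProperties}(ii): there $B_{0}-E_{0}<0$, so replacing $\lambda_{0-}$ by the larger $\lambda_{0}$ only decreases the expression, giving $-f_{0-}+h_{0-}\ge\eps^{-1}_{0}\big[(A_{0}+\lambda_{0}B_{0})-(D_{0}+\lambda_{0}E_{0})\big]>0$, again by part~(ii). The main obstacle throughout is precisely this $\lambda_{0-}$-versus-$\lambda_{0}$ bookkeeping and the degenerate case $\lambda_{0-}=\lambda_{0}$, both resolved by the strict inequality $B_{0}+\lambda_{0}C_{0}>0$.
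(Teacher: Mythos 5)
Your proposal is correct, and its overall skeleton matches the paper's proof: both treat the cost~\eqref{eq:optCostJ0} as a scalar quadratic in $j_{0}$, differentiate, regroup the coefficients via the $0-$ feedback quantities $f_{0-},g_{0-},h_{0-},\eta_{0-}$, and solve the first-order condition to get~\eqref{eq:optJ0}; your sign arguments for $g_{0-}+\eta_{0-}<0$ and for $-f_{0-}+h_{0-}>0$ are also essentially the paper's (transporting Proposition~\ref{pr:riccatiAdditionalProperties} across the gap $\lambda_{0-}\le\lambda_{0}$; your monotonicity-in-$\mu$ phrasing is an equivalent repackaging of the paper's algebraic identity $B_{0}+\lambda_{0-}C_{0}+(1-\lambda_{0-}/\lambda_{0})=B_{0}+\lambda_{0}C_{0}+(1-\lambda_{0-}/\lambda_{0})(1-\lambda_{0}C_{0})$). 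Where you genuinely diverge is the proof that $r>0$. The paper deduces $-f_{0-}-\lambda_{0-}g_{0-}\ge 0$ from convexity and rules out equality by contradiction with Corollary~\ref{co:noFreeRoundtrips}: if the derivative vanished identically, specializing to $y=z=0$ and $\theta,\sigma\equiv 0$ would produce a costless round trip through the opening auction. You instead write $\eps_{0}r$ as the quadratic form $(1,\lambda_{0-})\,M\,(1,\lambda_{0-})^{\top}$ of the nonnegative-definite Riccati matrix plus the nonnegative term $\lambda_{0-}(1-\lambda_{0-}/\lambda_{0})$, and dispose of the degenerate case by the standard fact that a vanishing PSD quadratic form forces $M(1,\lambda_{0-})^{\top}=0$, contradicting $B_{0}+\lambda_{0}C_{0}>0$ from Proposition~\ref{pr:riccatiAdditionalProperties}(i). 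Both routes are valid. Yours is purely linear-algebraic and self-contained in the Riccati analysis; it even sidesteps a slight looseness in the paper's wording (when $\lambda_{0-}<\lambda_{0}$, the total cost of the putative round trip is not literally zero because of the extra term $\tfrac12(\lambda_{0-}^{-1}-\lambda_{0}^{-1})(\lambda_{0-}j_{0})^{2}$; the contradiction really comes from the strict positivity of the inner value $v$ itself, as in the corollary's proof). What the paper's route buys instead is the economic interpretation: $r>0$ is exactly the statement that no free round trip through the opening auction exists, tying the well-posedness of the block-trade optimization to the no-arbitrage condition~\eqref{eq:noarbCondLambda}.
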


The proof is stated in Appendix~\ref{se:proofs}.
We see that the coefficients of~$y$ and~$z$ in~\eqref{eq:optJ0} resemble the ones of the trading speed~$q$ and have similar comparative statics (cf.\ Proposition~\ref{pr:properties}). %
That the coefficient of~$z$ contains~$f$ and not only~$h$, is merely a consequence of our notation: the initial in-flow $z$ also represents an outstanding position of $x:=-z$ at the beginning of the auction, and now $-f_{0-}z=f_{0-}x$ is analogous to the formula of~$q$.

\subsection{In-Flows with Jumps}\label{se:jumps}

In this section we generalize our results to more general in-flow processes
    \begin{equation}\label{eq:ZgenMart}
      dZ_{s} = -\theta_{s} Z_{s} \, ds + dM_{s}, \qquad Z_{0} = z,
    \end{equation}
where $M_{t}$ is a general martingale instead of $\int_{0}^{t} \sigma_{s}\, dW_{s}$. Specifically, allowing a process with jumps is important as a CRB's in-flow may contain large block orders, and market makers may receive large block trades from dark pools. Self-exciting (hence non-Markovian) dynamics are also relevant. The main take-away is that the formula for the optimal strategy remains unchanged (similarly as, e.g., in \cite{garleanu.pedersen.16}), whereas the expression of the optimal expected cost involves an integral of the quadratic variation of~$M$. This generalizes our previous observations in Proposition~\ref{pr:properties}, where the optimal strategy was independent of~$\sigma$ while the cost was an integral against $\sigma^{2}_{s}\,ds$ (as seen in the formula for $K$ in~\eqref{eq:riccati}).

Let $M$ be a c\`adl\`ag, square-integrable martingale with $M_{0}=0$.
 We  consider a variant of the control problem in Proposition~\ref{pr:auxControl}, now only for the initial time $t=0$ (since $M$ was not assumed to be Markovian). To state its solution, we define the deterministic functions $A_{t},\dots,E_{t}$ as in~\eqref{eq:riccati}, whereas $K_{t}$ is generalized to the stochastic process
\begin{equation}\label{eq:KgenMart}
      K_{t} := \E\bigg[ \int_{t}^{T}  \frac12 (A_{s} - 2 D_{s} + F_{s}) \,d[M,M]_{s} \,\bigg|\, \cF_{t}\bigg].
\end{equation}
At a technical level, we also need to amend the definition of an admissible strategy and impose a mild integrability condition (see the proof).

\begin{proposition}\label{pr:genMart}
  Define the state processes $X,Y^{c}$ as in~\eqref{eq:forward} while~$Z$ is generalized to~\eqref{eq:ZgenMart}. Moreover, define $A,\dots,E$ as in Proposition~\ref{pr:riccati} whereas $K$ is generalized to~\eqref{eq:KgenMart}.
  \begin{enumerate}
  \item The problem~\eqref{valueFunDef} admits a unique optimal strategy $q^{*}$, given by the same feedback function as in~\eqref{eq:opt alpha}.        
  \item The value of~\eqref{valueFunDef} is given by
    \begin{align*}
    V_{0} &:= \frac{1}{2} A_{0} x^{2} + B_{0} x y + \frac{1}{2} C_{0} y^{2} + D_{0} x z + E_{0} y z + \frac{1}{2} F_{0} z^{2} + K_{0}.
    \end{align*}    
 \end{enumerate}  
\end{proposition}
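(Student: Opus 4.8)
The plan is to prove both assertions simultaneously by a verification argument based on completion of squares, carried out pathwise along the controlled trajectory rather than through an HJB equation at intermediate times; the latter is unavailable because $M$ is not assumed Markovian, which is precisely why~\eqref{valueFunDef} is posed only at $t=0$. The deterministic Riccati coefficients $A,\dots,F$ solving~\eqref{eq:riccati} and the feedback maps $f,g,h$ are unchanged from Proposition~\ref{pr:auxControl}, so the algebra of the completion of squares is identical; the only genuinely new object is the running-cost-of-noise term, which is now the conditional-expectation process $K$ of~\eqref{eq:KgenMart} rather than a deterministic function. First I would fix an admissible $q$ and introduce the candidate value process
\[
 V_{s} := \tfrac12 A_{s} X_{s}^{2} + B_{s} X_{s} Y^{c}_{s} + \tfrac12 C_{s} (Y^{c}_{s})^{2} + D_{s} X_{s} Z_{s} + E_{s} Y^{c}_{s} Z_{s} + \tfrac12 F_{s} Z_{s}^{2} + K_{s}.
\]

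The core computation is to apply Itô's formula for semimartingales to the quadratic form $\Phi_{s}:=V_{s}-K_{s}$ along the dynamics~\eqref{eq:forward}/\eqref{eq:ZgenMart}. Since the martingale parts of $X$ and $Z$ are $-M$ and $+M$ respectively while $Y^{c}$ has finite variation, the full second-order contribution (continuous covariation together with the exact jump compensator of the quadratic $\Phi$) collapses to $\tfrac12(A_{s}-2D_{s}+F_{s})\,d[M,M]_{s}$. By the very definition~\eqref{eq:KgenMart}, the process $s\mapsto K_{s}+\int_{0}^{s}\tfrac12(A_{u}-2D_{u}+F_{u})\,d[M,M]_{u}$ is a martingale, so $dK_{s}=-\tfrac12(A_{s}-2D_{s}+F_{s})\,d[M,M]_{s}+dN_{s}$ for a martingale $N$. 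Adding $dK$ therefore cancels exactly the $d[M,M]$ term, and what remains of $dV_{s}$ is a deterministic-coefficient drift in $(X_{s},Y^{c}_{s},Z_{s})$ plus a local martingale. Feeding the Riccati ODEs~\eqref{eq:riccati} into that drift reproduces verbatim the completion-of-squares identity from the proof of Proposition~\ref{pr:auxControl},
\[
 dV_{s} = -\tfrac12\Big[\tfrac{2\beta_{s}+\dot\gamma_{s}}{\lambda_{s}}(Y^{c}_{s})^{2}+\eps_{s}q_{s}^{2}\Big]\,ds + \tfrac12\eps_{s}\big(q_{s}-q^{*}_{s}\big)^{2}\,ds + d(\text{loc.\ mart.}),
\]
with $q^{*}_{s}=f_{s}X_{s}+g_{s}Y^{c}_{s}+h_{s}Z_{s}$ as in~\eqref{eq:opt alpha}.

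Next I would integrate over $[0,T]$. The terminal conditions $A_{T}=\lambda_{T}$, $B_{T}=-1$, $C_{T}=\lambda_{T}^{-1}$, $D_{T}=E_{T}=F_{T}=0$ and $K_{T}=0$ give $V_{T}=\tfrac1{2\lambda_{T}}(\lambda_{T}X_{T}-Y^{c}_{T})^{2}$, exactly the terminal cost in~\eqref{valueFunDef}, while $V_{0}$ equals the claimed expression. Taking expectations and using the integrability hypothesis to kill the local-martingale terms yields
\[
 \tfrac12\E\Big\{\int_{0}^{T}\Big[\tfrac{2\beta_{s}+\dot\gamma_{s}}{\lambda_{s}}(Y^{c}_{s})^{2}+\eps_{s}q_{s}^{2}\Big]ds + \tfrac1{\lambda_{T}}(\lambda_{T}X_{T}-Y^{c}_{T})^{2}\Big\} = V_{0} + \tfrac12\,\E\int_{0}^{T}\eps_{s}\big(q_{s}-q^{*}_{s}\big)^{2}\,ds.
\]
Since $\eps$ is bounded away from $0$, the right side is minimized, with value $V_{0}$, precisely when $q=q^{*}$ $ds\otimes\mathbb{P}$-a.e.; this gives simultaneously the value formula in~(ii), the optimality of the feedback control in~(i), and its uniqueness. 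It remains only to note that substituting $q=q^{*}$ into~\eqref{eq:forward}/\eqref{eq:ZgenMart} yields a linear SDE driven by $M$ with a unique square-integrable solution, so $q^{*}$ is admissible.

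The main obstacle, and the reason for the amended admissibility/integrability condition, is the rigorous justification that the local-martingale terms in the Itô expansion are true martingales with vanishing expectation: for a general, possibly discontinuous, square-integrable driver $M$ these are stochastic integrals of $(A_{s}X_{s}+B_{s}Y^{c}_{s}+D_{s}Z_{s})$ and the analogous coefficients against $dM$, plus the martingale $N$ coming from $K$. Controlling them requires an $L^{2}$ bound on $(X,Y^{c},Z)$ together with $\E[[M,M]_{T}]<\infty$. I would handle this by a standard localization, stopping at $\tau_{n}=\inf\{s:|X_{s}|+|Y^{c}_{s}|+|Z_{s}|>n\}$, taking expectations on $[0,T\wedge\tau_{n}]$, and passing to the limit by dominated/monotone convergence once the $L^{2}$ estimates (which follow from $q\in\cQ$, boundedness of the coefficients, and square-integrability of $M$) are in hand. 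Everything else is the same algebra as in the Brownian proof, now read through $d[M,M]_{s}$ in place of $\sigma^{2}_{s}\,ds$.
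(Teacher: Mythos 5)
Your proposal is correct and follows essentially the same route as the paper's own proof: a verification (martingale optimality) argument applying It\^o's formula to the candidate value process $V_{s}=v(s,X_{s},Y^{c}_{s},Z_{s})$ with the stochastic coefficient $K$, where the key new ingredient in both arguments is reading~\eqref{eq:KgenMart} as the generalized backward equation $dK_{s}=-\tfrac12(A_{s}-2D_{s}+F_{s})\,d[M,M]_{s}+dN_{s}$ so that the $d[M,M]$ contribution cancels exactly. Your explicit completion-of-squares identity is just the quantitative form of the paper's statement that the cost-plus-value process is a local submartingale for every admissible $q$ and a local martingale for $q^{*}$, and it yields optimality, the value $V_{0}$, and uniqueness in one stroke, with the same localization/integrability bookkeeping the paper defers to its choice of admissibility class.
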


The proof is sketched in Appendix~\ref{se:proofs}.

\subsection{Closed-form Solution for Constant Liquidity}\label{se:cstLiq}

In this section we assume that the liquidity parameters $\beta,\lambda,\eps\in (0,\infty)$ are constant (and in particular $\lambda_{0-} = \lambda$). The coefficients $\theta_{t},\sigma_{t}$ of the in-flow can still be time-dependent. In that setting, we can obtain the optimal strategy in closed form. This solution is based on a different approach, the stochastic maximum principle, which avoids the Riccati system~\eqref{eq:riccati}. On an abstract level, the dynamic programming approach implies the conclusion of the stochastic maximum principle in our context. However, the stochastic maximum principle focuses on the optimal strategy along the optimal state trajectory, and sometimes leads to a closed-form solution even if dynamic programming does not. 

\begin{theorem} \label{th:constParam}
   Let $\beta,\lambda,\eps\in (0,\infty)$ be constant and define also
\begin{equation*}%
      \teps := \frac{\eps \beta}{2 \lambda}, \qquad \kappa := \beta \sqrt{1 + \teps^{-1}} > \beta.
\end{equation*}
Then the functions $f_{t}, g_{t}, h_{t}$ of Theorem~\ref{th:main} are given by
\begin{align}
    f_{t} &= \tilde{f}_{t} (e^{\kappa (T-t)} - 1) / d_{t},   \\
    g_{t} &= \tilde{g}_{t} (e^{\kappa (T-t)} - 1) / d_{t},   \\
    h_{t} &= f_{t} (1-e^{- \int_{t}^{T} \theta_{u} \, du})
\end{align}
for all $t\in[0,T]$, where
\begin{align*}
    \tilde{f}_{t} :&= -[\beta^{-1} - (\kappa +\beta)^{-1}] e^{-\kappa  (T-t)} - [\beta^{-1} + (\kappa - \beta)^{-1}], \\
    \tilde{g}_{t} :&= \lambda^{-1} \tilde{f}_{t} - \lambda^{-1}(1 + e^{-\kappa (T-t)}) (T-t) + 2 \lambda^{-1} \kappa^{-1} (1 - e^{-\kappa (T-t)}), \\
    d_{t} :&= e^{\kappa (T-t)} \left\{ (\kappa - \beta)^{-1}[(\kappa-\beta)^{-1} + \beta^{-1} - \kappa^{-1}] + \beta^{-1} \kappa^{-1} \right\} \nonumber \\
    &\quad+ e^{-\kappa (T-t)} \left\{ (\kappa +\beta)^{-1}[-(\kappa+\beta)^{-1} + \beta^{-1} + \kappa^{-1}] + \beta^{-1} \kappa^{-1} \right\} \nonumber \\
    &\quad+ (T-t) e^{\kappa (T-t)} (\kappa - \beta)^{-1}
    + (T-t) e^{-\kappa (T-t)} (\kappa +\beta)^{-1}
    + 4 \teps \beta^{-1} \kappa^{-1}.
\end{align*}
Moreover, $f_{0-}=f_{0}$, $g_{0-}=g_{0}$, $h_{0-}=h_{0}$.
\end{theorem}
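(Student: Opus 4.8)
The plan is to apply the stochastic maximum principle to the linear-quadratic problem of Proposition~\ref{pr:auxControl} and to solve the resulting Hamiltonian system explicitly, exploiting that all liquidity parameters are now constant. Writing $p_{1},p_{2},p_{3}$ for the adjoint processes associated with the states $X,Y^{c},Z$, the Hamiltonian is affine in the control apart from the term $\tfrac12\eps q^{2}$, so the first-order condition yields the optimal speed as $q^{*}=-\eps^{-1}(p_{1}+\lambda p_{2})=:-\eps^{-1}U$. Two structural facts drive the computation. First, since neither the running cost nor any drift depends on $X$, the Hamiltonian is independent of $x$, so $p_{1}$ is a martingale; in the deterministic subproblem it is therefore constant. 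Second, the terminal conditions $p_{1}(T)=\lambda X_{T}-Y^{c}_{T}$ and $p_{2}(T)=-X_{T}+\lambda^{-1}Y^{c}_{T}$ combine to the clean value $U_{T}=0$, consistent with $f_{T}=g_{T}=h_{T}=0$ (the residual inventory is cleared in the closing auction). Strict convexity under~\eqref{eq:noarbCondLambda} makes these necessary conditions also sufficient.

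To obtain $f_{t}$ and $g_{t}$ I would first treat the deterministic problem without in-flow ($z=0$), which suffices because the $X,Y^{c}$-feedback is governed by the autonomous $A,B,C$-subsystem of~\eqref{eq:riccati} and hence is insensitive to $(\theta,\sigma)$. Restarting at an arbitrary time $t$ from states $(x,y)$, the adjoint equations reduce to the constant-coefficient linear system
\[
  \dot X_{s}=-\eps^{-1}U_{s},\qquad \dot Y^{c}_{s}=-\beta Y^{c}_{s}-\lambda\eps^{-1}U_{s},\qquad \dot U_{s}=\beta U_{s}-2\beta Y^{c}_{s}-\beta c,
\]
with $X_{t}=x$, $Y^{c}_{t}=y$, $U_{T}=0$, and constant $c:=p_{1}=\lambda X_{T}-Y^{c}_{T}$. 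The $(Y^{c},U)$-block has characteristic roots $\pm\kappa$, where $\kappa^{2}=\beta^{2}+2\lambda\beta/\eps$ matches $\kappa=\beta\sqrt{1+\teps^{-1}}$; this is the source of the factors $e^{\pm\kappa(T-t)}$ in the statement. I would superpose the constant particular solution (forced by $c$) with the two exponential modes, integrate $\dot X=-\eps^{-1}U$ for $X_{T}$, and then close the system through the self-consistency relation $c=\lambda X_{T}-Y^{c}_{T}$, which fixes $c$ as a linear function of $(x,y)$. Substituting back expresses $U_{t}$ linearly in $(x,y)$; the coefficients of $x$ and $y$, rescaled by $-\eps^{-1}$, yield the claimed formulas with common denominator $d_{t}$ and numerators $\tilde f_{t},\tilde g_{t}$.

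For the in-flow coefficient $h_{t}$ I would avoid recomputing the adjoint $p_{3}$ directly and instead use a change of variables that absorbs the forecasted flow. Setting $\phi_{t}:=e^{-\int_{t}^{T}\theta_{u}\,du}$ and $\tilde X_{t}:=X_{t}+(1-\phi_{t})Z_{t}$, a short computation using $\dot\phi_{t}=\theta_{t}\phi_{t}$ gives $d\tilde X_{t}=q_{t}\,dt-\phi_{t}\sigma_{t}\,dW_{t}$ and $\tilde X_{T}=X_{T}$ (since $\phi_{T}=1$). Thus $(\tilde X,Y^{c})$ has exactly the drift and cost structure of the no-in-flow inventory, the extra additive diffusion feeding only into the constant $K$ and not into the feedback (certainty equivalence for linear-quadratic problems; cf.\ Proposition~\ref{pr:properties}\,(iv)). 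Consequently $q_{t}=f_{t}\tilde X_{t}+g_{t}Y^{c}_{t}=f_{t}X_{t}+g_{t}Y^{c}_{t}+f_{t}(1-\phi_{t})Z_{t}$, identifying $h_{t}=f_{t}\bigl(1-e^{-\int_{t}^{T}\theta_{u}\,du}\bigr)$. Finally, constancy of $\lambda$ gives $\lambda_{0-}=\lambda_{0}$ and $\dot\gamma\equiv0$, so all coefficients are continuous at $0$ and the conventions $\eps_{0-}=\eps_{0}$, $A_{0-}=A_{0},\dots$ yield $f_{0-}=f_{0}$, $g_{0-}=g_{0}$, $h_{0-}=h_{0}$ at once.

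The main obstacle is the bookkeeping in the second paragraph: solving the coupled two-point boundary value problem and collapsing the resulting mixture of $e^{\pm\kappa(T-t)}$, polynomial-in-$(T-t)$, and constant terms into the compact expressions $\tilde f_{t},\tilde g_{t},d_{t}$. The self-consistency step that fixes $c$ is where the algebra is heaviest, since $c$ enters both the particular solution and, through $X_{T}$, its own defining boundary condition. Throughout, a convenient sanity check is $U_{T}=0$ together with the limits $f_{t},g_{t}\to0$ as $t\to T$, which the factor $(e^{\kappa(T-t)}-1)$ in the statement makes manifest.
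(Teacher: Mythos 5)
Your proposal is correct, and its core coincides with the paper's own proof in Appendix~\ref{se:maxPrinciple}: the stochastic maximum principle with constant first adjoint (the paper's $U^{X}$, your $p_{1}=c$), the optimality condition $\eps q^{*}=-(p_{1}+\lambda p_{2})$ with $q^{*}_{T}=0$, reduction to a deterministic constant-coefficient two-point boundary value problem whose characteristic roots are $\pm\kappa$, closure through the self-consistency relation $c=\lambda X_{T}-Y^{c}_{T}$, and recovery of the feedback coefficients by evaluating the open-loop solution at the initial time $s=t$; your unknowns and conditions are exactly those of the paper's linear system~\eqref{eq:lin sys}. The one genuine difference is the treatment of the in-flow coefficient $h_{t}$. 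The paper sets $\sigma\equiv0$ (via Proposition~\ref{pr:properties}\,(iv)) but keeps $z$ in the computation, and finds that $z$ enters the linear system only through $\bar{x}=x+(1-e^{-\int_{t}^{T}\theta_{u}\,du})\,z$, so that $h_{t}=f_{t}(1-e^{-\int_{t}^{T}\theta_{u}\,du})$ falls out of the algebra. You instead absorb the projected flow up front through $\tilde X_{t}=X_{t}+(1-\phi_{t})Z_{t}$ with $\phi_{t}=e^{-\int_{t}^{T}\theta_{u}\,du}$, verify $d\tilde X_{t}=q_{t}\,dt-\phi_{t}\sigma_{t}\,dW_{t}$ and $\tilde X_{T}=X_{T}$, invoke LQ certainty equivalence to reduce to the no-in-flow problem, and identify $h_{t}=f_{t}(1-\phi_{t})$ from uniqueness of the optimal feedback. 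Your route buys something the paper's does not make explicit: the change of variables nowhere uses constancy of $(\beta,\lambda,\eps)$, so it proves the structural identity $h_{t}=f_{t}(1-e^{-\int_{t}^{T}\theta_{u}\,du})$ for arbitrary time-varying liquidity parameters (equivalently, $D_{t}=A_{t}(1-\phi_{t})$ and $E_{t}=B_{t}(1-\phi_{t})$ solve the linear $(D,E)$-part of~\eqref{eq:riccati}, as one can check directly), whereas the paper exhibits this relation only as a by-product of the constant-coefficient computation. What the paper's route buys in exchange is self-containedness: everything follows from a single explicit calculation, without a separate certainty-equivalence argument for the transformed problem (which, to be airtight in your version, requires noting that the additive noise in $\tilde X$ changes only the constant term of the quadratic value function).
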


The proof is reported in Appendix~\ref{se:maxPrinciple}.
By specifying the functions $f_{t}, g_{t}, h_{t}$, Theorem~\ref{th:constParam} gives the optimal strategy $q$ in feedback form (as well as the optimal block trade). This may seem surprising, as the maximum principle generally only yields the control in open-loop form, along the optimal trajectory. Indeed, to derive Theorem~\ref{th:constParam}, the maximum principle is carried out from an arbitrary initial state $(t,x,y,z)$ and then the feedback function is recovered by identifying the initial value of the open-loop control.

We note that this approach does not immediately yield a closed-form expression for the cost $\cC(J_{0}, q, J_{T})$ or the functions $A_{t}, B_{t},\dots$ in Theorem~\ref{th:main}. That would require an integration that we cannot carry out in closed form.

\section{Numerical Results}\label{se:numRes}

In this section, we study key features of our model and its optimal strategy by simulation. In addition, we introduce new metrics of interest to practitioners. Section~\ref{se:numSetup} describes the methodology, model parameters, and trading metrics.
The most essential trading question is how unwind strategy and cost metrics depend on the characteristics of the in-flow; namely, autocorrelation and volatility. Sections~\ref{se:numAutocorr} and~\ref{se:numVola} answer this question and show that different parameters give rise to different trading regimes ranging from optimal execution to market making. Our model covers the full range of scenarios. A crucial insight from the study is that practitioners must carefully model their in-flow to obtain useful results. Therefore, the mantra ``know your client''
is a recurring theme. Section~\ref{se:numMonoton} proposes a metric capturing how \emph{monotone} a given in-flow path is. Monotonicity, in turn, correlates strongly with internalization and trading cost under the optimal trading strategy.

Several further experiments are reported in Appendix~\ref{se:appendixNumerical}. Section~\ref{se:numDeterministic} links our model to classical (deterministic) models of optimal execution and illustrates that the Obizhaeva--Wang model is recovered for small spread cost parameter $\eps\to0$.  Section~\ref{se:numDeterministic} also studies the sensitivity with respect to~$\eps$, both in the deterministic and stochastic case. Section~\ref{se:itoLimit} analyzes the sensitivity with respect to the martingale driver of the in-flow as the martingale changes  from a finite-variation process with a small number of shocks to a continuous-time Brownian motion. Section~\ref{se:initalCondition} studies the effect of a non-zero initial impact state on the trading strategy. Lastly, Section~\ref{se:empiricalAutocorr} presents an empirical study on autocorrelation of orders on the public trading tape, confirming momentum for all S\&P 500 stocks.

\subsection{Numerical Setup}\label{se:numSetup}

We focus on constant parameters throughout our experiments. While time-varying parameters are important in a practical implementation, constant parameters allow for more interpretable results and a straightforward way of varying parameter values in our experiments. %

The theoretical results of the preceding section enable an efficient numerical implementation: the feedback form of the optimal strategy (i.e., the coefficients $f_{t},g_{t},h_{t}$ and the block trade~$J_0$) can be precomputed independently of the random inputs.  As parameters are constant, we use the closed-form formulas outlined in Section~\ref{se:cstLiq}. (The ODEs for the time-varying case are also straightforward to implement and fully precomputable). Given the optimal strategy, the simulation 
of the state processes $(X,Y,Z)$ is straightforward by discretization of their SDEs. As a result, large-scale numerical experiments using many sample paths can be run without heavy-duty hardware. The code is provided on GitHub.\footnote{\url{https://github.com/KevinThomasWebster/UnwindingStochasticOrderFlow/blob/main/simulation.ipynb}}

One difference with the main part of Section~\ref{se:theory} is that we focus on an in-flow process of finite variation. For practitioners, the most important metric in our context is the internalization rate, defined in~\eqref{eq:defInternalizationRate} below. Its formula includes a normalization by the total variation of the in-flow process, or in other words, the cumulative size of all orders received. If the in-flow is modeled using a Brownian motion driver as in Theorem~\ref{th:main}, the total variation is infinite and the internalization rate cannot be defined. Instead, our experiments use a discrete driver where the inventory is shocked by a block trade approximately every 20 minutes. We recall from Section~\ref{se:jumps} that our model is solvable for general martingale drivers. The shocks are taken to be i.i.d.\ Gaussian (any centered distribution could be used). While the main purpose is to have a well-defined internalization rate, this also adds realistic block orders to our simulation.  Section~\ref{se:itoLimit} discusses the robustness of our results with respect to this discretization.

\subsubsection{Inputs and Methodology}

We use standard units to normalize our variables across securities. The time unit is days; we set  $T =1$ so that the strategy trades daily over the interval $[0,1]$.\footnote{In our plots, we nevertheless label time with the standard 9:30--16:00 hours.} Denote by $[T]$ the time unit.
All trading quantities are expressed in percent of Average Daily Volume (ADV\%). For instance, a jump in the in-flow of 2\% means that the desk received an order of size 2\% ADV. In-flow volatility of $10\%$ indicates that the trading team expects 10\% of the ADV to go through their desk over the course of the day. We denote by $[V]$ the volume unit.

Next, we describe the trading inputs and their default values in our simulations.

\paragraph{Impact and spread parameters.} 
    \begin{enumerate}
    \item Impact decay $\beta = 8$ has unit $[T]^{-1}$ and describes the speed at which impact reverts. The corresponding half-life is approximately 40 minutes. 
    \item Kyle's lambda $\lambda = 0.2$ has unit $[V]^{-1}$ and describes an individual fill's impact.\footnote{For empirical studies, Kyle's lambda is often normalized for the stock's volatility. For our numerical study, we picked a stock with moderate volatility.}
    \item The spread cost parameter $\eps = 0.01$ has unit $[V]^{-2}[T]^{-1}$ and describes in reduced form the various instantaneous trading costs, such as the bid-ask spread and adverse selection.
    \end{enumerate}

These parameters were picked to match the overall statistics (spread and impact cost, volume on close) from Nasdaq's 2022 Intern's Guide to Trading \cite{Mackintosh2022} and align with the empirical estimates found in \cite{Chen2019} and  \cite{JMK2022b}. 

\paragraph{In-flow parameters.}

The in-flow process uses 20 equispaced i.i.d.\ Gaussian shocks (rather than a Brownian motion).\footnote{This process is not realistic for order flow; it is chosen to make the figures easy to interpret and to have a straightforward simulation code. Random arrival times and self-exciting volume would be more realistic. However, the take-aways for more complicated models would be similar as we know from our theoretical results that the optimal feedback strategy does not depend on the driving martingale.} The volatility parameter $\sigma$ controls the variance of those shocks. We express $\sigma$ in terms of daily quadratic variation to make it comparable to the Brownian motion case or a different number of intraday shocks.

    \begin{enumerate}
    \item The in-flow volatility $\sigma = 0.1$ has unit $[V] [T]^{-1/2}$ and describes in-flow shocks.
    
    \item The in-flow's autocorrelation (Ornstein--Uhlenbeck) parameter $\theta$ has unit $[V] [T]^{-1}$. We consider three cases: the default martingale case $\theta=0$, the momentum case $\theta = -1$, and the reversal case $\theta = 1$. Therefore, in the momentum case, if the desk stopped trading the out-flow, their inventory would in expectation double over $\log(2) \approx 0.69$ days. Conversely, in the reversal case, if the desk stopped trading the out-flow, their inventory would naturally internalize by half over $\log(2) \approx 0.69$ days.
    \end{enumerate}

\paragraph{Initial conditions.}
    \begin{enumerate}
        \item The initial inventory $x = -z = -0.1$ has unit $[V]$.
        \item The initial impact state $y = 0$ is unitless (but can be normalized in terms of stock volatility in an empirical study).
    \end{enumerate}
    Non-zero initial conditions matter in practice. While the market-making literature often assumes zero initial position, the desk cannot assume that its portfolio is always trading in a steady state. The optimal execution literature obviously uses non-zero initial inventory but often assumes zero initial impact. In practice, impact from the previous day's trading cannot be neglected and is similar to an alpha signal: the strategy must take the reversal of prices into account (see also Section~\ref{se:initalCondition}). Our model and the provided code allow for non-zero initial conditions.

\subsubsection{Trading Metrics}\label{se:tradingMetrics}

Following industry practice, we normalize trading metrics by the total variation of the in-flow (i.e., the cumulative order notional). This corresponds to the \emph{client's perspective} as the metric can be understood as ``per order notional.''

\paragraph{In-flow, out-flow, inventory (ADV\%).}

We plot the intraday time series of flows and inventories. As explained in the beginning of Section~\ref{se:numSetup}, we choose an in-flow with finite variation for our simulations.

\paragraph{Internalization (\%).}

The internalization rate is defined~\cite[p.\,29]{BIS.22} as 
\begin{equation}\label{eq:defInternalizationRate}
  \mbox{internalization} = 1-\frac{\mbox{total variation of out-flow}}{\mbox{total variation of in-flow}};
\end{equation}
roughly speaking, it is the fraction of in-flow orders that were netted. From the client's point of view, high internalization indicates that many orders were netted \emph{instead} of trading in the market and thus that the desk was able to reduce trading costs. The internalization rate is upper bounded by one, but this bound is achieved only if there is no out-flow (or if the in-flow has infinite total variation). Zero internalization is achieved in particular if out-flows match in-flows, meaning that the desk routes all orders to the market. For a given in-flow, the highest internalization rate is obtained by a monotone out-flow (the desk only sells or only buys in the market), as that minimizes the total variation given the liquidation constraint. In principle, internalization can be negative, but it usually takes a value in $[0,1]$ because the desk chooses an out-flow smoother than their in-flow. 

Practitioners consider internalization rate as a proxy for saved trading costs (but see our discussion of Figure~\ref{fig:internProxy} below for a caveat). Compared to cost, internalization rate has the advantage of being model-free, which may explain why it is the most important metric for clients and regulators (cf.\ \cite[Footnote~23]{BIS.22}).

\paragraph{Internalization regret (\%).}

We propose a novel metric called internalization regret which focuses on the \emph{desk's perspective} and measures how many trades were ex-post unnecessary.
We define internalization regret as\footnote{A similar definition would be $(\text{total variation of out-flow)/|\mbox{terminal in-flow}|}$. That quantity is numerically ill-behaved as the terminal in-flow may be close to or equal to zero.}
\begin{equation}\label{eq:defInternalizationRegret}
  \mbox{internalization regret} = 1- \frac{|\mbox{terminal in-flow}|}{\mbox{total variation of out-flow}};
\end{equation}
roughly speaking, it is one minus the ratio of minimally necessary over executed out-flow trades.
Note that we use the absolute terminal in-flow $|Z_{T}|$ rather than the total variation of the in-flow process $Z$. Because the total variation of any out-flow satisfying the liquidation constraint is at least $|Z_{T}|$, internalization regret always takes values in $[0,1]$. (We use the convention $0/0:=1$.) It has value zero if and only if the out-flow is monotone (the desk only sells or only buys in the market), indicating that the desk captured all possible internalization. In particular, zero regret is always achieved by the naive strategy that warehouses all in-flow during the day and then externalizes the inventory on the close. On the other hand, a high regret indicates that the desk traded in the market where netting would (ex-post) have been possible. 

In contrast to the internalization rate, the internalization regret is well-defined even if the in-flow has infinite variation (see also Section~\ref{se:itoLimit}).

\paragraph{Impact, spread, total costs (bps).}

The absolute impact cost was defined in~\eqref{eq:impactCostC0} in continuous time; it is discretized as
\begin{equation*}
y J_0 + \frac{\lambda}{2}J^2_0 + \sum_t Y_t q_t \Delta t +  Y_{T-\Delta t} J_T + \frac{\lambda}{2}J^2_T.
\end{equation*}
As we measure trading costs from the clients' perspective in this section, we report the absolute cost divided by the total variation of the in-flow. Similarly, the  absolute spread (or instantaneous) cost defined in~\eqref{eq:spreadCostC0} is discretized as $\frac{1}{2}\sum_t \eps q^2_t \Delta t$ and divided by the total variation of the in-flow.

\paragraph{Closing trade (ADV\%, \%).}

The closing trade is the block trade placed in the closing auction. We express it in two units: either in ``absolute'' terms (ADV\%) or as a proportion of the (total variation of the) out-flow (\%). The former is more relevant to the client, as it directly compares to the in-flow. On the other hand, the latter captures how much the desk warehouses until the close compared to its overall trading activity.

\subsection{Sensitivity to In-Flow Auto-Correlation}\label{se:numAutocorr}

We first study how the optimal out-flow reacts to the autocorrelation parameter ($\theta$). The figures show martingale, reversal, and momentum in-flow. 
Figure~\ref{fig:ACSamplePath} shows a sample path of the trading strategy's intraday evolution. As expected, the in-flow exhibits larger variations with momentum dynamics and lower variations with reversal dynamics. Therefore, the out-flow and impact state is most aggressive for momentum and least aggressive for reversal in-flow.
Figure~\ref{fig:ACAvgPath} shows the average path of the trading strategy. In expectation, the out-flow and impact time series are identical to the deterministic optimal execution case when accounting for the expected future in-flow (cf.\ Section~\ref{se:numDeterministic}).

\begin{figure}[htbp]
    \centering
    \includegraphics[width=1.0\textwidth]{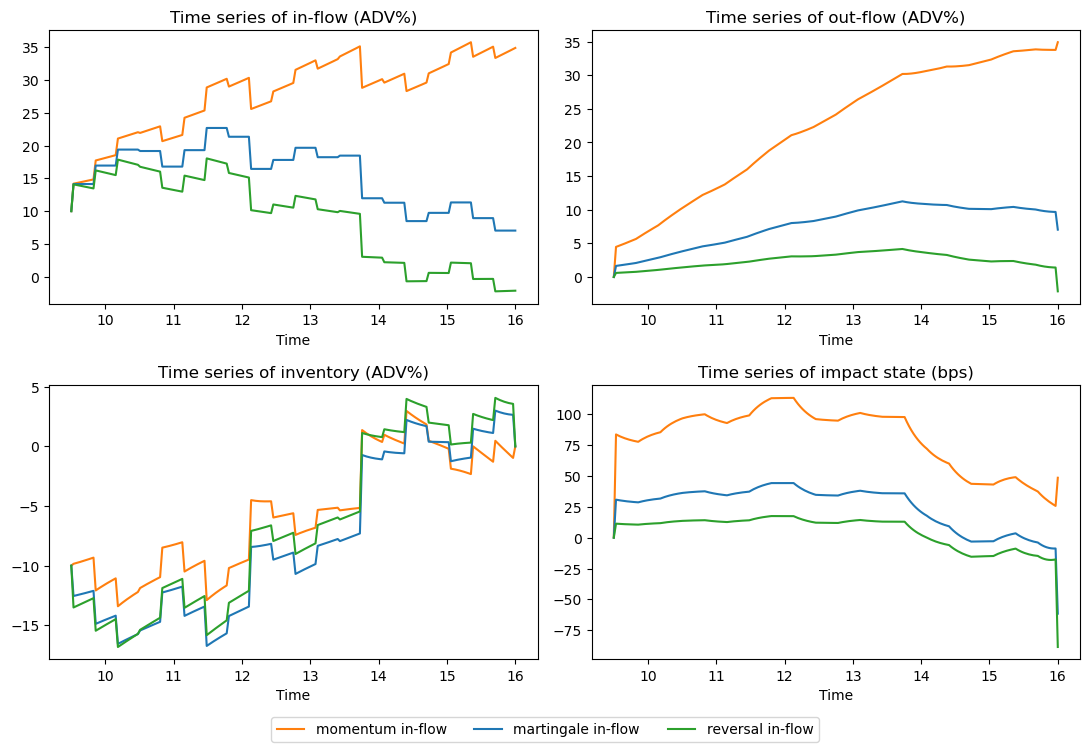}
    \caption{Sensitivity to flow autocorrelation ($\theta$) for an initial inventory ($z$) and daily flow volatility ($\sigma$) of 10\% ADV: sample path.}
    \label{fig:ACSamplePath}
\end{figure}

\begin{figure}[htbp]
    \centering
    \includegraphics[width=1.0\textwidth]{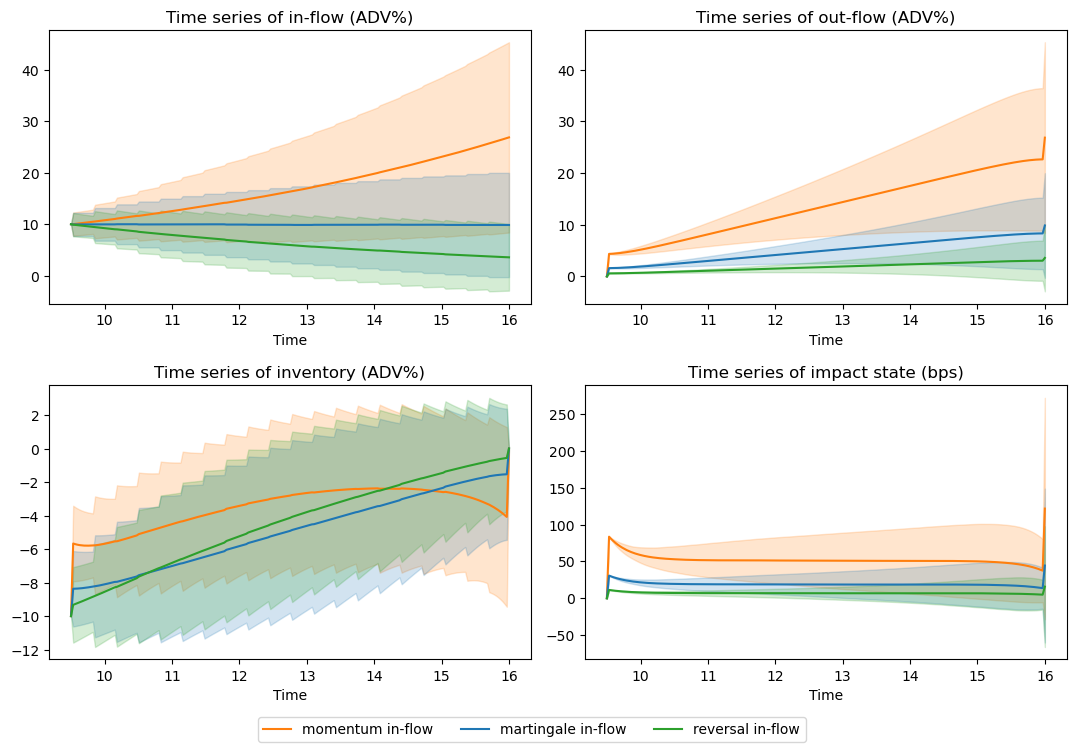}
    \caption{Sensitivity to flow autocorrelation ($\theta$) for an initial inventory ($z$) and daily flow volatility ($\sigma$) of 10\% ADV: average path. The error bars illustrate one standard deviation around the average path.}
    \label{fig:ACAvgPath}
\end{figure}

\begin{table}[tbhp]
    \centering
    \tabcolsep=0.15cm 
    \begin{tabular}{ ||c c | c c c c c c||}
     \hline
     Parameter      & $\theta$ & In-flow & Out-flow & Spread Cost  & Impact cost & Closing trade & Internalization \\ 
      scans &          & (ADV\%)  & (ADV\%)&  (bps)       &  (bps)      & (\% total)       & (\%)\\[0.5ex]
     \hline
     \hline
     momentum   & -1 & 61 & 31 & 4.9 & 42.6 & 17 & 51 \\ 
     martingale & 0  & 46 & 15 & 1.7 & 14.5 & 21 & 68 \\ 
     reversal   & 1  & 52 & 9  & 0.5 & 4.8  & 27 & 84 \\ [0.5ex]
     \hline  
    \end{tabular}
    \caption{Sensitivity to flow autocorrelation ($\theta$) for an initial inventory ($z$) and daily flow volatility ($\sigma$) of 10\% ADV: average metrics.}
    \label{tbl:AC}
\end{table}

Table \ref{tbl:AC} reports the average daily statistics. We observe that out-flows are significantly more sensitive to $\theta$ than in-flows. Consequently, in-flow autocorrelation strongly affects all core trading metrics:
\begin{enumerate}
\item Higher autocorrelation leads to higher costs. The impact-to-spread ratio remains stable. Therefore, higher out-flows drive the cost increase, which is proportionally shared by spread and impact.
\item Lower autocorrelation leads to higher internalization. In particular, the trading strategy warehouses more inventory with lower autocorrelation, which leads to a larger closing trade as a proportion of out-flow.
\end{enumerate}

\FloatBarrier

Figure~\ref{fig:ACDistr} shows the distribution of daily trading metrics across samples of the in-flow path. As expected, distributions are wide, highlighting the need for stochastic control theory over deterministic optimization techniques: the deterministic optimal execution problem only describes the strategy's \emph{average} behavior.

\begin{figure}[htbp]
    \centering
    \includegraphics[width=1.0\textwidth]{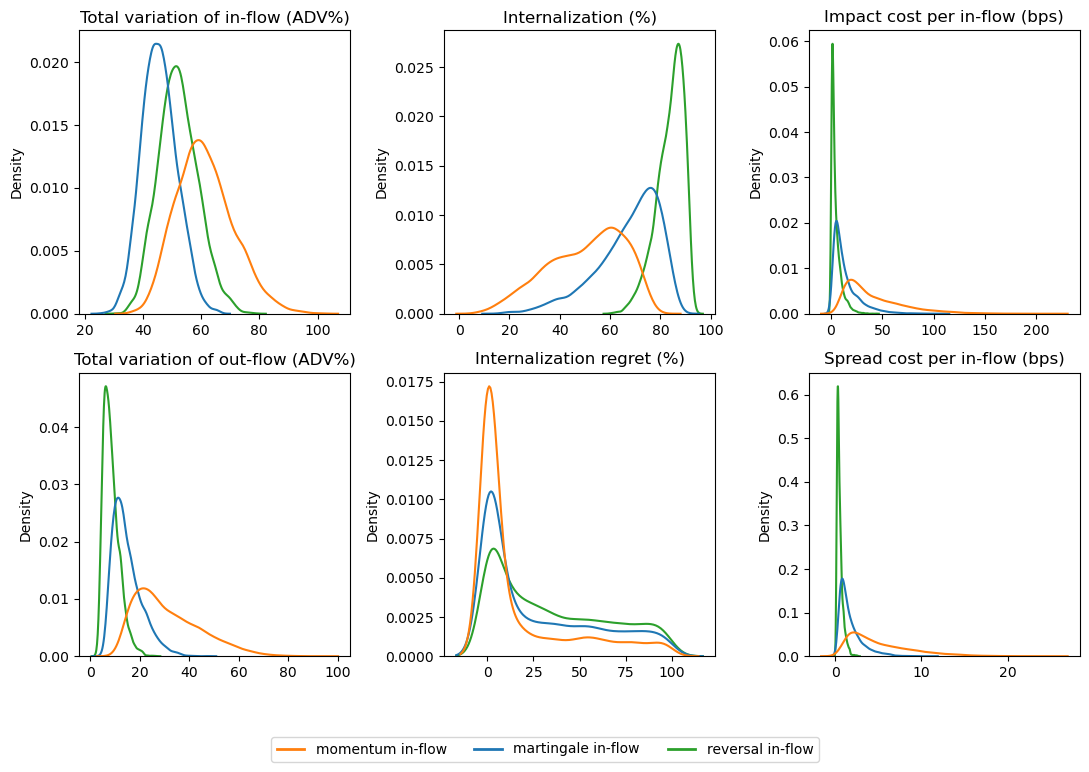}
    \caption{Sensitivity to flow autocorrelation ($\theta$) for an initial inventory ($z$) and daily flow volatility ($\sigma$) of 10\% ADV: metrics' distribution.}
    \label{fig:ACDistr}
\end{figure}

As the first panel of Figure~\ref{fig:ACDistr} illustrates, martingale in-flow has the smallest total variation. Indeed, any drift must increase total variation, but momentum more so than reversal. This is relevant as the in-flow's total variation is used to normalize trading metrics. Nevertheless, internalization is highest in the reversal case and lowest in the momentum case. This is explained by the total variation of the out-flow: reversal naturally leads to netting while momentum requires more trading. We observe that the distribution of the internalization rate is significantly wider for momentum than for reversal. The impact cost shows largely the same characteristics as the internalization. %

The distribution of the internalization regret has an atom at zero, decays away from zero, and becomes approximately uniform for large regret values. The mass at zero reflects monotone trading paths (the desk only sells or only buys). The concentration is strongest in the momentum case, where the probability of zero regret is about 23\% and the probability of regret below 1\% is about 46\%, compared with (11\%, 26\%) in the martingale case and (4\%, 13\%) for reversal. (Density plotted outside [0,100] is an artifact of kernel density estimation.) Generally, increasing autocorrelation lowers internalization and increases costs.

\begin{figure}[htbp]
    \centering
    \includegraphics[width=0.5\textwidth]{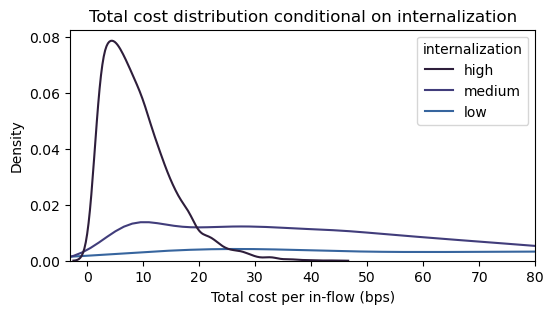}
    \caption{Total cost distribution conditional on internalization. Samples are taken across a wide range of in-flow parameters.}
    \label{fig:internProxy}
\end{figure}

Figure~\ref{fig:internProxy} highlights an important rule of thumb: high internalization is a strong proxy for low costs, regardless of the shape of the in-flow path. On the other hand, when internalization is low, costs can take a wide range of values depending on the in-flow path, meaning that internalization alone is not a suitable proxy for the cost.

\FloatBarrier

\subsection{Sensitivity to In-Flow Volatility}\label{se:numVola}

Figure~\ref{fig:volScan} shows the sensitivity of key metrics to in-flow volatility ($\sigma$). We observe that the relation between initial inventory and in-flow volatility is key.
In the regime where the initial inventory is significantly larger than the daily volatility of the in-flow, the trading strategy behaves like an optimal execution strategy.
Conversely, if the in-flow volatility dominates the initial inventory, the trading strategy behaves like a steady-state market-making strategy.
An important transition occurs between those regimes:

\begin{enumerate}
\item Before the critical value, increasing the in-flow volatility increases internalization and drives down (per in-flow) costs.
\item After the critical value, internalization reaches a plateau. Therefore, in-flow volatility drives up total trading, and costs increase.
\end{enumerate}

\begin{figure}[htbp]
    \centering
    \includegraphics[width=\textwidth]{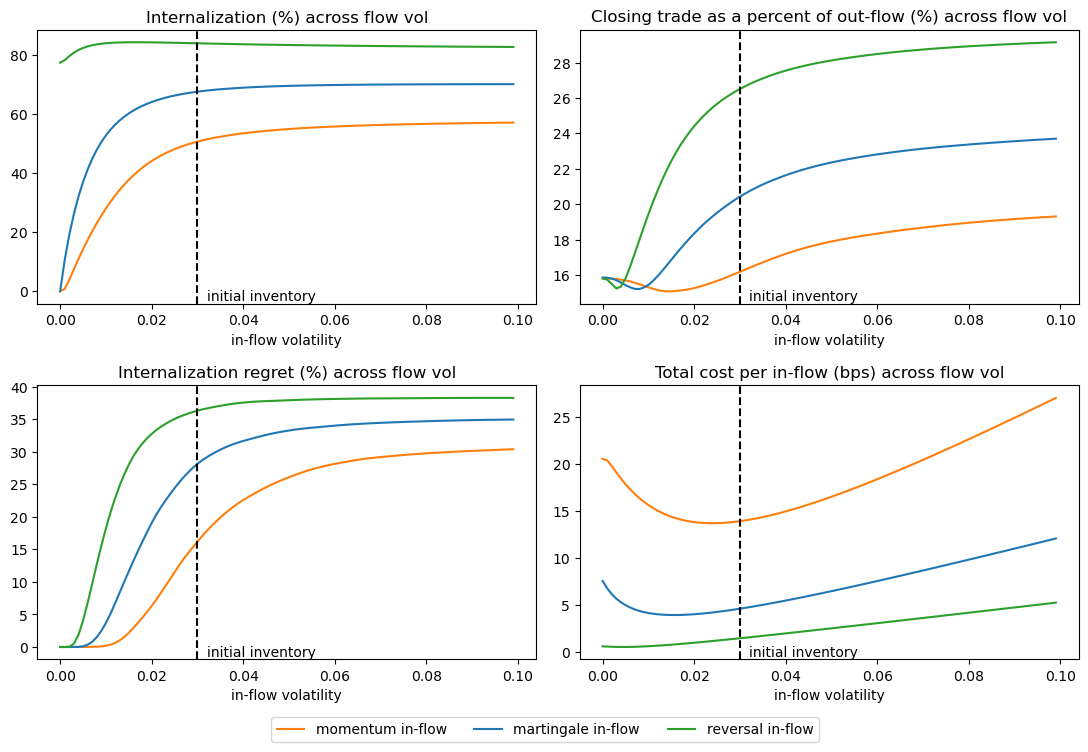}
    \caption{Average metrics' sensitivity to in-volatility ($\sigma$) for an initial inventory ($z$) of 3\% ADV and different autocorrelation values ($\theta$).}
    \label{fig:volScan}
\end{figure}

As an important consequence, a desk with low in-flow volatility may benefit from attracting additional in-flows, e.g., by onboarding new clients or posting more attractive prices, in order to increase internalization and reduce costs, whereas a desk with high volatility may benefit from lowering it. We observe that the transition happens when volatility and initial position are comparable in size, with the exact location depending on the in-flow autocorrelation. Table~\ref{tbl:volScan} quantifies the optimal in-flow vol to attract for a given in-flow autocorrelation. The in-flow vol scales linearly with the initial inventory, and, therefore, we quote the ratio.\footnote{The optimal in-flow volatility is insensitive to the martingale driver: there is no material difference between the diffusion and jump cases. The fact that the location of the cost minimum depends only on the ratio $\sigma/z$ can be seen analytically by a scaling argument. If the initial impact state~$y$ is nonzero, $y$ also needs to be scaled, so that the location of the minimum depends on the two ratios $\sigma:z:y$.}

\begin{table}[htbp]
    \centering
    \tabcolsep=0.15cm 
    \begin{tabular}{ ||c | c c c c c||}
     \hline
     $\theta$      & -1 & -0.5 & 0 & 0.5 & 1 \\ [0.5ex]
     \hline
     \hline
     optimal $\sigma/z$ (\%)   & 99 & 79 & 55 & 35 & 17 \\ 
     Internalization (\%)      & 55 & 65 & 72 & 77 & 82 \\  [0.5ex]
     \hline  
    \end{tabular}
    \caption{Optimal in-flow volatility, as a percentage of inventory, across various $\theta$.}
    \label{tbl:volScan}
\end{table}

\FloatBarrier

\subsection{Internalization and  In-Flow Monotonicity}\label{se:numMonoton}

Opportunities for internalization depend on the in-flow path, which may trend in a single direction or net out naturally. 
To capture this, we propose a measure of monotonicity of in-flow paths, namely the terminal absolute value of in-flow normalized by the total variation of in-flow ($|Z_{T}|/\|Z\|_{TV}$). A value of 100\% corresponds to an in-flow that is monotone; i.e., consist of only buy or only sell orders.  
Figure~\ref{fig:monotonicity} shows that this metric is closely related to internalization under the optimal out-flow strategy. To produce Figure~\ref{fig:monotonicity}\,(a), we sample 100'000 paths from our model and plot the internalization rate and regret of the out-flow versus the normalized terminal in-flow. The more ``monotone'' the in-flow, the lower the internalization rate and internalization regret. We emphasize that some parts of the x-axis correspond to more sample paths than others. Indeed, the distribution of $|Z_{T}|/\|Z\|_{TV}$ depends on the in-flow volatility ($\sigma$), as shown in Figure~\ref{fig:monotonicity}\,(b). 

\begin{figure}[htb]
	\center
	\begin{subfigure}[b]{\textwidth}
		\includegraphics[width=\textwidth]{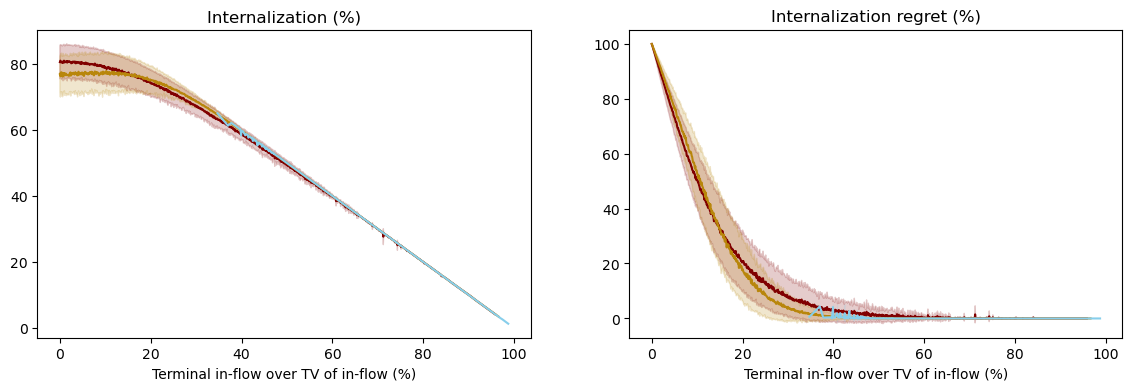}
		\caption{Normalized terminal in-flow ($|Z_{T}|/\|Z\|_{TV}$) is a strong predictor of internalization.}
	\end{subfigure}
	\\
	\begin{subfigure}[b]{0.5\textwidth}
		\includegraphics[width=\textwidth]{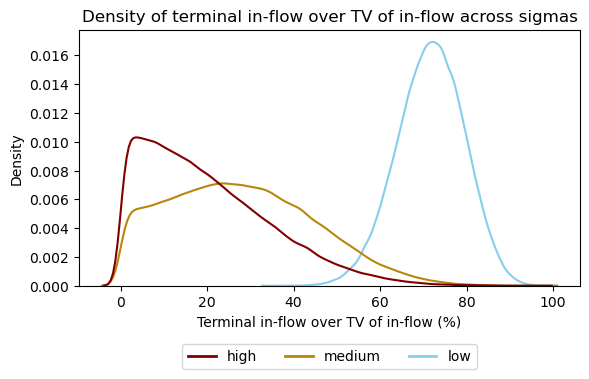}
		\caption{In-flow volatility affects the distribution of the normalized terminal in-flow  ($|Z_{T}|/\|Z\|_{TV}$).}
	\end{subfigure}
\caption{(a) Monotonicity test: trading metrics as a function of the terminal in-flow state, as a fraction of in-flow TV. The error bars illustrate one standard deviation around the average. (b)~The distribution of the proposed metric depends on in-flow volatility.}
	\label{fig:monotonicity}
\end{figure}

In practice, the shape of the in-flow path depends largely on the desk's clients. The strong relationship in Figure~\ref{fig:monotonicity} demonstrates that the nature of the clients---and not only the desk's unwind strategy---is a key determinant for the internalization rate. Therefore, the trading team gains a lot from understanding their clients. Furthermore, our proposed monotonicity metric is a strong, model-free client metric forecasting achievable internalization rates.

\FloatBarrier

\subsection{Misspecification Costs}\label{se:misspecification}

This section quantifies the cost of \emph{not} knowing your client, or more precisely, the cost of trading with the wrong in-flow parameter~$\theta$. Thanks to the analytic structure provided by Theorem~\ref{th:main}, this misspecification cost can be computed in two steps: first, compute the strategy under a given (misspecified) estimate $\hat\theta$, then, simulate the state processes under a ground-truth value $\theta^*$. The strategy becomes suboptimal unless $\hat\theta = \theta^*$.

\begin{figure}[htb]
	\center
	\begin{subfigure}[b]{\textwidth}
		\includegraphics[width=\textwidth]{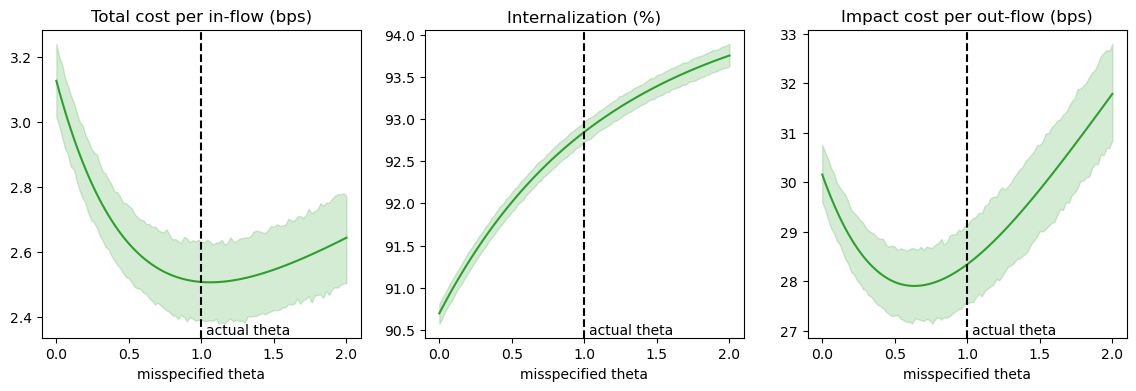}
		\caption{$\theta^* = 1$. The actual in-flow is mean-reverting.}
	\end{subfigure}
	\\
	\begin{subfigure}[b]{\textwidth}
		\includegraphics[width=\textwidth]{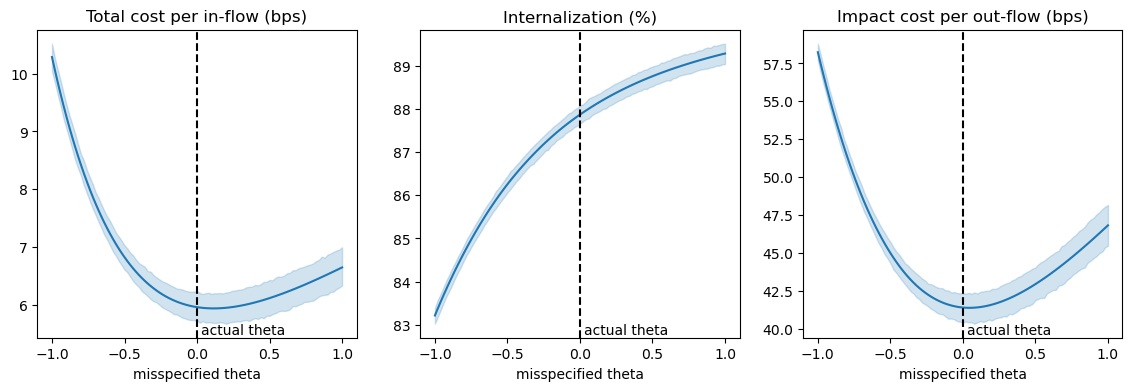}
		\caption{$\theta^* = 0$. The actual in-flow is a martingale.}
	\end{subfigure}
   \\
	\begin{subfigure}[b]{\textwidth}
		\includegraphics[width=\textwidth]{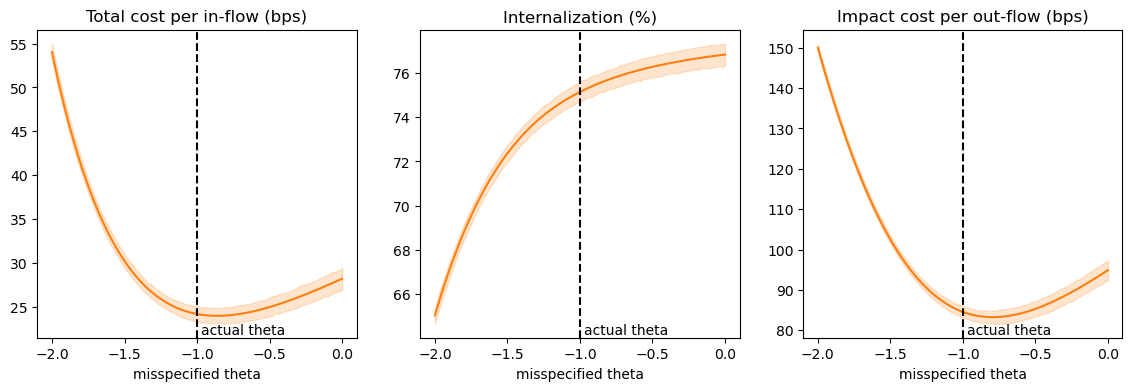}
		\caption{$\theta^* =-1$. The actual in-flow has momentum.}
	\end{subfigure}
\caption{Simulated cost per in-flow, internalization, and impact per out-flow across various $\theta^*, \hat\theta$.}
	\label{fig:misspecificationExamples}
\end{figure}

Figure~\ref{fig:misspecificationExamples} studies three misspecification examples with the following key conclusions:
\begin{enumerate}
\item Misspecification costs are highest for momentum and lowest for mean-reverting flow. Getting $\theta$ wrong by $100\%$ increases costs by less than a third for mean-reverting flow. By contrast, misspecifying $\theta$ by $100\%$ for momentum flow more than doubles costs.
\item Misspecification costs are asymmetric: for a given $\theta^*$, one would rather over- than under-estimate $\theta$. This is in line with the misspecification behavior for impact cost in \cite{Hey2023}: generally, if a model parameter is uncertain, one prefers to err on the side which leads to more conservative trading.
\item Both internalization and impact costs increase drastically when $\hat\theta$ moves away from $\theta^*$. Therefore, both channels drive the misspecification costs.
\end{enumerate}

One concrete application of this misspecification study relates to estimates of in-flow autocorrelation. All else being equal, given a statistical confidence interval over~$\theta$ (or any other measure of in-flow autocorrelation), it is best to deploy the most conservative trading strategy, that is, the strategy that underestimates in-flow momentum.

There is no misspecification cost for~$\sigma$ since the feedback form of the optimal strategy does not depend on~$\sigma$ (cf.\ Proposition~\ref{pr:properties}\,(iv)). Of course, the expected cost is nevertheless affected, hence the quote given to the client may bear a misspecification error.

We do not conduct a misspecification study for the liquidity parameters; the existing results~\cite{Neuman2023, Hey2023, Neuman2023bis} apply to our setting.

\FloatBarrier

\newpage
\appendix
\section{Further Numerical Results}\label{se:appendixNumerical}

\subsection{Recovering Deterministic Results; Sensitivity to Spread Cost Parameter}\label{se:numDeterministic}

This section serves two purposes. First, we show how our model links to the traditional optimal execution literature where in-flow is deterministic. Second, we analyze the sensitivity of our model to the spread cost parameter ($\eps$). The setup is as described in Section~\ref{se:numSetup}.

The first experiment is best understood as a minimal extension of the traditional optimal execution literature. We consider the deterministic case with no in-flow apart from the initial order of $z=10\%$ ADV (i.e., $\sigma=\theta=0$) . Figure~\ref{fig:optimExecTest} reproduces the optimal trading strategy of Obizhaeva and Wang in the limit as the spread cost parameter $\eps \rightarrow 0$. As discussed in \cite{webster.23}, the optimal impact state tracks a certain target value in the Obizhaeva--Wang model, with jumps at the start and end of the day to hit that target. In our model with instantaneous cost during LOB trading, the intraday impact state is smoothened, but the change in target is substantial only for high spread cost parameter.

In the out-flow and inventory, we observe that the size of the block trades in the auctions increases with the spread cost: as there is no spread cost during the auctions, trading a larger volume during the auctions is a way evade high spread cost in LOB trading, partly at the expense of higher impact costs. That also explains why the impact state initially overshoots the target value.

\begin{figure}[htbp]
    \centering
    \includegraphics[width=1.0\textwidth]{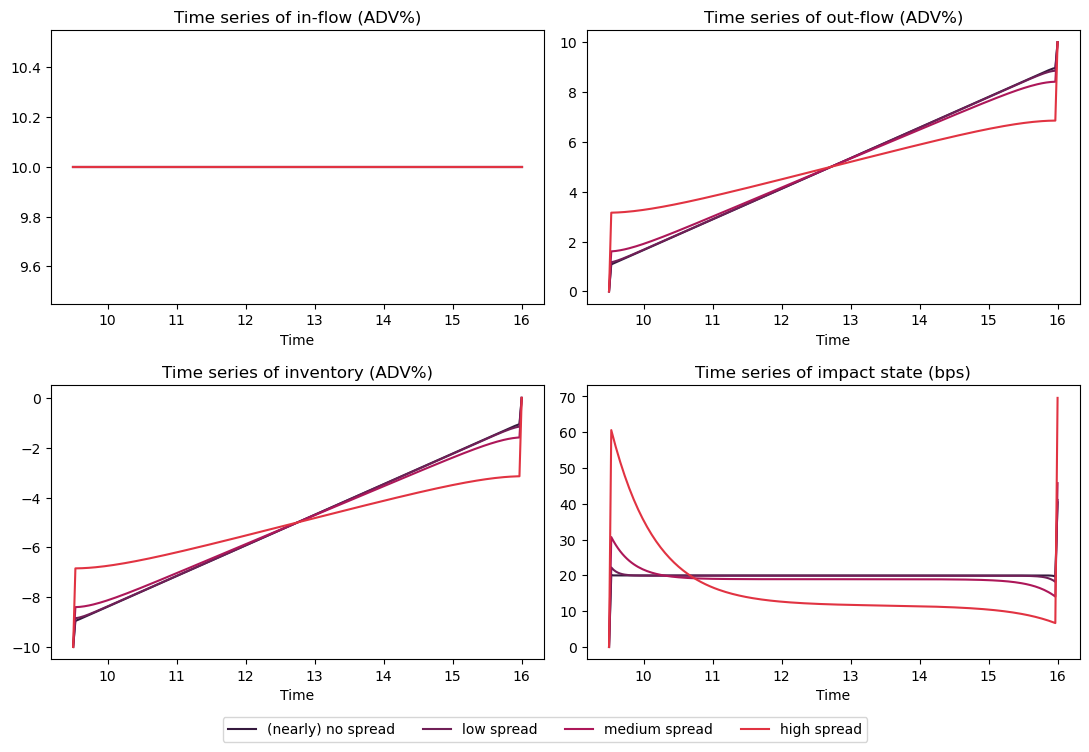}
    \caption{Sensitivity to $\eps$ of the optimal execution strategy for a 10\% ADV order: path.}
    \label{fig:optimExecTest}
\end{figure}

Table \ref{tbl:epsilonSensitivity} quantifies some of these insights and highlights the importance of the impact-to-spread cost ratio. 
When impact costs outweigh spread costs, $\eps$'s effect is second order and does not meaningfully affect the impact state or trading speed. This is the typical regime for a desk dealing with large orders. However, once spread costs become comparable to impact costs, the trading regime changes. Overall costs increase dramatically and substantial part of the inventory is unwound by block trades on open and close.

\begin{table}[htbp]
    \centering
    \begin{tabular}{ ||c c | c c c c c||}
     \hline
     Parameter      & $\eps$ & In-flow  & Spread Cost  & Impact cost & Closing trade & Impact to \\ 
      scans &            & (ADV\%)  &  (bps)       &  (bps)      & (ADV\%)       & spread ratio\\[0.5ex]
     \hline
     \hline
     (nearly) no spread & 1e-4 & 10 & 0.0 & 21.1 & 1.0 & 668 \\ 
     low spread         & 1e-3 & 10 & 0.3 & 21.3 & 1.1 & 71 \\ 
     medium spread      & 1e-2 & 10 & 2.4 & 22.8 & 1.6 & 9 \\ 
     high spread        & 1e-1 & 10 & 7.5 & 36.9 & 3.2 & 5 \\ [0.5ex]
     \hline  
    \end{tabular}
    \caption{Sensitivity to $\eps$ of the optimal execution strategy for a 10\% ADV order: metrics.}
    \label{tbl:epsilonSensitivity}
\end{table}

Next, we study the sensitivity of daily metrics to $\eps$ when the in-flow is random. Connecting to Section~\ref{se:numAutocorr}, 
Figure~\ref{fig:ACSpread} also shows the sensitivity with autocorrelation $(\theta)$. As expected, a higher spread cost parameter $\eps$ leads to higher internalization, lower internalization regret, and larger closing trades as it slows down the continuous trading. This holds for all values of $\theta$, with the ordering among reverting, martingale, and momentum flow being as expected. The main take-away, however, is that internalization and impact cost are relatively insensitive to~$\eps$ as long as values are moderate. Internalization regret is only slightly more sensitive (note that~$\eps$ is being varied over two orders of magnitude in the figure).
 
\begin{figure}[htbp]
    \centering
    \includegraphics[width=1.0\textwidth]{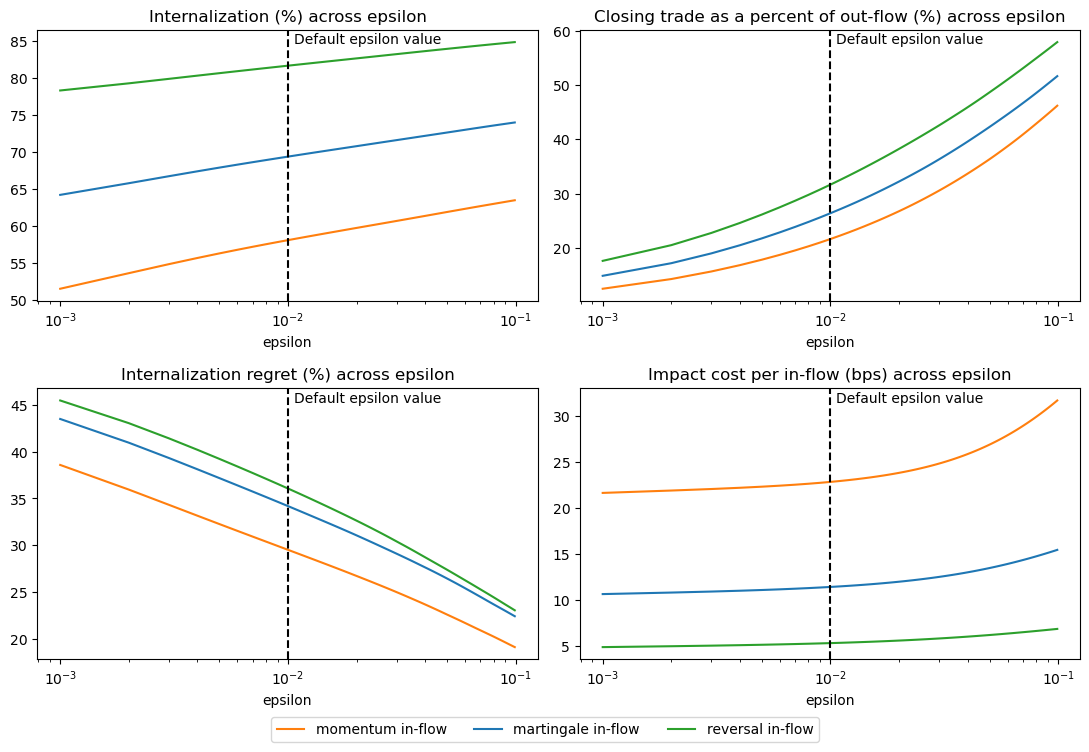}
    \caption{Average metrics' sensitivity to spread costs ($\eps$) for different flow autocorrelations ($\theta$), for an initial inventory ($z$) and daily in-flow volatility ($\sigma$) of 10\% ADV.}
    \label{fig:ACSpread}
\end{figure}

\FloatBarrier

\subsection{It\^o Diffusion Limit for In-Flows}\label{se:itoLimit}

This section revisits the driving martingale of the in-flow process~$Z$. As seen in Section~\ref{se:jumps}, the model is solvable for finite-variation (jump) processes, diffusions, and combinations thereof. The feedback form of the optimal strategy does not depend on the driving martingale, but of course the actual trade path depends on the realization of the state processes. Our preceding numerical experiments use a finite-variation process so that trading metrics can be reported in a standard fashion. Here, we study how the metrics behave in the diffusion limit; i.e., as the number of jumps increases (while their size decreases such as the keep the expected quadratic variation constant).

Figure~\ref{fig:itoLimit} shows the average metrics. The overarching conclusion is that the trading strategies as well as trading metrics remain similar in the diffusive limit, with the obvious exception of those metrics that explicitly depend on the total variation of in-flow. In particular, internalization regret is fairly stable, while the internalization rate tends to 100\% due to its definition~\eqref{eq:defInternalizationRate}.

As an ad-hoc adaptation, we propose to use the square-root of the quadratic variation to replace the total variation for use in Brownian model, as it remains stable over the number of jumps.

\begin{figure}[ht]
	\center
	\begin{subfigure}[b]{\textwidth}
		\includegraphics[width=\textwidth]{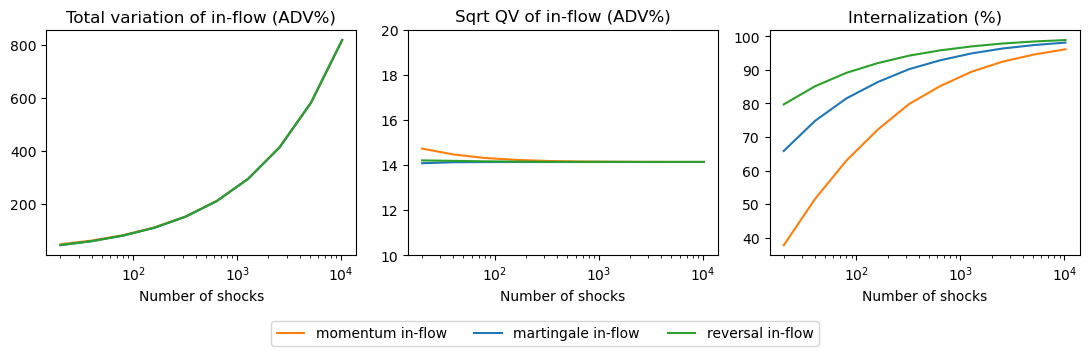}
		\caption{Total variation, quadratic variation, and internalization change in the passage to the diffusion limit.}
	\end{subfigure}
	\\
	\begin{subfigure}[b]{\textwidth}
		\includegraphics[width=\textwidth]{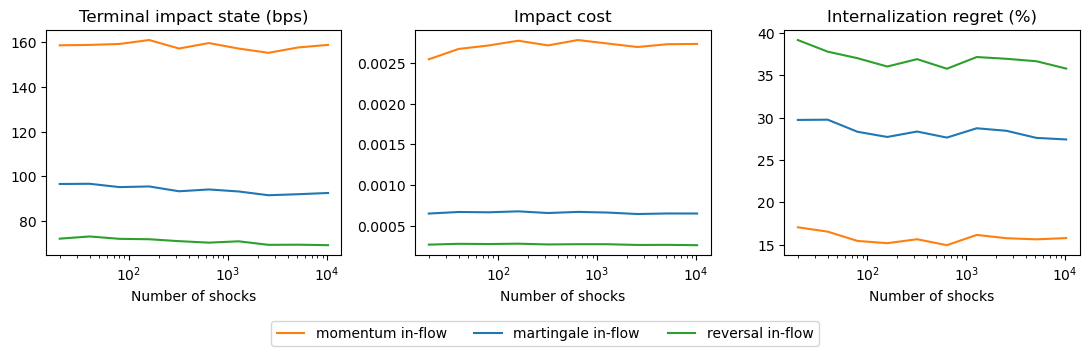}
		\caption{However, trading metrics that do not rely on total variation of in-flow remain stable.}
	\end{subfigure}
    \\
	\begin{subfigure}[b]{\textwidth}
		\includegraphics[width=\textwidth]{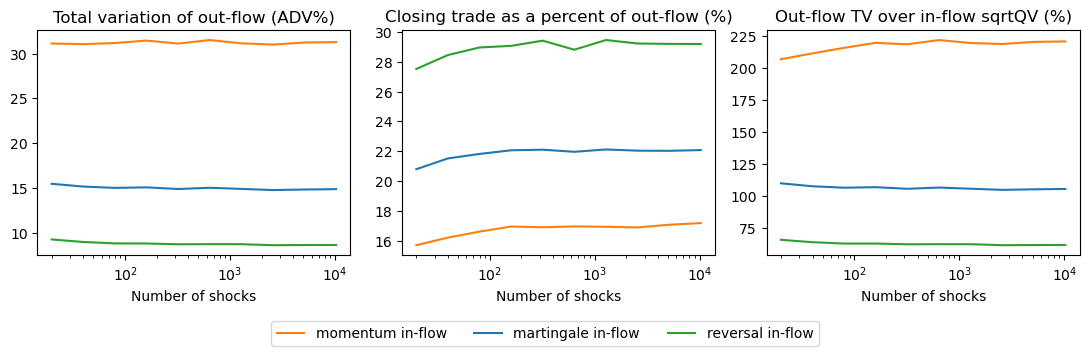}
		\caption{The square root of the quadratic variation of in-flow proxies well for the in-flow activity in the transition from jump process to It\^o process.}
	\end{subfigure}
        \caption{Sensitivity of average trading metrics to the diffusive limit for in-flow.}
	\label{fig:itoLimit}
\end{figure}

\FloatBarrier

\subsection{Sensitivity to the Initial Impact State}\label{se:initalCondition}

The optimal execution literature often assumes zero initial impact, and then shows that trade paths are monotone. Clearly paths need not to be monotone when the in-flow is stochastic, as the inventory can change sign. Beyond that, we emphasize in this section that the initial impact state matters (even in deterministic execution). This is straightforward but sometimes overlooked: The impact state forecasts reversal of the price, hence is equivalent to an alpha signal. Since the strategy maximizes P\&L, it balances this signal with the other costs of unwinding, which can lead to trades that \emph{add} to  the inventory rather than reducing it. Analytically, the formulas for $J_{0}$ and $q_{t}$ in Theorem~\ref{th:main} clearly show that trades can have either sign depending on the level of the impact state.

Trading based on past days' impact is far from an unwanted behavior in practice. Indeed, one challenge in trading systems is to correctly consider past days' trading in today's execution. For instance, \cite{WestrayRisk2021, Harvey2021, webster.23} tackle the issue from a transaction cost analysis (TCA) perspective. Unfortunately, the common solution of ignoring past days' trades leads to biases and sub-optimal trading. \cite{BouchaudBook} coins this problem ``issuer bias'' and highlights it as one of the four most common TCA and execution challenges among practitioners. While our solution is not unique in solving ``issuer bias'', it considers past days' trading with no additional effort within a consistent, tractable framework.

Figure~\ref{fig:impactPath} shows average trading paths from simulations with different initial impact states ($y= -30,\: 0,\: 30,\: 100$ bps) and martingale in-flow. A negative value is favorable in the sense that the price of a trade offsetting the initial in-flow $z>0$ is reduced. Indeed, in the favorable simulation, the unwind strategy hedges a sizable part of~$z$ in the opening auction. For increasing values of~$y$, the block trade is reduced; volume is deferred to exploit impact reversal. For high initial impact, reducing the impact state takes priority over reducing inventory and the opening trade adds to the initial inventory rather than hedging it. The strategy then trades aggressively during the day to unwind the acquired position.

\begin{figure}[htbp]
    \centering
    \includegraphics[width=1.0\textwidth]{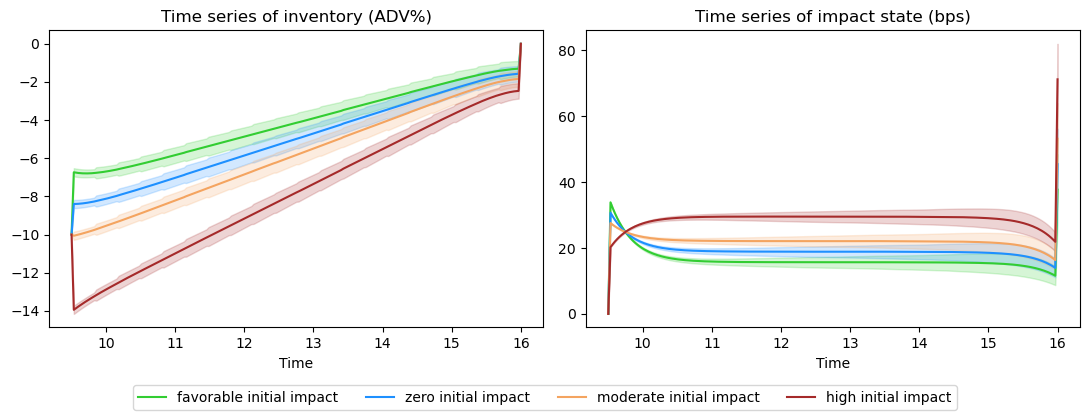}
    \caption{Sensitivity to initial impact state ($y$) for an initial inventory ($z$) and daily flow volatility ($\sigma$) of 10\% ADV: average metrics.}
    \label{fig:impactPath}
\end{figure}

\FloatBarrier

\subsection{A Representative Market-Maker Model for Stochastic Order flow}\label{se:empiricalAutocorr}

This section illustrates possible $\theta$ values from public market data. We preface this study with two important caveats to highlight that these estimates are purely presented for illustration purposes.

First, our estimates of market flow autocorrelation are simple, and there are many more sophisticated models and studies. Second, \emph{know your client.} Trading teams should calibrate stochastic models on their own in-flows and compare those to the representative market maker; they should not assume their in-flows behave like the aggregate market. For instance, competitive market makers may capture milder order flows, while less competitive desks may only capture toxic order flow, leading to more inventory risk. Similarly, a CRB's orders may not include smaller retail trades present on the tape.

The representative market-maker approach is a standard model for stochastic order flow in financial markets. For instance, market-making applications in \cite{CarteaBook} and the references therein assume that a trading strategy can capture a certain proportion of the overall market order flow. Empirical papers leverage the representative market-maker approach to provide statistics on market flow for trading applications. For example, \cite{Carmona2019} quantify market-maker's inventories and the follow-up paper \cite{Carmona2021} tracks individual traders' flows.
This section uses the representative market-maker approach to calibrate the proposed OU model for stochastic order flow on public market data. Therefore, the study discretizes the model for $Z$ using a step size $\Delta t$. Furthermore, $\sigma$ is normalized to one, as the representative market maker by definition captures 100\% of the day's trading volume:
\begin{align}
    Z_{t+\Delta t} = -\theta Z_{t} \Delta t + W_{t+\Delta t} - W_t.
\end{align}

Figure~\ref{fig:thetaDistribution} plots the \emph{doubling time} $log(2)/\theta$ as a function of the discretization choice $\Delta t$. The doubling time is to momentum flow what the half-life is to reverting flow. For each stock in the S\&P 500, the study estimates a single $\theta$ per stock over the year 2019, and the figure plots the corresponding distribution of doubling times across stocks. Our data source is LOBSTER \cite{lobsterIntro}, which provides all Nasdaq limit order book events in a standard table format and is aimed at academic research.

In line with \cite{BouchaudBook} and the references therein, we find that the market order flow consistently exhibits momentum across the S\&P 500 for time steps ranging from ten seconds to one hour. The corresponding doubling times are of the order of a couple of hours, to be compared to the roughly one-hour half-life of the OW model, as per \cite{JMK2022b}. Again, in line with \cite{Bouchaud2004}, order flow autocorrelation is roughly of the same magnitude as impact's decay, leading to approximately martingale impact state.

Furthermore, our study finds that $\theta$ depends on the discretization choice, in line with \cite{BouchaudBook, Toth2015} and the references therein, where order flow is found to have long-range autocorrelation and, therefore, cannot be calibrated on multiple time scales with a single OU model. Intuitively, different market participants, e.g., mutual funds, hedge funds, HFTs, submit orders over different timescales, e.g., minutes to days, and the aggregate order flow autocorrelation reflects each timescale. Furthermore, \cite{ButzOomen.19} empirically link client timescales to liquidity providers' internalization strategies.
\begin{figure}[ht]
\center
\includegraphics[width=0.5\textwidth]{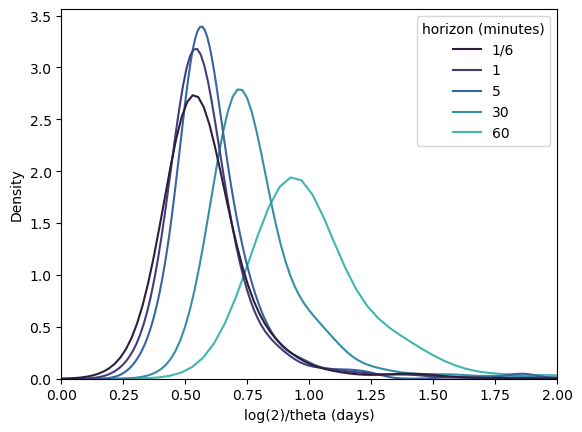}
\caption{Doubling time $log(2)/\theta$ as a function of the discretization choice $\Delta t$. The distribution covers the S\&P 500 for the year 2019.}\label{fig:thetaDistribution}
\end{figure}

\section{Proofs for Sections~\ref{se:optStrat}--\ref{se:jumps}: Dynamic Programming} \label{se:proofs}

\begin{proof}[Proof of Proposition~\ref{pr:costReformulated}]
    By its definition~\eqref{eq:execCost}, $\cC(J_{0}, q, J_{T})$ equals the sum of two expectations. First, $\E[\int_{0}^{T} S_{t}\,dQ_{t}]$, which equals $\E[S_{T}Z_{T}]$ by Lemma~\ref{le:costS}. And second,
    \begin{align*}
        \E \left[\int_{0}^{T}Y_{t}q_{t}\,dt 
        + \frac{1}{2} (Y_{0-}+Y_{0})J_{0} + \frac{1}{2} (Y_{T-}+Y_{T})J_{T} + \frac{1}{2} \int_{0}^{T}\eps_{t}q^{2}_{t}\,dt \right].
    \end{align*}
    Here Lemma~\ref{le:costY} shows that
    \begin{align}
        & \int_{0}^{T}Y_{t}q_{t}\,dt 
        + \frac{1}{2} (Y_{0-}+Y_{0})J_{0} + \frac{1}{2} (Y_{T-}+Y_{T})J_{T} \nonumber \\
        =& \int_{0}^{T} \frac{2 \beta_{t} + \dot\gamma_{t}}{2\lambda_{t}} Y^{2}_{t} \, dt 
        +  \frac{1}{2} \left( \frac{Y^{2}_{T}}{\lambda_{T}} - \frac{Y^{2}_{0}}{\lambda_{0}}
        + \frac{Y^{2}_{0}}{\lambda_{0-}} - \frac{y^{2}}{\lambda_{0-}} \right)\label{eq:costReformulated step1}.
    \end{align} 
    Moreover, using~\eqref{eq:impact jumps} and the liquidation constraint,
    \begin{align*}
       Y_{T} = Y_{T-} + \lambda_{T} J_{T} = Y_{T-} - \lambda_{T} X_{T} = Y_{T}^{c} - \lambda_{T}X_{T}.
    \end{align*}
    Substituting this into~\eqref{eq:costReformulated step1} and noting that $Y_{t}=Y^{c}_{t}$ for $t \in [0, T)$, the claim follows.
\end{proof}

\begin{lemma} \label{le:costS}
    Let $Q\equiv(J_{0}, q, J_{T}) \in \cA$. Then
    \begin{align} \label{eq:costS}
        \E \left[\int_{0}^{T} Q_{t}\, dS_{t} \right]=\E \left[ S_{0} J_{0} + \int_{0}^{T} S_{t} q_{t}\, dt + S_{T} J_{T} \right] = \E [S_{T} Z_{T}].
    \end{align}
\end{lemma}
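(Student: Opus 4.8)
The useful content of the lemma is the second equality, $\E[\int_0^T S_t\,dQ_t]=\E[S_TZ_T]$; the first is merely the expansion of the integral against the decomposition $Q_t=J_0+Q^c_t+J_T\1_{t=T}$, which prices the two block trades at the contemporaneous unaffected prices $S_0,S_T$ and the continuous part against $S_tq_t\,dt$. The plan is to obtain the second equality from integration by parts for semimartingales together with the martingale property of $S$.

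First I would apply the product rule to $S_tQ_t$. Since $Q_{0-}=0$,
\[
S_TQ_T=\int_0^T S_{t-}\,dQ_t+\int_0^T Q_{t-}\,dS_t+[S,Q]_T .
\]
Next I would absorb the covariation into the first integral: because $Q^c$ has absolutely continuous paths, its continuous covariation with $S$ vanishes and $[S,Q]$ charges only the jumps of $Q$ at $t=0$ and $t=T$, so $\int_0^T S_{t-}\,dQ_t+[S,Q]_T=\int_0^T S_t\,dQ_t=S_0J_0+\int_0^T S_tq_t\,dt+S_TJ_T$. Using the liquidation constraint $Q_T=Z_T$, this yields the pathwise identity
\[
S_TZ_T=\Big(S_0J_0+\int_0^T S_tq_t\,dt+S_TJ_T\Big)+\int_0^T Q_{t-}\,dS_t .
\]

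It then remains to show $\E[\int_0^T Q_{t-}\,dS_t]=0$. I would split $Q_{t-}=J_0+Q^c_t$ on $(0,T]$ (the terminal jump $J_T$ does not enter the left limit, and $Q_{0-}=0$ kills the atom of $S$ at $0$), so that $\int_0^T Q_{t-}\,dS_t=J_0(S_T-S_0)+\int_0^T Q^c_t\,dS_t$. The first term has zero mean since $J_0\in\R$ is constant and $S$ is a martingale; the second is a true martingale null at $0$ by the admissibility requirement, hence also has zero mean. Taking expectations in the displayed identity gives the claim.

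The one point requiring care is the passage to expectations: the gains term $\int_0^\cdot Q^c_s\,dS_s$ is only guaranteed to be a local martingale for a general square-integrable integrand, and a local martingale need not have constant expectation. This is precisely why admissibility was defined to require that this integral be a \emph{true} martingale; the boundary jumps of $Q$ and the resulting covariation terms must also be tracked carefully through the integration by parts. The remaining integrability bookkeeping (so that $S_TZ_T$, $S_TJ_T$, and $\int_0^T S_tq_t\,dt$ are integrable, using $S_T,J_T,Z_T\in L^2$ and $\E[\int_0^T q_t^2\,dt]<\infty$ together with Cauchy--Schwarz and Fubini) is routine.
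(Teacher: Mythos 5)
Your proof is correct and follows essentially the same route as the paper's: integration by parts, the true-martingale admissibility requirement to annihilate $\int_0^T Q^c_t\,dS_t$, the constancy of $J_0$ together with the martingale property of $S$ to kill $\E[J_0(S_T-S_0)]$, and the liquidation constraint. The only difference is cosmetic — the paper applies integration by parts just to the continuous part ($\int_0^T S_t q_t\,dt = S_T Q^c_T - \int_0^T Q^c_t\,dS_t$) and handles the two block trades by direct substitution, whereas you invoke the full semimartingale product rule for $SQ$ and then absorb the jump/covariation terms; both reduce to exactly the same two zero-mean terms.
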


\begin{proof}
    Note that $Q^{c}_{t} := \int_{0}^{t} q_{s} \, ds$ is a continuous process of finite variation. The integration-by-parts formula and $Q^{c}_{0} = 0$ then yield that
    \begin{align*}%
        \int_{0}^{T} S_{t} q_{t}\, dt = \int_{0}^{T} S_{t} \, dQ^{c}_{t} = S_{T} Q^{c}_{T} - \int_{0}^{T} Q^{c}_{t} \, dS_{t}.
    \end{align*}
    By the definition of admissibility, the last integral is a true martingale, hence has zero expectation. Admissibility also includes the liquidation constraint $J_{0} + Q^{c}_{T} + J_{T} = Z_{T}$ and thus
    \begin{align*}
        S_{0} J_{0} + S_{T} Q^{c}_{T} + S_{T} J_{T} &= S_{0} J_{0} + S_{T} (Z_{T} - J_{0}) 
        = -(S_{T} - S_{0})J_{0} + S_{T} Z_{T}. %
    \end{align*}
    Here $\E[(S_{T} - S_{0})J_{0}] = 0$ as $S$ is a martingale and $J_{0} \in \R$. The claim follows by combining the displays and taking expectation.
\end{proof}

\begin{lemma} \label{le:costY}
    Let $Q \equiv (J_{0}, q, J_{T}) \in \cA$. Then
    \begin{align} \label{eq:costY}
        \int_{0}^{T} Y_{t} q_{t} \, dt = \int_{0}^{T} \frac{2 \beta_{t} + \dot\gamma_{t}}{2\lambda_{t}} Y^{2}_{t} \, dt +  \frac{1}{2} \left(\frac{Y^{2}_{T-}}{\lambda_{T}}- \frac{Y^{2}_{0}}{\lambda_{0}} \right).
    \end{align}
    Moreover, 
    \begin{align} \label{eq:costY jumps}
        (Y_{0-} + Y_{0}) J_{0} = \frac{Y^{2}_{0}}{\lambda_{0-}} - \frac{y^{2}}{\lambda_{0-}} \qandq (Y_{T-} + Y_{T}) J_{T} = \frac{Y^{2}_{T}}{\lambda_{T}} - \frac{Y^{2}_{T-}}{\lambda_{T}}.
    \end{align}
\end{lemma}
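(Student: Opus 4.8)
The plan is to treat the two displays in the statement separately, since they are of entirely different character. The jump identities \eqref{eq:costY jumps} are a one-line difference-of-squares computation, while the integral identity \eqref{eq:costY} carries the actual content and is obtained by rewriting the integrand through the impact ODE and then integrating by parts. Both identities are pathwise, so no expectation or martingale argument enters here.

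For the jumps I would start from the jump relations \eqref{eq:impact jumps}. At the open, $Y_0 - y = \lambda_{0-} J_0$ gives $J_0 = (Y_0 - y)/\lambda_{0-}$, so that
\[
  (Y_{0-} + Y_0) J_0 = (y + Y_0)\,\frac{Y_0 - y}{\lambda_{0-}} = \frac{Y_0^2 - y^2}{\lambda_{0-}},
\]
using $Y_{0-} = y$. The terminal identity is the same computation with $(\lambda_T, Y_{T-}, J_T)$ replacing $(\lambda_{0-}, y, J_0)$, now via $Y_T - Y_{T-} = \lambda_T J_T$. This disposes of \eqref{eq:costY jumps}.

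For the integral \eqref{eq:costY}, the key observation is that on the open interval $(0,T)$ trading is absolutely continuous, so $\lambda_{t-} = \lambda_t$ and \eqref{eq:impact} reduces to the ODE $\dot Y_t = -\beta_t Y_t + \lambda_t q_t$ a.e. Solving for $q_t$ and multiplying by $Y_t/\lambda_t$ yields the pointwise identity
\[
  Y_t q_t = \frac{1}{2\lambda_t}\,\frac{d}{dt}\big(Y_t^2\big) + \frac{\beta_t}{\lambda_t}\,Y_t^2 .
\]
Integrating over $(0,T)$ and integrating the first term by parts against the factor $1/(2\lambda_t)$ produces the boundary contribution $\tfrac{1}{2\lambda_T}Y_{T-}^2 - \tfrac{1}{2\lambda_0}Y_0^2$ together with the correction $-\int_0^T Y_t^2\,\tfrac{d}{dt}\!\big(1/(2\lambda_t)\big)\,dt$. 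Since $\gamma_t = \log\lambda_t$, one has $\tfrac{d}{dt}(1/\lambda_t) = -\dot\gamma_t/\lambda_t$, so this correction equals $\int_0^T \tfrac{\dot\gamma_t}{2\lambda_t} Y_t^2\,dt$; combining it with the $\beta$-integral gives exactly $\int_0^T \tfrac{2\beta_t + \dot\gamma_t}{2\lambda_t}Y_t^2\,dt$, which is \eqref{eq:costY}.

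The main obstacle is not algebraic but a matter of careful bookkeeping of boundary values and regularity. Because $q \in L^2(0,T)$, the process $Y$ restricted to $(0,T)$ is absolutely continuous (it is the solution of a linear ODE driven by $q$) and hence bounded, so $Y^2$ is absolutely continuous and the integration-by-parts step is legitimate; the factor $1/\lambda$ is likewise absolutely continuous since $\lambda$ is differentiable with bounded derivative and bounded away from zero. One must also be careful that the continuous integral sees the endpoint values $Y_{0+}=Y_0$ and $Y_{T-}$ rather than $Y_{0-}=y$ or $Y_T$: the opening and closing block trades are jumps of $Q$ that are deliberately excluded from $\int_0^T Y_t q_t\,dt$ and are instead accounted for by \eqref{eq:costY jumps}. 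Getting these endpoints right is exactly what makes the two displays combine correctly when they are substituted back in the proof of Proposition~\ref{pr:costReformulated}.
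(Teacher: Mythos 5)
Your proof is correct and follows essentially the same route as the paper's: both solve the impact ODE for $q_t$ on $(0,T)$, substitute into $\int_0^T Y_t q_t\,dt$, and integrate by parts against $1/\lambda_t = e^{-\gamma_t}$ (with the same care about the endpoint values $Y_{0+}=Y_0$, $Y_{T-}$), while the jump identities are handled by the identical difference-of-squares computation. No gaps.
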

\begin{proof}
      We first focus on $t\in (0,T)$. Note that $Y$ as defined in~\eqref{eq:impact} is continuous and has finite variation on $(0,T)$ when $Q$ is of the form \eqref{eq:Q main}. Using $\lambda_{t} = e^{\gamma_{t}}$ and $q_{t}\,dt = e^{-\gamma_{t}} (dY_{t} + \beta_{t} Y_{t} \,dt)$, we have
    \begin{align*}%
        \int_{0}^{T} Y_{t} q_{t} \, dt
        &= \int_{(0,T)}   e^{-\gamma_{t}} Y_{t} \, dY_{t} + \int_{0}^{T} e^{-\gamma_{t}} \beta_{t} Y^{2}_{t} \, dt.
    \end{align*}
    Applying integration-by-parts to $e^{-\gamma_{t}} Y^{2}_{t}$ over $t \in (0,T)$ yields that
    \begin{align*}%
        2\int_{(0,T)}   e^{-\gamma_{t}} Y_{t} \, dY_{t}
        &= e^{-\gamma_{T-}} Y^{2}_{T-} - e^{-\gamma_{0+}} Y^{2}_{0+} +  \int_{0}^{T} \dot\gamma_{t} e^{-\gamma_{t}} Y^{2}_{t} \, dt .
    \end{align*}
    Since $Y, \lambda$ are right-continuous at $t=0$ and $\lambda$ is also left-continuous at $t=T$, the claim~\eqref{eq:costY} follows by combining the displays. %
    The second claim, \eqref{eq:costY jumps}, follows by specializing
    \begin{align*}%
        (Y_{t} + Y_{t-})(Q_{t} - Q_{t-}) = \frac{1}{\lambda_{t-}}(Y_{t} + Y_{t-})(Y_{t} - Y_{t-})  = \frac{1}{\lambda_{t-}} (Y^{2}_{t} - Y^{2}_{t-})
    \end{align*}
    to $t\in\{0,T\}$ and recalling that $\lambda_{T-}=\lambda_{T}$.
\end{proof}

\begin{proof}[Proof of Proposition~\ref{pr:auxControl}]
    Omitting arguments for brevity, the generator of the controlled process is
\begin{align*}
    \cL^{q} v :&= (q + \theta z) \partial_{x} v + (-\beta y + \lambda q) \partial_{y} v - \theta z \, \partial_{z} v + \frac{\sigma^{2}}{2} \left(\partial_{xx} v - 2 \partial_{x z} v  + \partial_{zz} v \right) \\
    &= (\partial_{x} v + \lambda \partial_{y} v) q  -\beta \partial_{y} v \, y + \theta ( \partial_{x} v - \partial_{z} v)  z + \frac{\sigma^{2}}{2} \left(\partial_{xx} v - 2 \partial_{x z} v  + \partial_{zz} v \right).
\end{align*}
The running cost is 
$
    c(t, x, y, z, q) = \frac{2\beta_{t} + \dot\gamma_{t}}{2 \lambda_{t}} y^{2} + \frac{\eps_{t}}{2} q^{2},
$
so that the HJB equation $-\partial_{t} v - \inf_{q}( \cL^{q} v + c) = 0 $ reads
\begin{align} \label{eq:HJB}
    -\partial_{t} v - \inf_{q\in\R} & \left[ (\partial_{x} v + \lambda_{t} \partial_{y} v) q + \frac{\eps_{t}}{2} q^2 \right] + \beta_{t} \partial_{y} v \, y - \frac{2\beta_{t} + \dot\gamma_{t}}{2 \lambda_{t}} y^{2} \nonumber \\ %
    & - \theta_{t} ( \partial_{x} v - \partial_{z} v)  z - \frac{\sigma^{2}_{t}}{2} \left(\partial_{xx} v - 2 \partial_{x z} v  + \partial_{zz} v \right) = 0.
\end{align}
The candidate optimal control in feedback form is given by the minimizer in~\eqref{eq:HJB}, 
\begin{align*}
    q^{*}_{t} = - \eps^{-1}_{t} (\partial_{x} v + \lambda_{t} \partial_{y} v),
\end{align*}
which yields
\begin{align*}
    - \inf_{q\in\R} \left[ (\partial_{x} v + \lambda_{t} \partial_{y} v ) q + \frac{\eps_{t}}{2} q^2 \right] = \frac{1}{2\eps_{t}} ( \partial_{x} v + \lambda_{t} \partial_{y} v)^2.
\end{align*}
If $v\in C^{1,2}([0,T]\times\R^{3})$ is defined by~\eqref{eq:ansatz} with $A,\dots,F$ solving~\eqref{eq:riccati}, one verifies by direct calculation that~$v$ solves the HJB equation~\eqref{eq:HJB} as well as the terminal condition
\begin{align}\label{eq:terminalCondHJB}
        v(T, x, y, z) = \frac{1}{2}\left(\lambda_{T} x^2 - 2 x y +  \frac{y^2}{\lambda_{T}} \right)
\end{align}
corresponding to the terminal cost.
A standard verification argument then yields that $v$ is the value function and $q^{*}_{t}$ gives the optimal strategy.\footnote{To be precise, this requires~$q^{*}$ to be admissible. While $q^{*}$ is automatically square-integrable, we also need that $\int_0^t Q^{c}_s\,dS_s$ is a true martingale, where $Q^{c}_{t}:= \int_{0}^{t} q^{*}_{s} \, ds$. This is an integrability condition on~$S$, and is satisfied, e.g., if $dS_{t}=\nu_{t}\,dB_{t}$ where $B$ is a Brownian motion and $\nu$ is bounded.\label{fot:qAdmissible}} Uniqueness of the optimal control holds by strict convexity of~\eqref{valueFunDef2}.
\end{proof}

\begin{proof}[Proof of Proposition~\ref{pr:riccati}]
    We show that the Riccati system \eqref{eq:riccati} has a unique solution on $[0, T]$ and that $A_{t}, C_{t} \geq 0$ for $t \in [0, T]$.
   
   Consider first the (autonomous) ODE system for~$(A, B, C)$. It can be stated as the matrix Riccati equation
    \begin{align}
        \dot{\bP}_{t} = \bP_{t} \bB_{t} \bN_{t}^{-1} \bB_{t}' \bP_{t} - \bA_{t}' \bP_{t} - \bP_{t} \bA_{t} - \bC_{t}, \qquad t \in [0, T] \nonumber
    \end{align}
    with the terminal condition
    \begin{align*}
        \bP_{T} = \frac{1}{2} \begin{bmatrix} \lambda_{T} & - 1 \\ -1 & 1/\lambda_{T} \end{bmatrix},
    \end{align*}
    where 
    \begin{align*}
        \bP_{t} := \frac{1}{2} \begin{bmatrix} A_{t} & B_{t} \\ B_{t} & C_{t} \end{bmatrix}, \quad
        \bN_{t} := \frac{1}{2} \begin{bmatrix} \eps_{t} \end{bmatrix}
    \end{align*}
    \begin{align*}
        \bA_{t} := \begin{bmatrix} 0 & 0 \\ 0 & -\beta_{t} \end{bmatrix}, \quad
        \bB_{t} := \begin{bmatrix} 1 \\ \lambda_{t} \end{bmatrix}, \quad
        \bC_{t} := \begin{bmatrix} 0 & 0 \\ 0 & \frac{2\beta_{t} + \dot\gamma_{t}}{2\lambda_{t}} \end{bmatrix}.
    \end{align*}
    We note that $\bP_{T}$ is nonnegative definite. As the coefficients are bounded and $\eps_{t}$ was assumed to be bounded away from zero, \cite[Theorem 2.1(i)(ii)]{Wonham.68} shows that there exists a unique solution $\bP_{t}$ and that $\bP_{t}$ is nonnegative definite for all for $t\in [0,T]$. In particular, this yields $A_{t}\geq0$ and $C_{t}\geq0$.
    
   We can now consider $A, B, C$ as given bounded functions. The ODE system for~$(D, E, F, K)$ in \eqref{eq:riccati} is then linear with bounded coefficients, hence has a unique solution.
\end{proof}

\begin{proof}[Proof of Proposition~\ref{pr:riccatiAdditionalProperties}]  
(i) Nonnegative definiteness was established in the proof of Proposition~\ref{pr:riccati}.
We show that $C_{t} \leq \lambda^{-1}_{t}$ and $B_{t} + \lambda_{t} C_{t} > 0$ for $t \in [0,T)$. 
   
   Recall that the solution to a linear, first-order ODE $\dot y_{t} = p_{t} y_{t} - \ell_{t}$ with terminal condition $y_{T}$ is given by
    \begin{align} \label{eq:linearODE}
        y_{t} = e^{-\int_{t}^{T} p_{s} \, ds} y_{T} + \int_{t}^{T} e^{-\int_{t}^{s} p_{u} \, du} \ell_{s} \, ds.
    \end{align}
   In particular, $y_{t} \geq 0$ for $t \in [0, T)$ if $y_{T} \geq 0$ and $\ell_{t} \geq 0$ for $t \in [0, T)$, and $y_{t} > 0$ for $t \in [0, T)$ if additionally either $y_{T} > 0$ or $\ell_{t} > 0$ for $t$ in a non-degenerate interval of $[0, T)$. 

    We first prove that $C^{\dag}_{t} := -C_{t} + \lambda^{-1}_{t} \geq 0$ for $t \in [0, T]$. Indeed,
    \begin{align*}
        \dot C^{\dag}_{t} &= -\dot C_{t} - \lambda^{-1}_{t} \dot \gamma_{t} 
        = 2 \beta_{t} C^{\dag}_{t} - \eps^{-1}_{t} (B_{t} + \lambda_{t} C_{t})^{2}, \quad C^{\dag}_{T} = 0. %
    \end{align*}
    Seeing this as a special case of~\eqref{eq:linearODE} with $\ell_{t} := \eps^{-1}_{t} (B_{t} + \lambda_{t} C_{t})^{2} \geq 0$, we deduce $C^{\dag}_{t}\geq0$ for $t \in [0, T]$. 

    Lastly, define $M_{t}:= A_{t} + \lambda_{t} B_{t}$ and $N_{t} := B_{t} + \lambda_{t} C_{t} = B_{t} - \lambda_{t} C^{\dag}_{t} + 1$. Then
    \begin{align*}
        \dot N_{t} &= \dot B_{t} - \lambda_{t} (\dot C^{\dag}_{t} + \dot\gamma_{t} C^{\dag}_{t}) \\
        &= \eps^{-1}_{t} M_{t} N_{t} + \beta_{t} B_{t} + \lambda_{t} \eps^{-1}_{t}  N^{2}_{t} - 2 \lambda_{t} \beta_{t} C^{\dag}_{t} - \lambda_{t} \dot\gamma_{t} C^{\dag}_{t} \\
        &= (\eps^{-1}_{t} M_{t} + \lambda_{t} \eps^{-1}_{t}  N_{t} + \beta_{t} ) N_{t} -\lambda_{t} [(2\beta_{t}  +  \dot\gamma_{t})C^{\dag}_{t} + \beta_{t} C_{t}], \quad N_{T} = 0
    \end{align*}
    where we used $B_{t} = N_{t} - \lambda_{t} C_{t}$ from the second to the third line. Recall that $C_{t} \geq 0, \beta_{t} > 0$ and $2\beta_{t} + \dot \gamma_{t} > 0$.
    As $C_{T} = \lambda^{-1}_{T} > 0$, for $t$ close to $T$ we have $C_{t} > 0$ and thus $\ell_{t} := (2\beta_{t}  +  \dot\gamma_{t})C^{\dag}_{t} + \beta_{t} C_{t} > 0$.
    Using again \eqref{eq:linearODE}, we deduce $N_{t}>0$ for $t\in[0,T)$.
  
    (ii) Let $B^{\dag}_{t} := - B_{t}$, $M_{t}:= A_{t} + \lambda_{t} B_{t}$ and $N_{t} := B_{t} + \lambda_{t} C_{t}$.
    To show that $B_{t} < 0$ and $M_{t} > 0$ for $t \in [0, T)$, we first show that $(B^{\dag}_{t}, M_{t}) \in [0,\infty)^{2}$. Indeed, we have the system
    \begin{equation}
    \begin{cases}
        \dot B^{\dag}_{t} = \beta_{t} B^{\dag}_{t} - \eps^{-1}_{t} M_{t} N_{t}, & \quad B^{\dag}_{T} = 1 \\
        \dot M_{t} = \eps^{-1}_{t} ( M_{t} + \lambda_{t} N_{t} ) M_{t}  - \lambda_{t} (\beta_{t} + \dot\gamma_{t}) B^{\dag}_{t}, & \quad M_{T} = 0,
    \end{cases}
    \end{equation}
    which is of the general form $(\dot B^{\dag}_{t},\dot M_{t}) = \Phi(t, B^{\dag}_{t}, M_{t})$. Here the vector field  $-\Phi$ is inward-pointing on the boundary of the first quadrant $[0,\infty)^{2}$; that is, $-\Phi_{1} (t, b,m) \in [0,\infty)$ whenever $b=0,m\geq0$, and $-\Phi_{2} (t, b,m) \in [0,\infty)$ whenever $b\geq0,m=0$. 
    In view of the terminal condition $(B^{\dag}_{T},M_{T}) = (1,0) \in [0,\infty)^{2}$, it follows that $(B^{\dag}_{t},M_{t}) \in [0,\infty)^{2}$ for all $t\in[0,T]$. We can now apply~\eqref{eq:linearODE} together with $\beta_{t} + \dot \gamma_{t} > 0$ and $B^{\dag}_{T} = 1$ to deduce $(B^{\dag}_{t},M_{t}) \in (0,\infty)^{2}$ for all $t\in[0,T)$.

    Next, we set $V^{\dag}_{t} := M_{t} - (D_{t} + \lambda_{t} E_{t})$, $W^{\dag}_{t} := -B_{t} + E_{t}$ and consider
    \begin{align} \label{eq:ODE-VW}
        \begin{cases}
        \dot V^{\dag}_{t} = (\eps^{-1}_{t} M_{t} + \lambda_{t} \eps^{-1}_{t} N_{t} + \theta_{t}) V^{\dag}_{t} 
        - \lambda_{t} (\beta_{t} + \dot\gamma_{t}) W^{\dag}_{t}, & V^{\dag}_{T} = 0\\
        \dot W^{\dag}_{t} = (\theta_{t} + \beta_{t}) W^{\dag}_{t} - \eps^{-1}N_{t}V^{\dag}_{t}, & W^{\dag}_{T} = 1.
        \end{cases} 
    \end{align}
    Writing this as $(\dot V^{\dag}_{t},\dot W^{\dag}_{t})=\Phi(t, V^{\dag}_{t}, W^{\dag}_{t})$, we see that $-\Phi$ is inward-pointing on the boundary of the first quadrant. Repeating the argument above, we deduce that $(V^{\dag}_{t}, W^{\dag}_{t}) \in (0, \infty)^2$ for all $t \in [0, T)$. 
    
   (iii) Lastly, we establish the comparison result regarding $\theta$. Let $\tilde \theta:[0,T]\to\R$ is measurable and bounded, and define $\tilde D,\tilde E$ like $D,E$ but with $\tilde\theta$ instead of $\theta$. 
   Suppose that $\theta_{t} \geq \tilde \theta_{t}$ for $t \in [0, T]$, and
   define $(V^{\dag}_{t}, W^{\dag}_{t})$, $(\tilde V^{\dag}_{t}, \tilde W^{\dag}_{t})$ and $\Phi, \tilde \Phi$ as in \eqref{eq:ODE-VW}, where $(V^{\dag}_{T}, W^{\dag}_{T}) = (\tilde V^{\dag}_{T}, \tilde W^{\dag}_{T}) = (0,1)$.
   Check that
    \begin{align*}
        - \Phi_{1}(t, v, w) &\leq - \tilde \Phi_{1} (t, \tilde v, \tilde w), \quad \mbox{for} \quad 0 \leq v = \tilde v, \quad 0 \leq w \leq \tilde w \\
        - \Phi_{2}(t, v, w) &\leq - \tilde \Phi_{2} (t, \tilde v, \tilde w), \quad \mbox{for} \quad 0 \leq v \leq \tilde v, \quad 0 \leq  w = \tilde w.
    \end{align*}
    It follows that $V^{\dag}_{t} \leq \tilde V^{\dag}_{t}$ and $W^{\dag}_{t} \leq \tilde W^{\dag}_{t}$ for $t \in [0,T]$, or equivalently, 
    \begin{align*}
    D_{t} + \lambda_{t} E_{t} \geq \tilde D_{t} + \lambda_{t} \tilde E_{t} \qandq E_{t} \leq \tilde E_{t}, \quad t \in [0,T],
    \end{align*}
    as $M_{t}$ and $B_{t}$ do not depend on $\theta_{t}$ or $\tilde \theta_{t}$. In particular, we note that $\tilde D_{t}, \tilde E_{t} \equiv 0$ when $\tilde \theta_{t} \equiv 0$.
\end{proof}

\begin{proof}[Proof of Proposition~\ref{prop:optJ0}]
    We first note that the cost \eqref{eq:optCostJ0} is convex in $j_{0}$. Indeed, it follows from Proposition~\ref{pr:riccatiAdditionalProperties} that $v(t, x, y, z)$ is convex in $x$ and $y$, and thus $v(0, j_{0} - z, y + \lambda_{0-} j_{0}, z)$ is convex in $j_{0}$. The second term in \eqref{eq:optCostJ0} is also convex in $j_{0}$ as $0 < \lambda_{0-} \leq \lambda_{0}$.
    
    Since \eqref{eq:optCostJ0} is quadratic in $j_{0}$,  we can compute $\partial_{j_{0}} V(j_{0})$ where
    \begin{align} \label{eq:optJ0 step1}
        V(j_{0}):=v(0, j_{0} - z, y + \lambda_{0-} j_{0}, z) + \frac{1}{2} \left( \lambda^{-1}_{0-} - \lambda^{-1}_{0} \right) (y + \lambda_{0-} j_{0})^{2}.
    \end{align}
    Recall that
    \begin{align*}
        \partial_{x} v(0, x, y, z) = A_{0} x + B_{0} y + D_{0} z, \quad \partial_{y} v(0,x,y,z) =  B_{0} x + C_{0} y + E_{0} z,
    \end{align*}
    and
    \begin{align*}
        -\frac{f_{0-}}{\eps_{0}} = A_{0} + \lambda_{0-} B_{0}, \quad -\frac{g_{0-}}{\eps_{0}} = B_{0} + \lambda_{0-} C_{0}, \quad -\frac{h_{0-}}{\eps_{0}} = D_{0} + \lambda_{0-} E_{0}.
    \end{align*}
    Define $\eta_{0-} := -\eps^{-1}_{0} (1 - {\lambda_{0-}}/{\lambda_{0}}) \leq 0$. It is straightforward to find
    \begin{align*}
        \eps^{-1}_{0} \partial_{j_{0}} v(0, j_{0} - z, y + \lambda_{0-} j_{0}, z) 
        =&  -(f_{0-} + \lambda_{0-} g_{0-}) j_{0} - g_{0-} y + (f_{0-} - h_{0-}) z 
    \end{align*}
    and
    \begin{align}
        \eps_{0}^{-1}\partial_{j_{0}} V(j_{0}) = r j_{0} -  \left( g_{0-} + \eta_{0-} \right) y + (f_{0-} - h_{0-}) z \label{eq:optJ0again}
    \end{align}
    where $r:= -f_{0-} - \lambda_{0-} (g_{0-} + \eta_{0-})$.
    As $j_{0} \mapsto v(0, j_{0} - z, y + \lambda_{0-} j_{0}, z)$ is convex and quadratic, we must have $-f_{0-} - \lambda_{0-} g_{0-} \geq 0$.
    We show below that the inequality is strict, and thus $r > 0$ as $\eta_{0-} \leq 0$. It then follows that $V$ is not affine, hence strictly convex, and the desired formula for the optimal $j_{0}$ follows by setting \eqref{eq:optJ0again} to zero and solving for $j_{0}$.
    
    Suppose towards a contradiction that $-f_{0-} - \lambda_{0-} g_{0-}=0$. Recall that $f_{0-}$ and $g_{0-}$ do not depend on $y,z,\theta$, and consider the particular case where $y,z=0$ and $\theta,\sigma\equiv0$. The above calculation shows  $\partial_{j_{0}}v(0, j_{0},\lambda_{0-} j_{0}, 0)=0$ and in particular $v(0, j_{0},\lambda_{0-} j_{0}, 0)=v(0, 0,0, 0)=0$. In words, the round-trip strategy starting with an initial jump $j_{0}\neq0$ has zero cost, contradicting Corollary~\ref{co:noFreeRoundtrips}.

    Next, we show $g_{0-} + \eta_{0-} < 0$. Indeed, 
    \begin{align*}
        -\eps_{0} (g_{0-} + \eta_{0-}) &= B_{0} + \lambda_{0-} C_{0} + (1 - \lambda_{0-} / \lambda_{0})  \\
        &= B_{0} + \lambda_{0} C_{0} + (1 - \lambda_{0-} / \lambda_{0})(1-  \lambda_{0} C_{0}) > 0,
    \end{align*}
    where we recall from Proposition~\ref{pr:riccatiAdditionalProperties} that $1 - \lambda_{0} C_{0} \geq 0$ and $B_{0} + \lambda_{0} C_{0} > 0$.

    Lastly, suppose that $\beta_{t} + \dot \gamma_{t} > 0$ for all $t \in [0, T]$. Then
    \begin{align*}
        -\eps_{0}(f_{0-} - h_{0-}) &= (A_{0} + \lambda_{0-} B_{0}) - (D_{0} + \lambda_{0-} E_{0}) \\
        &= (A_{0} + \lambda_{0} B_{0}) - (D_{0} + \lambda_{0} E_{0}) - (\lambda_{0} - \lambda_{0-})(B_{0} - E_{0}) > 0,
    \end{align*}
    where we again recall from Proposition~\ref{pr:riccatiAdditionalProperties} that $A_{0} + \lambda_{0} B_{0} > D_{0} + \lambda_{0} E_{0}$ and $B_{0} < E_{0}$. 
\end{proof}

\begin{proof}[Sketch of proof for Proposition~\ref{pr:genMart}]
Given $q\in\cA$, introduce a process $V^{q}$ by $V^{q}_{t}=v(t,X,Y,Z_{t})$, where $v$ is defined in~\eqref{eq:ansatz} and $X,Y$ are the controlled states corresponding to~$q$. Note that $V^{q}_{0}=v(0,x,y,z)$ is deterministic and independent of $q$; it is the quantity~$V_{0}$ in the statement. Consider also the process
    $$
      \hat{V}^{q}_{t}:= \frac{1}{2} \int_0^t \left[ \frac{2\beta_{s} + \gamma_{s}'}{\lambda_{s}} (Y^{q})^{2}_{s} + \eps_{s} q^{2}_{s} \right] ds + V_{t}^{q}.
    $$
    In view of~\eqref{eq:terminalCondHJB}, the expected cost of $q\in\cA$ in~\eqref{valueFunDef} is given by $E [\hat{V}_{T}^{q}]$. A (lengthy) application of Ito's formula shows that the process $\hat{V}^{q}$ is a local submartingale for any $q\in\cA$, and a local martingale for $q=q^{*}$. This calculation uses the fact that~\eqref{eq:KgenMart} is equivalent to the generalized linear backward SDE
    \begin{equation*}
      dK_{t} = - \frac12 (A_{t} - 2 D_{t} + F_{t}) \,d[M,M]_{t} + dN_{t}, \qquad K_{T} = 0,
    \end{equation*}
    where $N$ is a martingale (namely, $N_{t}=\E[\int_{0}^{T}  \frac12 (A_{s} - 2 D_{s} + F_{s}) \,d[M,M]_{s}|\cF_{t}]$). The definition of~$\cA$ and an integrability condition on $M$ are chosen such that $q^{*}\in\cA$ and such that the local martingale terms are true martingales for any $q\in\cA$. Then, the (sub)martingale properties yield
    $$
      \E[\hat{V}_{T}^{q}]\geq \hat{V}_{0}^{q} = V_{0} = \hat{V}_{0}^{q^{*}}=\E [\hat{V}_{T}^{q^{*}}].
    $$
    This shows that $q^{*}$ is optimal and that $V_{0}$ is the value. Uniqueness of the optimal control is clear from the strict convexity of the cost.
\end{proof} 

\section{Proofs for Section~\ref{se:cstLiq}: Maximum Principle}\label{se:maxPrinciple}

Throughout this section, $\beta,\lambda,\eps\in (0,\infty)$ are constant; in particular, $\lambda_{0-} = \lambda$. 
The proof of Theorem~\ref{th:constParam} is based on the stochastic maximum principle. Consider the control problem~\eqref{valueFunDef}, i.e.,
    \begin{align}\label{valueFunDef2}
     \inf_{q \in \cQ} \frac{1}{2} \E \left\{  \int_t^T \left[ \frac{2\beta}{\lambda} (Y^{c}_{s})^{2} + \eps  q^{2}_{s} \right] ds 
     +\frac{1}{\lambda}(\lambda X_{T} - Y^{c}_{T})^{2} \right\}
\end{align}
with the state processes $(X_{s}, Y^{c}_{s}, Z_{s})_{s\in[t,T]}$ with initial time $t \in [0, T)$ and position $(x,y,z)$,
\begin{align} \label{eq:forward2}
    \begin{cases}
        dX_{s} = q_{s} \, ds - dZ_{s}, &\quad X_{t} = x \\
        dY^{c}_{s} = (-\beta Y^{c}_{s} + \lambda q_{s}) \, ds, &\quad Y^{c}_{t} = y \\
        dZ_{s} = -\theta_{s} Z_{s} \, ds + \sigma_{s} \, dW_{s}, &\quad Z_{t} = z
    \end{cases}
\end{align}
(For brevity, we write $Y$ instead of $Y^{c}$ throughout this section.)
We emphasize that $(t,x,y,z)$ should be seen as fixed parameters in the subsequent arguments; they are introduced in order to recover the full form of the feedback strategy in Theorem~\ref{th:main}. 

The maximum principle (e.g., \cite[Theorem~6.4.6]{Pham.09}) yields a BSDE system on $[t,T]$,
\begin{align} \label{eq:BSDE}
\begin{cases}
    -dU^{X}_{s} = - V^{X}_{s} \, dW_{s}, &\quad U^{X}_{T} = \lambda X_{T} - Y_{T} \\
    -dU^{Y}_{s} = (-\beta U^{Y}_{s} + \frac{2\beta}{\lambda} Y_{s})\, ds - V^{Y}_{s} \, dW_{s}, &\quad U^{Y}_{T} = - X_{T} + \lambda^{-1} Y_{T} \\
    -dU^{Z}_{s} = \theta_{s} (U^{X}_{s} - U^{Z}_{s})\, ds - V^{Z}_{s} \, dW_{s}, &\quad U^{Z}_{T} = 0,
\end{cases}
\end{align}
with optimality condition
\begin{align} \label{eq:max principle}
    U^{X}_{s} + \lambda U^{Y}_{s} + \eps q^{*}_{s} = 0, \quad s\in[t,T]
\end{align}
along the optimal state processes. Here $(q^{*}_{s})_{s\in[t,T]}$ denotes the optimal strategy in open-loop form (i.e., as a function of $\omega\in\Omega$).
 In particular, \eqref{eq:max principle} and the terminal conditions in~\eqref{eq:BSDE}  together yield that
\begin{align} \label{eq:opt T}
    q^{*}_{T} = 0.
\end{align}
We note that the maximum principle applies rigorously in our context, as we have already established that the value function is smooth (e.g., \cite[Theorem~6.4.7]{Pham.09}).

Recall from Proposition~\ref{pr:properties} that the desired functions $f_{t}, g_{t}, h_{t}$ do not depend on~$\sigma_{t}$. Thus, we may specialize to $\sigma\equiv0$ without loss of generality, rendering the in-flow process~$Z$ deterministic. The BSDE system~\eqref{eq:BSDE} then reduces to the backward ODE system
\begin{align} \label{eq:backward}
    \begin{cases}
        -\dot U^{X}_{s} = 0, &\quad U^{X}_{T} = \lambda X_{T} - Y_{T} \\
        -\dot U^{Y}_{s} = -\beta U^{Y}_{s} + \frac{2\beta}{\lambda} Y_{s}, &\quad U^{Y}_{T} = - X_{T} + \lambda^{-1} Y_{T} \\
        -\dot U^{Z}_{s} = \theta_{s} (U^{X}_{s} - U^{Z}_{s}), &\quad U^{Z}_{T} = 0.
    \end{cases}
\end{align}
(The above steps also apply in the case of time-varying coefficients, but we do not expect a closed-form solution in that situation.)
We recall that 
\begin{equation}\label{eq:tepsKappaDefn}
        \teps := \frac{\eps \beta}{2 \lambda}, \qquad \kappa := \beta \sqrt{1 + \teps^{-1}} > \beta,
\end{equation}
and note the relations
\begin{align} \label{eq:hint}
        \frac{1}{\kappa \pm \beta } = \frac{1}{\beta(\sqrt{1+\teps^{-1}} \pm 1)} = \frac{\teps}{\beta}\left( \sqrt{1+\teps^{-1}} \mp 1 \right) = \frac{\teps}{\beta}\left(\frac{\kappa}{\beta} \mp 1 \right)
\end{align}
which will be used repeatedly without further mention.

\begin{proposition} \label{prop:expexp}
    If $\beta,\lambda,\eps\in (0,\infty)$ are constant and $\sigma\equiv0$, the optimal trading speed (in open-loop form) for~\eqref{valueFunDef2}  is
    \begin{align} \label{eq:expexp}
        q^{*}_{s} := C^{1}_{t} e^{\kappa(T-s)} + C^{2}_{t}e^{-\kappa (T-s)} - (C^{1}_{t} + C^{2}_{t}), \quad s \in[t,T],
    \end{align}
    where $C^{1}_{t}$ and $C^{2}_{t}$ are 
    uniquely determined by the invertible linear system
    \begin{equation} \label{eq:lin sys}
        \bM_{t} \begin{bmatrix}
            C^{1}_{t} \\ C^{2}_{t}
        \end{bmatrix} = \begin{bmatrix}
            -\bar{x} \\ -\lambda^{-1} y
        \end{bmatrix}, \qquad
         \bM_{t} := \begin{bmatrix}
            a^{11}_{t} & a^{12}_{t} \\ a^{21}_{t} & a^{22}_{t}
        \end{bmatrix},
    \end{equation}
    where
    \begin{align*}
        \bar{x} &= x + (1-e^{-\int_{t}^{T} \theta_{u} \, du}) \, z \\
        a^{11}_{t} &=  \kappa^{-1}e^{\kappa (T-t)} -  (T-t) + [(\kappa+\beta)^{-1}-\beta^{-1} - \kappa^{-1}]  \\
        a^{12}_{t} &= -\kappa^{-1}e^{-\kappa (T-t)} - (T-t) + [-(\kappa-\beta)^{-1}-\beta^{-1} + \kappa^{-1}] \\
        a^{21}_{t} &= e^{\kappa  (T-t)} (\kappa - \beta)^{-1} + \beta^{-1}  \\
        a^{22}_{t} &= -e^{-\kappa  (T-t)}(\kappa +\beta)^{-1} + \beta^{-1}.
    \end{align*}
\end{proposition}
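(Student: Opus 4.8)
The plan is to apply the stochastic maximum principle, building on the reduction already recorded in~\eqref{eq:backward}. Since Proposition~\ref{pr:properties}\,(iv) guarantees that $f_t,g_t,h_t$ are independent of $\sigma$, I first specialize to $\sigma\equiv0$, so that $Z$ is deterministic and the BSDE system~\eqref{eq:BSDE} collapses to the backward ODE system~\eqref{eq:backward}. The first observation is that $U^X_s\equiv U^X$ is constant, equal to its terminal value $\lambda X_T-Y_T$, and that the optimality condition~\eqref{eq:max principle} reads $q^*_s=-\eps^{-1}(U^X+\lambda U^Y_s)$.

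The core of the argument is to turn this into a closed second-order ODE for $q^*$. Differentiating the optimality condition and substituting the $U^Y$-equation from~\eqref{eq:backward} yields the first-order relation $\eps\dot q^*_s=\beta\eps q^*_s+\beta U^X+2\beta Y_s$. Differentiating once more, inserting $\dot Y_s=-\beta Y_s+\lambda q^*_s$, and using the first-order relation to eliminate $Y_s$, the $\dot q^*$-terms cancel and I obtain $\ddot q^*_s=\kappa^2 q^*_s+\eps^{-1}\beta^2 U^X$, where I use the identity $\beta^2+2\beta\lambda\eps^{-1}=\beta^2(1+\teps^{-1})=\kappa^2$. The general solution is $C^1_t e^{\kappa(T-s)}+C^2_t e^{-\kappa(T-s)}$ plus the constant particular solution $-\eps^{-1}\beta^2 U^X/\kappa^2$; imposing the boundary condition $q^*_T=0$ from~\eqref{eq:opt T} forces this constant to equal $-(C^1_t+C^2_t)$, which is precisely the claimed form~\eqref{eq:expexp} and simultaneously pins down $U^X=\eps(1+\teps^{-1})(C^1_t+C^2_t)$.

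It then remains to identify $C^1_t,C^2_t$ from the initial data, producing the two rows of~\eqref{eq:lin sys}. The second row comes from evaluating the first-order relation at $s=t$, where $Y_t=y$; after substituting the expression for $U^X$ together with the explicit $q^*_t$ and $\dot q^*_t$, and converting via $(\kappa\pm\beta)^{-1}=\teps\beta^{-1}(\kappa\beta^{-1}\mp1)$, this collapses to $a^{21}_tC^1_t+a^{22}_tC^2_t=-\lambda^{-1}y$. The first row comes from the defining relation $U^X=\lambda X_T-Y_T$: here I integrate the state equation to get $X_T=x+z(1-e^{-\int_t^T\theta_u\,du})+\int_t^T q^*_u\,du=\bar x+\int_t^Tq^*_u\,du$ (this is where $\bar x$ enters), compute $\int_t^Tq^*_u\,du$ in closed form, and obtain $Y_T$ by evaluating the first-order relation at $s=T$ using $q^*_T=0$. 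Substituting $U^X=\eps(1+\teps^{-1})(C^1_t+C^2_t)$ and simplifying with the same $\kappa\pm\beta$ identities gives $a^{11}_tC^1_t+a^{12}_tC^2_t=-\bar x$.

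Finally, invertibility of $\bM_t$ follows without a determinant computation: Proposition~\ref{pr:auxControl} provides a unique optimal control for every initial datum $(x,y,z)$, the maximum principle is necessary since the value function is smooth, and the representation~\eqref{eq:expexp} recovers $(C^1_t,C^2_t)$ uniquely from $q^*$ because $e^{\kappa(T-s)}$ and $e^{-\kappa(T-s)}$ are linearly independent on $[t,T]$. As $(\bar x,\lambda^{-1}y)$ ranges over all of $\R^2$, the linear map $\bM_t$ must be a bijection. I expect the main obstacle to be the first row: assembling $U^X=\lambda X_T-Y_T$ requires both the closed-form integral of $q^*$ and the terminal value $Y_T$, so the bulk of the algebraic bookkeeping, and the repeated reliance on the $\teps,\kappa$ identities, is concentrated there.
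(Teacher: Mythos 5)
Your proof is correct, and its skeleton is the same as the paper's: specialize to $\sigma\equiv 0$, collapse the maximum-principle system \eqref{eq:BSDE} to the backward ODEs \eqref{eq:backward}, use the optimality condition \eqref{eq:max principle} to derive the second-order ODE $\ddot q^{*}_{s} = \kappa^{2} q^{*}_{s} + \eps^{-1}\beta^{2} U^{X}_{T}$, and impose $q^{*}_{T}=0$ from \eqref{eq:opt T} to obtain the form \eqref{eq:expexp} together with $U^{X}_{T} = \eps(1+\teps^{-1})(C^{1}_{t}+C^{2}_{t})$, which coincides with the paper's \eqref{eq:cond1 v2}. You deviate in two places, both legitimately. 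First, to obtain the second row of \eqref{eq:lin sys}, the paper requires the consistency relation \eqref{eq:Y aux2} to hold for \emph{all} $s\in[t,T]$, which forces it to integrate the forward impact equation explicitly (formula \eqref{eq:cev X2 v2}) and match the coefficients of $e^{-\beta(s-t)}$, $e^{\pm\kappa(T-s)}$ and the constants; you instead evaluate the same relation only at $s=t$, where $Y_{t}=y$ is known. As a necessary condition along the optimal trajectory this is valid, and the algebra (using the identities \eqref{eq:hint}) indeed reproduces exactly \eqref{eq:lin1}; the payoff is that you can skip the paper's Step~2 computation of $Y_{s}$ entirely. Your first row is assembled exactly as in the paper, from $U^{X}_{T}=\lambda X_{T}-Y_{T}$ with $X_{T}=\bar x+\int_{t}^{T}q^{*}_{u}\,du$ and $Y_{T}$ taken from the relation at $s=T$. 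Second, for invertibility of $\bM_{t}$ you avoid the determinant: existence and uniqueness of the optimal control (Proposition~\ref{pr:auxControl}) plus necessity of the maximum principle show that every vector $(-\bar x,-\lambda^{-1}y)$ lies in the range of $\bM_{t}$, so the square matrix is surjective, hence bijective. This is clean and complete for the proposition as stated, but note what the paper's explicit computation buys: the positive determinant $d_{t}=\det(\bM_{t})$, with its terms grouped and shown positive, is precisely the denominator in the closed-form coefficients of Theorem~\ref{th:constParam} (via \eqref{eq:C1}--\eqref{eq:C2}). Your route therefore defers, rather than avoids, that algebra if the closed-form solution is the ultimate goal.
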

    
This result will be obtained by solving the forward-backward system consisting of the (forward) state dynamics~\eqref{eq:forward2} and the backward system~\eqref{eq:backward}, which are coupled through the optimality condition~\eqref{eq:max principle}.  

\begin{proof}
    \emph{Step 1:} We first derive a second-order, linear, non-homogeneous ODE for $q^{*}$.
    From the first equation in~\eqref{eq:backward}, $U^{X} \equiv U^{X}_{T}$ is constant, and hence the optimality condition~\eqref{eq:max principle} yields
\begin{equation*}
    U^{Y}_{s} = - \frac{\eps}{\lambda} q^{*}_{s} - \frac{1}{\lambda} U^{X}_{T}.
\end{equation*}
The second equation in~\eqref{eq:backward} then reads
\begin{equation*}
    \frac{\eps}{\lambda} \dot q^{*}_{s} = - \dot U^{Y}_{s} = \beta\left( \frac{\eps}{\lambda}  q^{*}_{s} + \frac{1}{\lambda} U^{X}_{T} \right) + \frac{2\beta}{\lambda} Y_{s},
\end{equation*}
or, equivalently,
\begin{equation} \label{eq:Y aux}
    Y_{s} = \frac{1}{2}\left( \frac{\eps}{\beta} \dot q^{*}_{s} - \eps q^{*}_{s} - U^{X}_{T} \right).
\end{equation}
The second equation in~\eqref{eq:forward2} now implies that
\begin{equation*}
   \frac{1}{2}\left( \frac{\eps}{\beta}  \ddot q^{*}_{s} - \eps \dot q^{*}_{s} \right) = - \frac{\beta}{2} \left(\frac{\eps}{\beta} \dot q^{*}_{s} - \eps q^{*}_{s} - U^{X}_{T} \right) + \lambda q^{*}_s
\end{equation*}
or 
\begin{equation} \label{eq:ODE v2}
    \ddot q^{*}_{s} - \beta^{2} \left( 1 + \frac{2\lambda}{\eps\beta} \right) q^{*}_{s} - \frac{\beta^{2}}{\eps} U^{X}_{T} = 0.
\end{equation}
We also have the terminal condition $q^{*}_T = 0$ from \eqref{eq:opt T}. With $\teps$ and $\kappa$ as defined in~\eqref{eq:tepsKappaDefn}, the solution to~\eqref{eq:ODE v2} can be written as
\begin{align} 
    q^{*}_s &= C^{1}_{t} e^{\kappa (T-s)} + C^{2}_{t} e^{-\kappa (T-s)} - \frac{\beta}{2\lambda (1+ \teps)} U^{X}_{T} \nonumber \\
   &= C^{1}_{t} e^{\kappa (T-s)} + C^{2}_{t} e^{-\kappa (T-s)} - (C^{1}_{t} + C^{2}_{t}) \label{eq:soln2 v2};
\end{align}
here 
\begin{align} \label{eq:cond1 v2}
    U^{X}_{T} = \frac{2\lambda}{\beta}(1+ \teps) (C^{1}_{t} + C^{2}_{t})
\end{align}
due to the terminal condition $q^{*}_T = 0$ and $C^{1}_{t}, C^{2}_{t} \in \R$ will be determined in Step~3 below.

\emph{Step 2:} We calculate the state processes $X, Y, Z$ in \eqref{eq:forward2} at $s \in [t, T]$. Note that
    \begin{align}
        Z_s = z e^{-\int_{t}^{s} \theta_{u} \, du}.
    \end{align}
    Then
    \begin{align}
        X_{s} &= x + \int_{t}^{s} q_{u} \, du - (Z_{s} - z) \nonumber \\
        &= x + C^{1}_{t} \left\{ \kappa^{-1}[e^{\kappa(T-t)} - e^{\kappa (T-s)} ] - (s-t) \right\} \nonumber \\
            & \qquad + C^{2}_{t} \left\{ \kappa^{-1}[e^{-\kappa (T-s)} - e^{-\kappa(T-t)}] - (s-t) \right\} \nonumber \\
            & \qquad + (1 - e^{-\int_{t}^{s} \theta_{u} \, du}) \, z \label{eq:Xs}
    \end{align}
    and
    \begin{align}
         Y_{s} &= e^{-\beta (s-t)} y + \lambda e^{-\beta s} \int_{t}^{s} e^{\beta u} q_u \, du \nonumber \\
         &= e^{-\beta (s-t)} y \nonumber \\
         &+ \lambda C^{1}_{t} \left\{ (\kappa-\beta)^{-1} [ e^{\kappa(T-t) - \beta (s - t)} - e^{\kappa (T-s)}] - \beta^{-1}[1 - e^{-\beta (s - t)}] \right\} \nonumber \\
         &+ \lambda C^{2}_{t} \left\{ (\kappa+\beta)^{-1}[e^{-\kappa (T-s)} - e^{-\kappa(T-t) - \beta (s - t)}] - \beta^{-1}[ 1 - e^{-\beta (s - t)}] \right\}. \label{eq:cev X2 v2}
    \end{align}
In particular, the state variables at~$T$ are
\begin{align}
    X_{T} &= x + C^{1}_{t} \left\{ \kappa^{-1}[ e^{\kappa (T-t)} - 1] -  (T-t) \right\} \nonumber \\
        & + C^{2}_{t} \left\{ \kappa^{-1}[1 - e^{-\kappa (T-t)}] - (T-t) \right\} +  (1 - e^{-\int_{t}^{T} \theta_{u} \, du}) \, z \label{eq:X1_T v2}
\end{align}
and
\begin{align}
    Y_{T} &= e^{-\beta (T-t)} y + \lambda C^{1}_{t} \left\{ (\kappa - \beta)^{-1} [e^{(\kappa-\beta)(T-t)} -1 ] -  \beta^{-1} [1-e^{-\beta(T-t)}]\right\} \nonumber \\
    & \qquad + \lambda C^{2}_{t} \left\{ (\kappa + \beta)^{-1} [1-e^{-(\kappa+\beta)(T-t)}] -  \beta^{-1} [1-e^{-\beta(T-t)}] \right\} \label{eq:X2_T v2}.
\end{align}

\emph{Step 3:} We use the expressions from Step 2 to determine $C^{1}_{t}$ and $C^{2}_{t}$. The derivation of \eqref{eq:Y aux} from the maximal principle \eqref{eq:max principle} shows that the forward-backward system is solved only if $Y$ satisfies 
\begin{align} \label{eq:Y aux2}
    Y_{s} = \frac{\teps \lambda}{\beta^{2}} \dot q^{*}_{s} - \frac{\teps \lambda}{\beta} q^{*}_{s} - \frac{1}{2} U^{X}_{T}, \quad s \in [t, T].
\end{align}
Using the expressions of $q^{*}_{s}$ in \eqref{eq:soln2 v2}, $U^{X}_{T}$ in \eqref{eq:cond1 v2} and $Y_{s}$ in \eqref{eq:cev X2 v2}, we claim that \eqref{eq:Y aux2} holds if and only if
\begin{align} \label{eq:lin1}
    & C^{1}_{t} \left[ e^{\kappa (T-t)} (\kappa - \beta)^{-1} + \beta^{-1} \right]  + C^{2}_{t} \left[ -e^{-\kappa (T-t)}(\kappa +\beta)^{-1} + \beta^{-1} \right] = -\lambda^{-1} y.
\end{align}
To wit, we can group the terms in \eqref{eq:Y aux2} by $e^{-\beta (s-t)}, e^{\kappa (T-s)}, e^{-\kappa (T-s)}$ and the remaining constants. The coefficient of $e^{-\beta (s-t)}$ being zero gives rise to \eqref{eq:lin1}; the other coefficients can be fully canceled thanks to \eqref{eq:hint}, so they do not yield further conditions.

On the other hand, the first terminal condition in \eqref{eq:backward} reads
\begin{equation} \label{eq:cond2 v2}
    -U^{X}_{T} + \lambda X_{T} - Y_{T} = 0.
\end{equation}
Note that \eqref{eq:Y aux2} at $T$ reduces to
\begin{equation} \label{eq:X2_T sim}
    Y_{T} = \frac{\lambda \teps \kappa}{\beta^2}(-C^{1}_{t} + C^{2}_{t}) - \frac{1}{2} U^{X}_{T},
\end{equation}
since $q^{*}_T = 0$.
Plugging \eqref{eq:cond1 v2}, \eqref{eq:X1_T v2}, \eqref{eq:X2_T sim} into \eqref{eq:cond2 v2}, we arrive at
\begin{align}
      & C^{1}_{t} \left[ (\kappa+\beta)^{-1}-\beta^{-1}+ \kappa^{-1}[ e^{\kappa (T-t)} - 1] - (T-t) \right] \nonumber \\
      + & C^{2}_{t} \left[ -(\kappa-\beta)^{-1}-\beta^{-1}+ \kappa^{-1}[1 - e^{-\kappa (T-t)}] - (T-t) \right] \nonumber \\
      & \qquad = -x - (1 - e^{-\int_{t}^{T} \theta_{u} \, du}) \, z \label{eq:lin2}
\end{align}
again with the help of \eqref{eq:hint}. 

The equations \eqref{eq:lin1} and \eqref{eq:lin2} forms the linear system \eqref{eq:lin sys}.
To calculate its determinant, note that
\begin{align*}
\det(\bM_{t}) &= a^{11}_{t}a^{22}_{t} - a^{12}_{t}a^{21}_{t} \\
&= -a^{11}_{t}e^{-\kappa  (T-t)}(\kappa +\beta)^{-1} - a^{12}_{t}e^{\kappa  (T-t)} (\kappa - \beta)^{-1} + \beta^{-1}(a^{11}_{t}-a^{12}_{t}),
\end{align*}
where thanks to \eqref{eq:hint}, we have
\begin{align*}
    a^{11}_{t} - a^{12}_{t} &= \kappa^{-1}( e^{\kappa (T-t)} + e^{-\kappa (T-t)}) + [(\kappa+\beta)^{-1} + (\kappa-\beta)^{-1}  - 2 \kappa^{-1}] \\
    &= \kappa^{-1}( e^{\kappa (T-t)} + e^{-\kappa (T-t)}) + 2 \left( {\teps \kappa}{\beta^{-2}} - {\kappa}^{-1} \right).
\end{align*}
We group the terms in $\det(\bM_{t})$ by $e^{\kappa (T-t)}$, $e^{-\kappa (T-t)}$, $(T-t) e^{\kappa (T-t)}$, $(T-t) e^{-\kappa (T-t)}$ and the constant terms. All coefficients are strictly greater than zero when $\teps, \beta, \lambda > 0$, and hence $\det(\bM_{t}) > 0$:
\begin{align*}
    \det(\bM_{t}) &= e^{\kappa (T-t)} \left\{ (\kappa - \beta)^{-1}[(\kappa-\beta)^{-1} + \beta^{-1} - \kappa^{-1}] + \beta^{-1} \kappa^{-1} \right\} \nonumber \\
    &+ e^{-\kappa (T-t)} \left\{ (\kappa +\beta)^{-1}[-(\kappa+\beta)^{-1} + \beta^{-1} + \kappa^{-1}] + \beta^{-1} \kappa^{-1} \right\} \nonumber \\
    &+ (T-t) e^{\kappa (T-t)} (\kappa - \beta)^{-1}
    + (T-t) e^{-\kappa (T-t)} (\kappa +\beta)^{-1}
    + 4 \teps \beta^{-1} \kappa^{-1}.
\end{align*}
In particular, the constant term is simplified from
\begin{align*}
    \kappa^{-1} & [-(\kappa +\beta)^{-1} + (\kappa - \beta)^{-1}] + 2\beta^{-1} \left( {\teps \kappa}{\beta^{-2}} - {\kappa}^{-1} \right)
    = 4 \teps \beta^{-1} \kappa^{-1}
\end{align*}
thanks to \eqref{eq:hint} and $\teps \kappa^2 \beta^{-2} = 1 + \teps $.
\end{proof}

We can now derive Theorem~\ref{th:constParam}.

\begin{proof}[Proof of Theorem~\ref{th:constParam}]
As mentioned above, by Proposition~\ref{pr:properties}, we can assume without loss of generality that $\sigma \equiv 0$. We can then apply Proposition~\ref{prop:expexp}.
    Rewriting the expression \eqref{eq:soln2 v2} at the initial time~$s=t$ yields
\begin{align}
    q^{*}_{t} &= C^{1}_{t}(e^{\kappa (T-t)} - 1) - C^{2}_{t}(1- e^{-\kappa (T-t)}) \nonumber \\
    &= (e^{\kappa (T-t)} - 1)(C^{1}_{t} - e^{-\kappa (T-t)} C^{2}_{t}). \label{eq:feedback0}
\end{align}
    From \eqref{eq:lin sys}, we have
    \begin{align}
        C^{1}_{t} &= -( a^{22}_{t} \bar x - \lambda^{-1} a^{12}_{t} y ) / d_{t}, \label{eq:C1} \\
        C^{2}_{t} &=  ( a^{21}_{t} \bar x - \lambda^{-1} a^{11}_{t} y ) / d_{t}, \label{eq:C2}
    \end{align}
	where $d_{t} := \det(\bM_{t})$.
    Note that
    \begin{align}
    C^{1}_{t} - e^{-\kappa (T-t)} C^{2}_{t} =( - \tilde{f}_{t} \bar x  + \tilde{g}_{t} y) / d_{t}, \label{eq:feedback1}
    \end{align}
    where
    \begin{align*}
        \tilde{f}_{t} :&= -(a^{22}_{t} + e^{-\kappa (T-t)} a^{21}_{t}) \\
        &= -[\beta^{-1} - (\kappa +\beta)^{-1}] e^{-\kappa  (T-t)} - [\beta^{-1} + (\kappa - \beta)^{-1}] < 0
    \end{align*}
    and
    \begin{align*}
        \tilde{g}_{t}:&= \lambda^{-1} a^{12}_{t} + \lambda^{-1} e^{-\kappa (T-t)} a^{11}_{t} \\
        &=  -  \lambda^{-1} (1 + e^{-\kappa (T-t)}) (T-t)  -  \lambda^{-1} [\beta^{-1} + 2\kappa^{-1} - (\kappa+\beta)^{-1}] e^{-\kappa (T-t)} \\ 
        &\quad -  \lambda^{-1} [(\kappa-\beta)^{-1}+\beta^{-1} -2\kappa^{-1}] < 0.
    \end{align*}
    In particular, it can be verified in the last line  that the constant term is negative,
    $$
        - \frac{1}{\beta}\left[ \frac{1}{\sqrt{1+\teps^{-1}} -1} + 1 - \frac{2}{\sqrt{1+\teps^{-1}}} \right] < 0.
    $$
    We can further rewrite $\tilde{g}_{t}$ as
    $$
    \tilde{g}_{t} =  \lambda^{-1} \tilde{f}_{t} -  \lambda^{-1} (1 + e^{-\kappa (T-t)}) (T-t) + 2  \lambda^{-1} \kappa^{-1} (1 - e^{-\kappa (T-t)}).
    $$
    Plugging \eqref{eq:feedback1} into \eqref{eq:feedback0} yields
    \begin{align*}
      q^{*}_{t} 
      &= (e^{\kappa (T-t)} - 1)( - \tilde{f}_{t} \bar x  + \tilde{g}_{t} y) / d_{t},
    \end{align*}
    completing the proof.
\end{proof}

\bibliographystyle{abbrv}
\bibliography{cite}

\end{document}